\def\U{\mathcal{U}}
\def\O{\mathcal{O}}
\newcommand{\Oh}[1]{\mathcal{O}\!\left( #1\right)}
\newcommand{\Om}[1]{\mathrm{\Omega}\!\left(#1\right)}
\def\sgauss{\scalebox{0.8}{SGAUSS}\xspace}
\def\dsimp{d_{\mathrm{simp}}}
\def\E{\mathbb{E}}
\def\F{\mathbb{Z}}
\def\fpr{\ensuremath{ϕ}}
\def\refrel#1#2#3{\stackrel{\text{#2 \ref{#1}}}{#3}}
\def\myparagraph#1{\smallskip\noindent\textbf{#1}}
\newcommand{\onebit}{$1^+$-bit\xspace}
\newcommand{\onebits}{$1\kern-0.15em^+$\kern-0.25em-bit\xspace}
\newenvironment{substatement}{\begin{enumerate}[{\normalfont\bfseries(a)}]}{\end{enumerate}}
\newcommand{\frage}[1]{{{\color{blue}[\sf#1]}}}
\newcommand{\pesa}[1]{\frage{PS:#1}}
\newcommand{\lhs}[1]{\frage{\color{red}LH:#1}}
\newcommand{\stwa}[1]{\frage{\color{magenta}SW:#1}}
\newcommand{\nat}{\ensuremath{\mathbb{N}}}
\newcommand{\Def}{:=}
\newcommand{\set}[1]{\left\{ #1\right\}}
\newcommand{\ceil}[1]{\left\lceil #1\right\rceil}
\newcommand{\floor}[1]{\lfloor #1\rfloor}
\newcommand{\eps}{\varepsilon}
\crefname{algocf}{Algorithm}{Algorithms}
\newif\ifconfVersion
\newif\iftr
\newif\iffinal
\renewcommand{\frage}[1]{}
\title{Fast Succinct Retrieval and Approximate Membership using Ribbon}
\author{Peter C. Dillinger}{Facebook}{peterd@fb.com}{}{}
\author{Lorenz Hübschle-Schneider}{Karlsruhe Institute of Technology, Germany}{huebschle@4z2.de}{}{}
\author{Peter Sanders}{Karlsruhe Institute of Technology, Germany}{sanders@kit.edu}{}{}
\author{Stefan Walzer}{Cologne University}{walzer@cs.uni-koeln.de}{}{DFG grant WA 5025/1-1.}
\authorrunning{P. C. Dillinger, L. Hübschle-Schneider, P. Sanders, and S. Walzer}
\keywords{AMQ, Bloom filter, dictionary, linear algebra, randomized algorithm, retrieval data structure, static function data structure, succinct data structure, perfect hashing}
\begin{document}
\maketitle
\begin{abstract}
  A \emph{retrieval} data structure for a static function
  $f:S\rightarrow \{0,1\}^r$ supports queries that return $f(x)$ for
  any $x ∈ S$.
  Retrieval data structures  can be used to
  implement a static approximate membership query data structure (AMQ), i.e., a Bloom filter alternative, with false positive rate~$2^{-r}$.
  The information-theoretic lower bound for both tasks is
  $r|S|$ bits.  While \emph{succinct} theoretical constructions using
  $(1+o(1))r|S|$ bits were known, these could not achieve very small
  overheads in practice because they have an unfavorable space--time
  tradeoff hidden in the asymptotic costs or because small overheads
  would only be reached for physically impossible input sizes.  With
  \emph{bumped ribbon retrieval (BuRR)}, we present the first
  practical succinct retrieval data structure.  In an
  extensive experimental evaluation BuRR achieves space overheads
  well below 1\,\% while being faster than most previously used retrieval data
  structures (typically with space overheads at least an order of
  magnitude larger) and faster than classical Bloom filters (with
  space overhead $\geq 44\,\%$). This efficiency, including favorable
  constants, stems from a combination of simplicity, word parallelism,
  and high locality.

  We additionally describe \emph{homogeneous ribbon filter AMQs}, which
  are even simpler and faster at the price of slightly larger space
  overhead.
\end{abstract}

\clearpage


\def\svec#1{\smash{\vec{#1}}}
\section{Introduction}\label{s:intro}

A \emph{retrieval data structure} (sometimes called ``static function'') represents a function $f: S → \{0,1\}^r$ for a set $S \subseteq \U$ of $n$ keys from a universe $\U$ and $r \in \nat$. A query for $x \in S$ must return $f(x)$, but a query for $x \in \U \setminus S$ may return any value from $\{0,1\}^r$.

The information-theoretic lower bound for the space needed by such a data structure is $nr$ bits in the general case.%
\footnote{If $f$ has low entropy then \emph{compressed static functions} \cite{HKP:CompressedFunction:2009,BelazzouguiV13,GOV:retrieval-Compressed:2020} can do better and even machine learning techniques might help, see
e.g.\ \cite{VKKM:Learned:2020}.}
This significantly undercuts the $Ω((\log |\U|+r)n)$ bits%
\footnote{This lower bound holds when $|\U| = Ω(n^{1+δ})$ for $δ > 0$. The general bound is $\log \binom{|\U|}{n}+nr$ bits.}
needed by a dictionary, which must return “\textsf{None}” for $x ∈ \U \setminus S$. The intuition is that dictionaries have to store $f ⊆ \U × \{0,1\}^r$ as a set of key-value pairs while retrieval data structures, surprisingly, need not store the keys.
We say a retrieval data structure using $s$ bits has \emph{(space) overhead} $\frac{s}{nr}-1$. 

The starting point for our contribution is a \emph{compact} retrieval data structure from \cite{DW:One-Block-per-Row:2019}\pesa{for a journal paper we could take Martin on board and claim that we are also subsuming that paper?}, i.e.\ one with overhead $𝒪(1)$. After minor improvements, we first obtain \emph{standard ribbon retrieval}. All theoretical analysis assumes computation on a word RAM with word size $\Om{\log n}$ and that hash functions behave like random functions.\footnote{This is a standard assumption in many papers and can also be justified by standard constructions \cite{DR:Applications:2009}.} The \emph{ribbon width} $w$ is a parameter that also plays a role in following variants.
\begin{theorem}[{similar to \cite{DW:One-Block-per-Row:2019}}]
  \label{thm:standard-ribbon}
  For any $ε > 0$, an $r$-bit standard ribbon retrieval data structure with ribbon width $w = \frac{\log n}{ε}$ has construction time $𝒪(n/ε²)$, query time $𝒪(r/ε)$ and overhead $𝒪(ε)$.
\end{theorem}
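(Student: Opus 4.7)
The plan is to describe the \emph{standard ribbon retrieval} data structure and then verify the three quantitative bounds. For each key $x \in S$, a random hash provides a \emph{starting position} $s(x) \in \{0,\ldots,m-1\}$ and a \emph{coefficient pattern} $c(x) \in \{0,1\}^w$ with leading bit~$1$. Picking $m = \lceil n(1+\varepsilon)\rceil$ and $w = \lceil \log n/\varepsilon \rceil$, the data structure stores, for each output bit $j \in \{0,\ldots,r-1\}$, a bit-vector $Z^{(j)} \in \{0,1\}^{m+w-1}$ chosen so that the $\mathbb{F}_2$-inner product of $c(x)$ with the length-$w$ window of $Z^{(j)}$ starting at position $s(x)$ equals the $j$-th bit of $f(x)$. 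Existence of such $Z^{(j)}$ reduces to solvability of an $n \times (m+w-1)$ banded $\mathbb{F}_2$ linear system whose rows are the patterns $c(x)$ shifted to column $s(x)$.

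The overhead and query bounds then follow quickly. Total storage is $r(m+w-1) = rn(1+\mathcal{O}(\varepsilon)) + \mathcal{O}(rw)$, and since $w = \mathcal{O}(\log n/\varepsilon) = o(n\varepsilon)$ in the regime of interest, the overhead is $\mathcal{O}(\varepsilon)$. A query reads one length-$w$ window of each $Z^{(j)}$ and computes an $\mathbb{F}_2$-inner product with $c(x)$; on a word RAM with word size $\Omega(\log n)$, each window fetch and inner product costs $\mathcal{O}(w/\log n) = \mathcal{O}(1/\varepsilon)$ word operations, summing to $\mathcal{O}(r/\varepsilon)$.

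Construction is the main calculation. First bucket-sort the keys by $s(x)$ in $\mathcal{O}(n)$ time, then run band-restricted Gaussian elimination followed by back-substitution. Because every row is supported in a $w$-wide window, each elementary row operation is a single $w$-bit XOR at cost $\mathcal{O}(w/\log n) = \mathcal{O}(1/\varepsilon)$ word operations. Amortizing $\mathcal{O}(w/\log n) = \mathcal{O}(1/\varepsilon)$ such operations per key (a standard counting argument for banded $\mathbb{F}_2$ systems at bounded load) yields $\mathcal{O}(n/\varepsilon^2)$ total word operations for elimination; back-substitution is subsumed.

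The linchpin, and the anticipated hard step, is that the random banded matrix has full row rank with high probability. I would invoke the solvability threshold of \cite{DW:One-Block-per-Row:2019}: random $n \times (m+w-1)$ $\mathbb{F}_2$-matrices of ribbon width $w$ have full row rank with probability $1 - e^{-\Omega(w)} = 1 - n^{-\Omega(1/\varepsilon)}$ provided $n \leq m(1 - e^{-\Omega(w)})$. Our choice $m = \lceil n(1+\varepsilon)\rceil$ sits comfortably below this threshold, so rehashing on the rare failure costs only a constant factor in expectation. The threshold itself is the technical heart of \cite{DW:One-Block-per-Row:2019} and I would use it as a black box; the ``minor improvements'' alluded to by the theorem statement presumably refine only the constants in the load factor and in the per-key word-operation count.
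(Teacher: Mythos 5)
Your plan follows the paper's route: direct arithmetic for the overhead and query bounds, and reliance on \cite{DW:One-Block-per-Row:2019} for solvability and for the amortized row-operation count, the latter of which the paper re-derives via \cref{lem:standard-ribbon} and \cref{lem:time-bound-insertions}. Two points deserve attention.

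The solvability threshold you quote is substantially stronger than what \cite{DW:One-Block-per-Row:2019} proves. You claim full row rank with failure probability $e^{-\Omega(w)}$ whenever $n \le m(1-e^{-\Omega(w)})$; with $w = \log n/\varepsilon$ this would tolerate a relative slack $\frac{m-n}{m}$ as small as $n^{-\Omega(1/\varepsilon)}$. If that were true, standard ribbon would already be succinct and the rest of the paper (bumping, overloading, BuRR) would be pointless. The real condition, reproduced as \cref{thm:sgauss}, is $n/m = 1-\varepsilon$ for \emph{constant} $\varepsilon > 0$ together with $w = \Theta(\log n/\varepsilon)$; the slack is $\Theta(\varepsilon m)$, not $e^{-\Omega(w)}\,m$. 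Your parameters $m = \lceil n(1+\varepsilon)\rceil$ do satisfy the real condition, so the conclusion stands, but the premise you state proves far more than is known.

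Second, the claim of "$\mathcal{O}(1/\varepsilon)$ row operations per key" is not a routine counting argument; it is the technical crux of the construction-time bound. In the paper it comes from $\mathrm{op}_+ \le \sum_{i \in [m]} h_i$ (\cref{lem:time-bound-insertions}) together with $\mathbb{E}[h_i] = \mathcal{O}(1/\varepsilon)$ (\cref{lem:standard-ribbon}\,(a)), and the latter rests on a coupling to an M/D/1 queue imported from \cite{DW:One-Block-per-Row:2019}. The paper must also argue that the operation count is invariant under the order of insertion so that the same bound applies to the incremental solver that skips pre-sorting; your bucket-sort step sidesteps this but reverts to the \sgauss description rather than the paper's "standard ribbon". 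If you treat \cite{DW:One-Block-per-Row:2019} as a black box, you should explicitly cite its queueing bound for the operation count rather than appeal to a "standard counting argument for banded systems", which glosses over the only step that is genuinely non-trivial.
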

We then combine standard ribbon retrieval with the idea of \emph{bumping}, i.e., a convenient subset $S' ⊆ S$ of keys is handled in the first \emph{layer} of the data structure and the small rest is \emph{bumped} to recursively constructed subsequent layers. \pesa{Reformulated ``The resulting \emph{bumped ribbon retrieval} (BuRR) data structure can be \emph{succinct}, i.e.\ can be configured to have an overhead of $o(1)$'' because that is a bit imprecise -- even standard ribbon can be succinct its just more expensive. I have concentrated the succinctness discussion after the statement of the theorem.}The resulting \emph{bumped ribbon retrieval} (BuRR) data structure has much smaller overhead for any given ribbon width $w$.
\begin{restatable}{theorem}{bumpedRibbonTheorem}
  \label{thm:bumped-ribbon}
  An $r$-bit BuRR data structure
  with ribbon width $w = \O(\log n)$ and $r = \O(w)$ has expected construction time $\O(nw)$,
    space overhead $\O(\frac{\log w}{rw²})$,
    and query time $\O(1+\frac{rw}{\log n})$.
\end{restatable}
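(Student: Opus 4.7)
\emph{Plan.} The idea is to iterate the standard ribbon construction of \cref{thm:standard-ribbon} with a constant success parameter (so $w = \Theta(\log n)$ and the standard-ribbon overhead degrades from $o(1)$ to a constant) across successive \emph{layers}. Each layer $\ell$ carries a ribbon matrix for a set $S_\ell$ of keys, with $S_1 = S$; keys that would locally overdetermine the banded linear system are \emph{bumped} and form $S_{\ell+1}$. I would partition the ribbon positions of each layer into consecutive \emph{chunks} of $\Theta(w^2)$ rows, store $\Theta(\log w)$ bits of metadata per chunk encoding an admission threshold (e.g., a hash-value cutoff within the chunk), and store only the admitted rows densely at $r$ bits each. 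A query consults the metadata in $O(1)$ per layer to decide whether to resolve in the current layer or descend, and each chunk is self-contained, so no matrix ``slack'' positions need to be stored.

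\emph{Probabilistic core.} The central step is to show that a random width-$w$ banded $\mathrm{GF}(2)$ submatrix on a chunk of $\Theta(w^2)$ rows has local rank deficit $O(\log w)$ with high probability, and that the admission threshold can therefore be chosen so that only an $O(\log w / w^2)$ fraction of keys per chunk are bumped. This reduces to a tail bound on the rank deficit of random banded matrices, obtained via the same peeling / spanning-structure analysis underlying \cref{thm:standard-ribbon} but sharpened in the dependence on $w$. The resulting per-layer bumping rate $\beta = O(\log w / w^2)$ drives a geometric recursion $|S_{\ell+1}| \leq \beta |S_\ell|$ and hence $L = O(\log n / \log w)$ layers suffice, with a union bound over $O(n/w^2)$ chunks giving the overall success probability.

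\emph{Bookkeeping.} Once $\beta$ is controlled, the three stated bounds follow. Per-layer construction time is $O(n_\ell w)$ by invoking \cref{thm:standard-ribbon} as a black box with constant $\varepsilon$; the geometric series $\sum_\ell n_\ell = n/(1-\beta)$ yields total time $O(nw)$. Because every key ultimately occupies exactly $r$ bits in the unique layer that retains it, the admitted rows across all layers account for exactly $nr$ bits, so the only space overhead is the metadata: $O(\log w)$ bits over $\sum_\ell (n_\ell/w^2) = O(n/w^2)$ chunks, i.e., $O(n \log w / w^2)$ extra bits, which is an overhead of $O(\log w / (rw^2))$ relative to $nr$. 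For queries, a single layer probe reads an $O(rw)$-bit window of the matrix and computes a $\mathrm{GF}(2)$ dot product in $O(1 + rw/\log n)$ word-RAM operations on words of size $\Omega(\log n)$, while the expected number of probes is $1/(1-\beta) = 1+o(1)$.

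\emph{Main obstacle.} The delicate step is the per-chunk rank-deficit tail bound: it must be tight enough in $w$ to achieve the claimed overhead $O(\log w/(rw^2))$ rather than a larger polylogarithmic factor, and it must cleanly handle boundary effects between adjacent chunks (since the banded structure couples them). Everything else — the geometric summation of layer sizes, the reduction to \cref{thm:standard-ribbon} for per-layer construction time, and the word-parallel query cost — is routine bookkeeping on top of this probabilistic ingredient.
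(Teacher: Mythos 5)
Your plan misses the single most important idea of BuRR, and as a consequence the space accounting is wrong.

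\textbf{The missing idea: overloading.} You propose to iterate standard ribbon with a \emph{positive} constant $\varepsilon$, so each layer has $m = n_\ell/(1-\varepsilon) > n_\ell$ slots. After Gaussian elimination a constant fraction of those slots of $Z$ are empty (free variables). You then write that ``the admitted rows across all layers account for exactly $nr$ bits, so the only space overhead is the metadata.'' This does not hold. A ribbon query for $x$ reads the $w$ consecutive rows $Z_{s(x)..s(x)+w-1}$ and forms the product $c(x)\cdot Z_{s(x)..s(x)+w-1}$; the table must therefore be indexable by slot number, so you pay $r$ bits for every one of the $m$ slots, filled or not. With constant $\varepsilon>0$ you have $\Theta(\varepsilon n)$ empty slots in the first layer alone, giving a \emph{constant} multiplicative overhead, not $O(\frac{\log w}{r w^2})$. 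You cannot ``store only the admitted rows densely'' without a rank/select structure over the slots, which itself costs $\Omega(1)$ bits per slot, i.e.\ overhead $\Omega(1/r)$, still far too much. The paper's decisive move is to take $\varepsilon<0$ (table size $m<n$), so that after bumping, the empty-slot fraction is only $O(w^{-3})$; the paper's whole analysis via the ribbon diagonal (Lemma 14, via Claims 15--16) is designed to prove exactly this and has no analogue in your plan.

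\textbf{Secondary issues.} Your proposed probabilistic core (``rank deficit of a random width-$w$ banded chunk is $O(\log w)$'') is not the right statement even in spirit: what must be controlled is the number of \emph{empty slots} under overloading and a restricted family of bumping thresholds, which is a different quantity from local rank deficit and requires tracking the coupling between adjacent buckets (keys from bucket $j$ spill into bucket $j+1$). Your chunks also cannot be ``self-contained'' for this reason: cutting that coupling would wreck the tight packing. Finally, the claimed bumping fraction $\beta = O(\log w/w^2)$ is off by a factor of $w$; the true bumping fraction in the overloaded regime is $\Theta(\log w/w)$, which is why the paper uses a \emph{constant} number of layers (four) rather than $\Theta(\log n/\log w)$ --- what shrinks like $O(w^{-3})$ per layer is the wasted slot fraction, not the bumped key fraction. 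Your ``routine bookkeeping'' therefore rests on two separate incorrect premises (that space is proportional to admitted keys, and that the bumping rate is $O(\log w/w^2)$), and the error does not cancel.
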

\def\AP#1{\textbf{(\kern-1pt\scalebox{0.8}{A}\scalebox{0.9}{#1}\kern-0.5pt)}}
\def\AP#1{}
\begin{table}[tbh]
  \centering
  \def\g{\color{darkgray}}
  \small
  \begin{tabular}{cccc@{\,}c@{}c@{\,\,}cl}
    \toprule
    && Year & $t_{\mathrm{construct}}$ & $t_{\mathrm{query}}$ & \footnotesize{\stackanchor{multiplicative}{overhead}} & shard size & Solver\\
    \midrule
    \AP{1}&\cite{LMSS:Efficient_Erasure:2001}&2001
    &$\O(n\log k)$ & $\O(\log k)^\dagger$ & $\frac{1}{k}$&–&peeling\\
    \AP{2}&\cite{P:An_Optimal:2009}&2009
    &$\O(n)$ & $\O(1)$ & $\bm{\O(\frac{\log \log n}{(\log n)^{1/2}})}$&$\sqrt{\log n}$&lookup table\\
    \AP{3}&\cite{BPZ:Practical:2013}&2013
    &$\O(n)$ & $\O(1)$ & $0.2218$&–&peeling\\
    & \cite{BelazzouguiV13} & 2013 & $𝒪(n)$ & $𝒪(1)$ & $\bm{𝒪(\frac{\log² \log n}{r\log n})}$ & $𝒪(\frac{\log² \log n}{r\log n})$ & –\\
    \AP{8}&\cite{Sanders:Retrieval-FingerPrinting:2014}&2014
    &$\O(n)$ & $\O(1)$ & $Ω(1/r)$&$\O(1)$&sorting/sharding\\
    \AP{4}&\cite{Vigna:Fast-Scalable-Construction-of-Functions:2016}&2016
    &$\O(nC²)$ & $\O(1)$ & $0.024 + \O(\frac{\log n}{C})$&$C$&structured Gauss\\
    \begin{tikzpicture}[overlay]
      \node[font=\scriptsize,rotate=90](v) at (-2em,2pt) {Standard Ribbon};
      \draw[->] (v) -- (0,2pt);
    \end{tikzpicture}
    \AP{5}&\cite{DW:One-Block-per-Row:2019}&2019
    &$\O(n/ε²)$ & $\O(r/ε)$ & $ε$&–&Gauss\\
    \AP{6}&\cite{DW:One-Block-per-Row:2019}&2019
    &$\O(n/ε)$ & $\O(r)$ & $ε+\O(\frac{\log n}{n^{ε}})$&$n^ε$&Gauss\\
    \AP{7}&\cite{DW:Retrieval-log-extra-bits:2019}&2019
    &$\O(nC²)$ & $\O(r)$ & $\bm{\O(\frac{\log n}{C})}$&$C$&structured Gauss\\
    \AP{9}&\cite{W:SpatialCoupling:2021}&2021
    &$\O(nk)$ & $\O(k)$ & $(1+o_k(1))\mathrm{e}^{-k}$&–&peeling\\
    \midrule
    \multicolumn{3}{c}{BuRR}
    & $\O(nw)$  & $\O(1+\frac{rw}{\log n})$ & $\bm{\O(\frac{\log w}{rw²})}$&–&on-the-fly Gauss\\
    \multicolumn{3}{c}{$↪$ with $w = Θ(\log n)$: }
    & $\O(n\log n)$  & $\O(r)$ & $\bm{\O(\frac{\log \log n}{r\log²n})}$& –& on-the-fly Gauss\\
    \bottomrule
  \end{tabular}\\
  $\dagger$ Expected query time. Worst case query time is $\O(D)$.
  \caption{Performance of various $r$-bit retrieval data structures with $r = 𝒪(\log n)$. \textbf{Bold} overhead indicates that the data structure is (or can be configured to be) succinct. The parameters $k ∈ ℕ$ and $ε > 0$ are constants with respect to $n$. The parameter $C ∈ ℕ$ is typically $n^α$ for constant $α ∈ (0,1)$.}
  \label{tab:retrieval-comparison}

\end{table}

\def\mysmash#1{\scalebox{0.9}{$\smash{#1}$}}

\pesa{concentrated succinctness discussion here:}In particular, BuRR can be configured to
be \emph{succinct}, i.e., can be configured to have an overhead of $o(1)$ while retaining
constant access time for small $r$. Construction time is slightly superlinear.
Note that succinct retrieval data structures were known before, even with asymptotically optimal construction and query times of $𝒪(n)$ and $𝒪(1)$, respectively \cite{P:An_Optimal:2009,BelazzouguiV13}.
Seeing the advantages of BuRR requires a closer look. Details are given in \cref{s:related}, but the gist can be seen from \cref{tab:retrieval-comparison}: Among the \pesa{added word}previous succinct retrieval data structures (overheads set in bold font), \pesa{added word}only \cite{DW:Retrieval-log-extra-bits:2019} can achieve small overhead in a \pesa{slightly reformulated}\emph{tunable} way, i.e., independently of $n$ using an appropriate tuning parameter $C= \omega(\log n)$. However, this approach suffers from comparatively high constructions times. \cite{P:An_Optimal:2009} and \cite{BelazzouguiV13} are not tunable and only \emph{barely} succinct with significant overhead in practice. A quick calculation to illustrate\label{page:overhead-calculation-porat}: Neglecting the factors hidden by $𝒪$-notation, the overheads are \mysmash{\frac{\log \log n}{\sqrt{\log n}}} and \mysmash{\frac{\log²\log n}{r\log n}}, which is at least $75\%$ and $7\%$ for $r = 8$ and any $n ≤ 2^{64}$. A similar estimation for BuRR with $w = Θ(\log n)$ suggests an overhead of \mysmash{\frac{\log \log n}{r\log²n}} $≈ 0.1\%$ already for $r = 8$ and $n = 2^{24}$. Moreover, by tuning the ribbon width $w$, a wide range of trade-offs between small overhead and fast running times can be achieved.

Overall, we believe that asymptotic analyses struggle to tell the full story due to the extremely slow decay of some “$o(1)$” terms.
\stwa{Removed the added sentence: \pesa{It therefore seems that a tunability feature independent of $n$,
concrete nonasymptotic estimates of space overhead, and/or actual experiments
should be equally important criteria for compact data structures as only the asymptotic notion of succinctness.}Reason: We have no nonasymptotic estimates of space overhead (except for the experiments). And tunability of $w$ was mentioned a sentence ago.}
We therefore accompany the theoretical account with experiments comparing BuRR to other efficient (compact or succinct) retrieval data structures. We do this in the use case of data structures for approximate membership and also invite competitors not based on retrieval into the ring such as (blocked) Bloom filters and Cuckoo filters. 

\myparagraph{Data structures for approximate membership.}
Retrieval data structures are an important basic tool for building compressed data structures. Perhaps the most widely used application is associating an $r$-bit fingerprint with each key from a set $S ⊆ \U$, which allows implementing an \emph{approximate  membership query data structure} (\emph{AMQ}, aka \emph{Bloom filter replacement} or simply \emph{filter}) that supports membership queries for $S$ with \emph{false positive rate} $ϕ = 2^{-r}$. A membership query for a key $x ∈ \U$ will simply compare
the fingerprint of $x$ with the result returned by the retrieval data structure for $x$. The values will be the same if $x ∈ S$.
Otherwise, they are the same only with probability $2^{-r}$.

In addition to the AMQs following from standard ribbon retrieval and BuRR, we also present homogeneous ribbon filters, which are not directly based on retrieval.

\begin{restatable}{theorem}{homogRibbonTheorem}
  \label{thm:homog-ribbon}
  Let $r ∈ ℕ$ and $ε ∈ (0,\frac{1}{2}]$. There is $w ∈ ℕ$ with $\frac{w}{\max(r,\log w)} = 𝒪(1/ε)$ such that the homogeneous ribbon filter with ribbon width $w$ has false positive rate $\fpr ≈ 2^{-r}$ and space overhead $𝒪(ε)$. On a word RAM with word size $≥ w$ expected construction time is $\O(n/ε)$ and query time is $\O(r)$.
\end{restatable}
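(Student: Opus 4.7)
The homogeneous ribbon filter represents $S$ by storing a matrix $Z \in \{0,1\}^{m \times r}$ with $m = (1+δ)n$ columns, so that every key $x \in S$ with ribbon row $m_x$ (a width-$w$ block at the hashed offset $h(x)$) satisfies $m_x Z = 0 \in \{0,1\}^r$. A query on $y$ reports ``member'' iff $m_y Z = 0$. The storage is $mr = (1+δ)nr$ bits, so choosing $δ = \Theta(ε)$ yields space overhead $\O(ε)$.

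The first step is a kernel analysis of the random ribbon matrix $M \in \{0,1\}^{n \times m}$. Since $m > n$ we already have $\dim \ker M \ge m - n = δn$ automatically, so homogeneous solvability is not the real issue; the point is to rule out ``degenerate'' kernel structure that would bias the false-positive analysis. I would adapt the rank concentration argument from \cite{DW:One-Block-per-Row:2019} that underlies \cref{thm:standard-ribbon} to show that for $w$ satisfying $w/\max(r,\log w) = \Theta(1/ε)$, the matrix $M$ has row rank exactly $n$ with probability $1 - n^{-\Omega(1)}$, so $\dim \ker M = δn$ on this event. On the rare failure event one re-hash costs only a constant factor in expectation.

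Given such an $M$, I construct $Z$ by picking its $r$ columns as independent uniformly random elements of $\ker M$. For the false-positive rate, take any $y \notin S$, so $m_y$ is independent of $(M, Z)$. If $m_y \in \mathrm{rowspace}(M)$, an event of probability at most $2^{-(m-n)} = 2^{-δn}$ and thus negligible, then $m_y Z = 0$ holds deterministically. Otherwise the $\mathbb{F}_2$-linear map $z \mapsto m_y \cdot z$ restricted to $\ker M$ is non-trivial, so each $m_y z_j$ is an unbiased bit; because the columns of $Z$ are independent, $m_y Z$ is uniform on $\{0,1\}^r$, yielding $\Pr[m_y Z = 0] = 2^{-r}$. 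Combining the two cases gives $\fpr \le 2^{-r} + 2^{-δn} \approx 2^{-r}$.

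For the running times I would reuse the sparse ribbon Gaussian elimination of \cite{DW:One-Block-per-Row:2019}: processed in pivot order, each row's matrix portion fits in one word under word size $\ge w$, and the $r$-bit payload fits in $\O(\lceil r/w\rceil) = \O(1)$ words under the stated scaling, so a row elimination costs $\O(1)$ word operations; summed over the $n$ rows together with the $\O(1/ε)$ expected cost of obtaining an unbiased null-space representative for $Z$, this gives $\O(n/ε)$ expected construction time. A query, with $Z$ stored in column-major layout, computes each of the $r$ output bits as the parity of $m_y$ AND-ed with the corresponding column restricted to $y$'s ribbon window—one word operation per output bit, for a total of $\O(r)$. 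The main obstacle is the kernel analysis of Step~1: ruling out unexpected local linear dependencies in the ribbon matrix requires partitioning $[m]$ into overlapping windows of length $\Theta(w)$ and union-bounding over ``bad'' clusters of ribbons, which is precisely where the $\max(r,\log w)$ factor in the $w$-requirement enters.
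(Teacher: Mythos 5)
Your proposal has two genuine gaps, both of which the paper's proof is structured to avoid.

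First, you claim that for $w$ with $w/\max(r,\log w) = \Theta(1/\varepsilon)$ — which in particular allows $w$ to be a constant independent of $n$ — the ribbon matrix $M$ has row rank exactly $n$ with probability $1 - n^{-\Omega(1)}$, and you propose to re-hash on failure. This is false. For fixed $\varepsilon$ and $w$, the fraction of keys whose insertion is redundant (linearly dependent on earlier rows) is $\exp(-\Omega(\varepsilon w))$, a positive constant; the expected number of redundant keys is therefore $\Omega(n)$, and full row rank is actually exponentially \emph{unlikely} for large $n$. The full-row-rank guarantee from \cref{thm:standard-ribbon} genuinely needs $w = \Omega(\log n/\varepsilon)$, which would defeat the point of the homogeneous filter (no $n$-dependence of $w$). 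The paper's construction does not attempt to make $M$ full rank and never restarts; it simply notes that $AZ = 0$ is always solvable.

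Second, and more seriously, the bound $\Pr[m_y \in \mathrm{rowspace}(M)] \le 2^{-(m-n)} = 2^{-\delta n}$ is wrong. That bound is correct only if $m_y$ were uniform over $\{0,1\}^m$, but a ribbon row is supported on a single width-$w$ window and has only about $2^{w-1}(m-w+1)$ possible values. The rowspace of $M$ itself consists of combinations of such local rows, so the alignment between $m_y$ and $\mathrm{rowspace}(M)$ is not at all like that of a uniform vector. In fact, the correct order of magnitude for this probability is $p = \exp(-\Omega(\varepsilon w))$ — it depends on $w$, not on $n$ — and establishing this is the entire content of \cref{lem:extra-fpr-homogeneous}, which involves a local analysis of minimal dependent sets on short vs.\ long and overloaded vs.\ non-overloaded intervals, plus accounting for redundant keys via \cref{lem:homogeneous-construction}. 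Your step 2 sweeps all of that under the rug. Once $p$ is correctly bounded, the identity $\fpr = p + (1-p)2^{-r}$ is indeed how the paper finishes (that part of your argument, which is essentially \cref{lem:homogeneous-fpr}, is fine), but with $p = \exp(-\Omega(\varepsilon w))$ rather than $2^{-\delta n}$; this is exactly where the requirement $\varepsilon w = \Omega(\max(r,\log w))$ is needed to make $p$ negligible compared to $2^{-r}$.

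The running-time sketch is roughly in line with the paper (query $\O(r)$; construction $\O(n/\varepsilon)$ via \cref{lem:time-bound-insertions,lem:standard-ribbon}), except that redundant insertions need to be handled: the paper sorts keys into buckets of width $\exp(\O(\varepsilon w))$ so that a redundant insertion cannot run for more than $\O(b+w)$ steps, keeping their total cost $\O(n)$. Without this, a single redundant key processed early could in principle scan far to the right.
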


\begin{figure}[hbt]
  \centering
\includegraphics[page=8,width=\columnwidth]{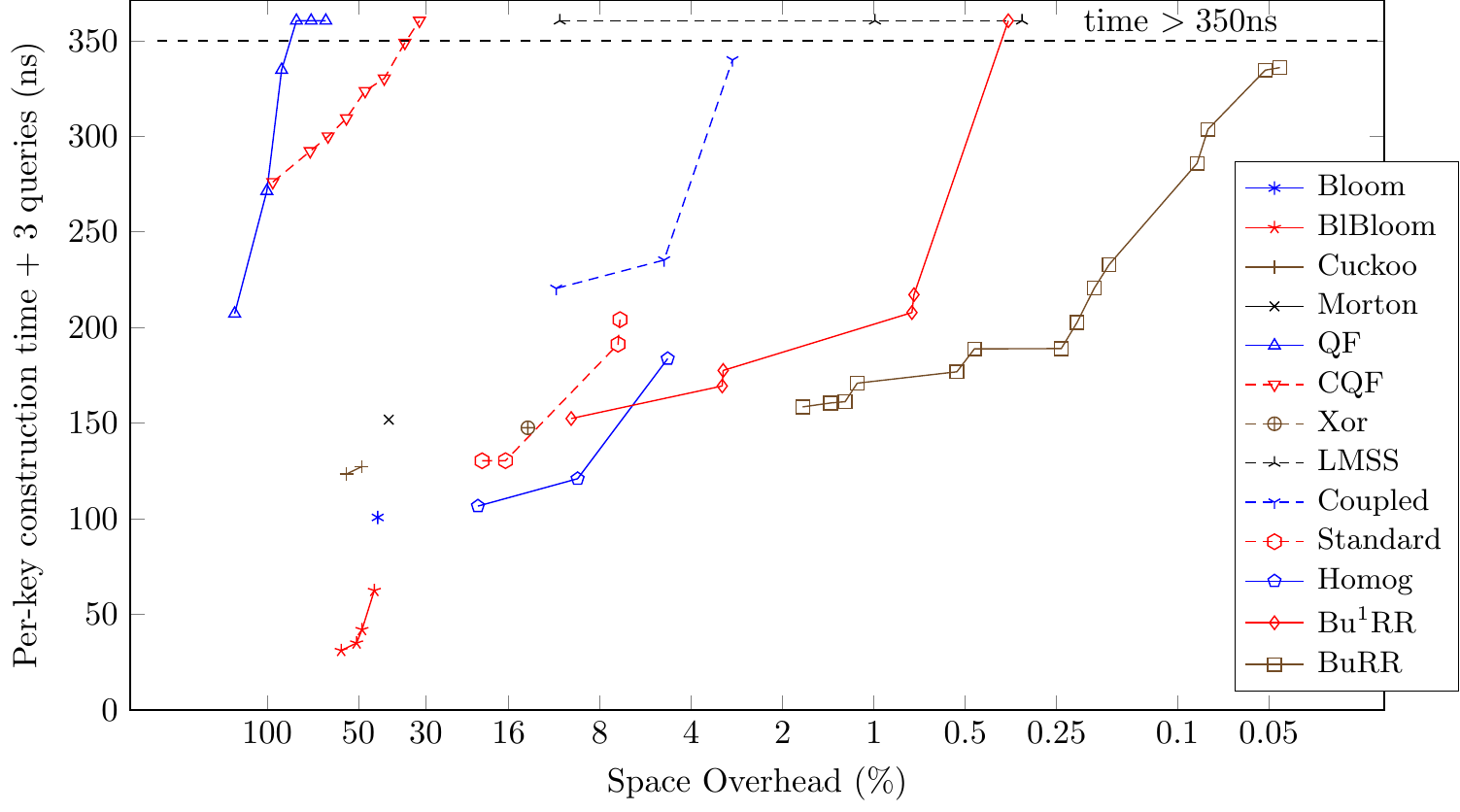}
  \caption{\label{fig:scatter1}%
    Performance--overhead trade-off for measured
    false-positive rate in 0.003--0.01 (i.e., $r\approx 8$),
    for different AMQs and inputs. Ribbon-based data structures are in {\color{blue}blue}. For each category of approaches, only
    variants are shown that are not Pareto-dominated by variants in the same
    category.  Sequential benchmarks use a single filter of size $n$
    while the parallel benchmark uses $1280$ filters of size $n$ and utilizes 64 cores.
    Logarithmic vertical axis above 1200\,ns.}
\end{figure}

\myparagraph{Experiments.}
\Cref{fig:scatter1} shows some of the results explained in detail later in the paper. In the depicted parallel setting, ribbon-based AMQs (blue) are the fastest static AMQs when an overhead less than $≈44\%$ is desired (where “fastest” considers a somewhat arbitrary weighting of construction and query times). The advantage is less pronounced in the sequential setting.

\myparagraph{Why care about space?} 
Especially in AMQ applications, retrieval data structures occupy a
considerable fraction of RAM in large server farms continuously drawing many megawatts of power. Even small
reductions (say 10\,\%) in their space consumption thus translate into
considerable cost savings. 
Whether or not these space savings should be pursued at the price of increased access costs depends on the number of queries per second.
The lower the access frequency, the more worthwhile it is to occasionally spend increased access costs for a permanently lowered memory budget.
Since the false-positive rate also has an associated cost (e.g.\ additional accesses to disk or flash) it is also subject to tuning.
The entire set of Pareto-optimal variants with respect to tradeoffs between space, access time, and FP rate is relevant for applications.
For instance, sophisticated implementations of LSM-trees use multiple variants of AMQs
at once based on known access frequencies \cite{DAI:Monkey:2017}. Similar ideas have been used in compressed data bases \cite{MRF14}.

\myparagraph{Outline.}
The paper is organized as follows (section numbers in parentheses). After important preliminaries (\ref{sec:prelim}), we explain our data structures and algorithms in broad strokes (\ref{sec:ribbon-intro}) and summarize our experimental findings (\ref{sec:exp-summary}).
\ifconfVersion
We then summarize related work (\ref{s:related}).
In the full paper \cite{dillinger2021fast}\pesa{check} we give a detailed theoretical analysis, extensively describe the design space of BuRR, and discuss additional experiments.
\else
We then fill in the details. We summarize related work (\ref{s:related}), including that from \cref{tab:retrieval-comparison}. In theory-oriented sections (\ref{s:ribbonRetrieval}-\ref{sec:bumped-ribbon}) we first analyze general aspects of the “ribbon” approach (\ref{s:ribbonRetrieval} and \ref{sec:ribbon-analysis}) and then prove theorems on standard ribbon (\ref{sec:architectures}), homogeneous ribbon (\ref{sec:homogeneous-ribbon}) and BuRR (\ref{sec:bumped-ribbon}). Algorithm engineers will be interested in a discussion of the design decisions that have to be made when implementing BuRR (\ref{s:designspace}) and a precise description of the many experiments we made (\ref{s:exp}).
\fi

\section{Linear Algebra Based Retrieval Data Structures and \texorpdfstring{\sgauss}{SGAUSS}}
\label{sec:prelim}

A simple, elegant and highly successful approach for compact and succinct retrieval 
uses linear algebra over the finite
field $\F_2=\{0,1\}$
\cite{DP:Succinct:2008,Vigna:Fast-Scalable-Construction-of-Functions:2016,ADR:Experimental:2009,P:An_Optimal:2009,CKRT:The_Bloomier:2004,BPZ:Practical:2013,DW:Retrieval-log-extra-bits:2019,DW:One-Block-per-Row:2019}.
Refer to
\cref{s:related} for a discussion of alternative and complementary techniques.

The train of thought is this:
A natural idea would be to have a hash function point to a location where the key's information is stored while the key itself need not be stored.
This fails because of hash collisions.
We therefore allow the information for each key to be dispersed over several locations.
%
Formally we store a table $Z ∈ \{0,1\}^{m × r}$ with $m\geq n$
entries of $r$ bits each and to define $f(x)$ as the bit-wise xor of a set of
table entries whose positions $h(x) ⊆ [m]$ are determined by a hash function $h$.\footnote{In this paper, $[k]$ can stand for $\{0,\ldots,k-1\}$ or $\{1,\ldots,k\}$ (depending on the context), and $a..b$ stands for $\{a,\ldots,b\}$.}
This
can be viewed as the matrix product $\svec{h}(x)Z$ where $\svec{h}(x) ∈ \{0,1\}^m$
is the characteristic (row)-vector of $h(x)$.
For given
$h$, the main task in building the data structure is to find the right
table entries such that $\svec{h}(x)Z = f(x)$ holds for every key $x$.
This is equivalent to
solving a system of linear equations $AZ=\mathbf{b}$ where $A=(\svec{h}(x))_{x
  ∈ S}\in \{0,1\}^{n\times m}$ and $\mathbf{b}=(f(x))_{x ∈ S}\in
\{0,1\}^{n\times r}$.
Note that
rows in the constraint matrix $A$ correspond to keys in the input set $S$.
In the following, we will thus switch between the terms ``row'' and ``key''
depending on which one is more natural in the given context.

An encouraging observation is that even for $m=n$, the system $AZ=\mathbf{b}$
is solvable with constant probability
 if the rows of $A$ are chosen uniformly at random
\cite{Cooper:Rank-Of-Random-Matrices:2000,P:An_Optimal:2009}.
With linear query time and cubic construction time, we can thus
achieve optimal space consumption. For a practically useful approach, however, we want
the $1$-entries in $\svec{h}(x)$ to be sparse and highly localized to allow 
cache-efficient queries in (near) constant time and
we want a (near) linear time algorithm for solving $AZ=\mathbf{b}$. This is possible if $m > n$.

A particularly promising approach in this regard is \sgauss from \cite{DW:One-Block-per-Row:2019}
that chooses the $1$-entries within a narrow range.
Specifically, it chooses $w$ random bits $c(x) ∈ \{0,1\}^w$ and a random \emph{starting position}
$s(x)\in[m-w-1]$, i.e.,
$\svec{h}(x) = 0^{s(x)-1}c(x)0^{m-s(x)-w+1}$.
For $m=(1+\eps)n$ some value $w=\Oh{\log(n)/\eps}$
suffices to make the system $AZ=\mathbf{b}$ solvable with high probability.
We call $w$ the \emph{ribbon width}
because after sorting the rows of $A$ by $s(x)$ we obtain a matrix which is not technically a band matrix, but which likely has all $1$-entries within a narrow \emph{ribbon} close to the diagonal.
The solution $Z$ can then be found in time
$\Oh{n/\eps^2}$ using Gaussian elimination \cite{DW:One-Block-per-Row:2019} and bit-parallel row operations; see also \cref{fig:ribbon-matrix}~(a).


\section{Ribbon Retrieval and Ribbon Filters}
\label{sec:ribbon-intro}

We advance the linear algebra approach to the point where space overhead is almost eliminated while keeping or improving
the running times of previous constructions.

\begin{figure}[ht]
    \centering
    \begin{tabular}{c@{\!\!\!}cc@{\!\!\!}c}
    \textbf{(a)} && \textbf{(b)}\\[-15pt]
    &\includegraphics[page=1]{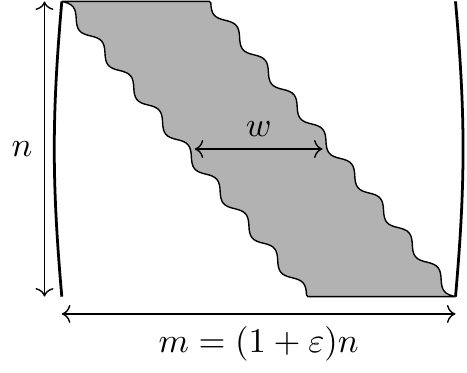}
    &&\includegraphics[page=4]{figures/MatrixPictures.pdf}
    \end{tabular}
    \caption[fragile]{\textbf{(a)} Typical shape of the random matrix $A$ with rows ${(\vec{h}(x))_{x ∈ S}}$ sorted by starting positions. The shaded “ribbon” region contains random bits. Gaussian elimination never causes any fill-in outside of the ribbon.\\
    \textbf{(b)} Shape of the linear system ${M}$ in row echelon form maintained using Boolean banding on the fly.
    In gray we visualize the insertion of a key $x$ where \textsf{(i)} $\svec{h}(x)$ has its left-most $1$ in position $s(x) = 2$, \textsf{(ii)} after xoring the second row of $M$ to $\svec{h}(x)$, the left-most $1$ is in position $5$ and \textsf{(iii)} xoring the fifth row as well, the left-most $1$ is in position $6$. The resulting row fills the previously empty sixth row of $M$ and $f(x) ⊕ b₂ ⊕ b₅$ is added as right hand side.
    }
    \label{fig:ribbon-matrix}
\end{figure}

\myparagraph{Ribbon solving.}
Our first 
contribution is a simple algorithm we could not resist to also call \emph{ribbon} as in \emph{\textbf{R}apid \textbf{I}ncremental \textbf{B}oolean \textbf{B}anding \textbf{ON} the fly}.
It maintains a system of linear equations in row echelon form as shown in \cref{fig:ribbon-matrix} \textbf{(b)}. It does so \emph{on-the-fly}, i.e.\ while equations arrive one by one in arbitrary order. For each index $i$ of a column there may be at most one equation that has its leftmost one in column $i$.
When an equation with row vector $a$ arrives and its slot is already taken by a row $a'$, then ribbon
performs the row operation $a\leftarrow a ⊕ a'$, which eliminates the $1$ in position $i$, and continues with the modified row. An invariant is that rows have all their nonzeroes in a range of size $w$,
which allows to process rows with a small number of bit-parallel word operations.
This insertion process is \emph{incremental} in that insertions do not modify existing rows. This improves performance and allows to cheaply roll back the most recent insertions which will be exploited below.
It is a non-trivial insight that the order in which equations are added does not significantly affect the expected number of row operations.

When all rows processed we perform back-substitution to compute the
solution matrix $Z$.  At least for small~$r$,
\emph{interleaved representation} of $Z$ works well,
where blocks of size $w\times r$ of $Z$ are stored column-wise.
A query for $x$ can then retrieve one bit of $f(x)$ at
a time by applying a population count instruction
to pieces of rows retrieved from at most two of these blocks.
This is particularly
advantageous for negative queries to AMQs (i.e.\ queries of elements not in the set), where only two bits need to be
retrieved on average.
\ifconfVersion
More details are given in the full paper.
\else
More details are given in \cref{s:ribbonRetrieval}.
\fi

%

\subsection{Standard Ribbon}
When employing no further tricks, we obtain \emph{standard ribbon retrieval}, which is essentially the same data structure as in \cite{DW:One-Block-per-Row:2019} except with a different solver that is faster by noticeable constant factors.
A problem is that $w$ has to become impractically large when $n$ is large and $\eps$ is small. For example, in our experiments the smallest overhead we
could achieve for $n=10^6$ and (already quite expensive) $w=128$ is
around $3.3\,\%$ (for construction success rate $50\%$).
To some degree this can be mitigated by sharding techniques \cite{W:Thesis}, but in this paper we pursue a more ambitious route.


\subsection{Bumped Ribbon Retrieval}
Our main contribution is \emph{bumped ribbon retrieval (BuRR)}, which
reduces the required ribbon width to a constant that only depends on the targeted space efficiency.
BuRR is based on two ideas.

\myparagraph{Bumping.}
The ribbon solving approach
manages to insert most rows (representing most keys of $S$) even when
$w$ is small.  Thus, by eliminating those rows/keys that cause a
linear dependency, we obtain a compact retrieval data structure for
a large subset of $S$. The remaining keys are \emph{bumped}, meaning they are handled by a fallback
data structure which, by recursion, can be a BuRR data structure
again. We show that only $𝒪(\frac{n \log w}{w})$ keys need to be bumped in expectation.  Thus, after
a constant number of layers (we use $4$), a less ambitious retrieval data structure can be used to handle
the few remaining keys without bumping.

The main challenge is that we need additional metadata to encode which keys are bumped.
The basic \emph{bumped retrieval} approach is adopted from the
updateable retrieval data structure \emph{filtered retrieval (FiRe)}
\cite{Sanders:Retrieval-FingerPrinting:2014}. To shrink the input size by a moderate
constant factor, FiRe needs a constant number of bits per key
(around 4). This leads to very high space overhead for small
$r$.
A crucial observation
for BuRR is that bumping can be done with coarser than per-key granularity.
We will bump keys based on their starting position and say \emph{position~$i$ is bumped} to indicate that all keys with $s(x) = i$ are bumped.
Bumping by position is sufficient because linear dependencies
in $A$ are largely unrelated to the actual bit patterns~$c(x)$ but
mostly caused by fluctuations in the number
of keys mapped to different parts of the matrix~$A$.
By selectively bumping
ranges of positions in overloaded parts of the system, we can
obtain a solvable system. Furthermore, our analysis shows that we can
drastically limit the spectrum of possible bumping ranges, see below.

\myparagraph{Overloading.}
Besides metadata, space overhead results from the $m-n+n_b$ excess
slots of the table where $n_b$ is the number of bumped keys.
Trying
out possible values of $\eps = \frac{m-n}{n} > 0$ one sees that the overhead due to
excess slots is always $\Om{1/w}$ and will thus dominate the
overhead due to metadata.  However, we show that by choosing $\eps<0$ (of order $-ε = 𝒪(\frac{\log w}{w})$), i.e., by
\emph{overloading} the table, we can almost completely eliminate
excess table slots so that the minuscule amount of metadata becomes
the dominant remaining overhead.
There are many ways to decide and encode which keys are bumped.
Here, we outline a simple variant that achieves very good performance in
practice and is a generalization of the theoretically analyzed
approach.
\ifconfVersion
We expand on the much larger design space of BuRR in the full paper.
\else
We expand on the much larger design space of BuRR in \cref{s:designspace}.
\fi

\myparagraph{Deciding what to bump.}
We subdivide the possible starting positions
into \emph{buckets} of width $b=\Oh{w^2/\log w}$ and allow to bump a
single initial range of each bucket.
The keys (or more precisely pairs of hashes and the value to be retrieved) are sorted according to the bucket addressed by the starting
position $s(x)$. We use a fast in-place
integer sorter for this purpose \cite{AWFS20}.  Then buckets are processed
one after the other from \emph{left to right}. Within a bucket, however,
keys are inserted into the row echelon form from \emph{right to
  left}.
The reason for this is that insertions of the previous bucket may have “spilled over” causing additional load on the left of the bucket – an issue we wish to confront as late as possible. See also \cref{fig:fillrtlnew}.

\begin{figure}[htbp]
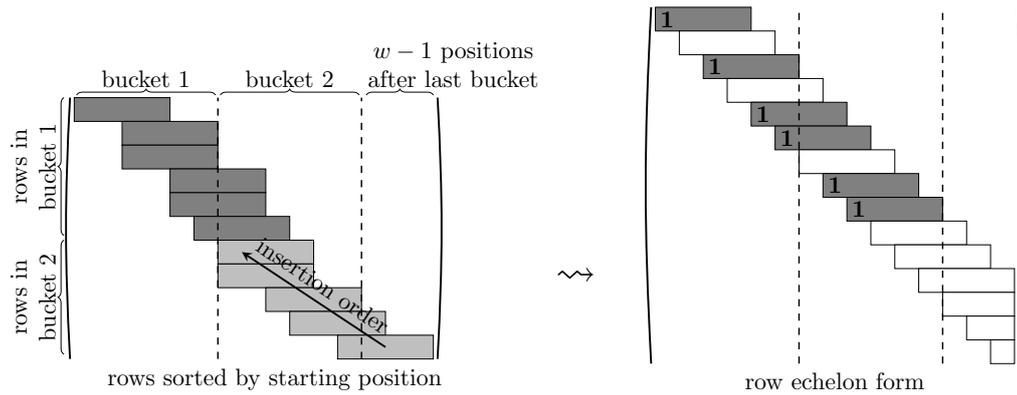

  \centering
\includegraphics[page=5,scale=0.9]{figures/MatrixPictures}
\raisebox{1.5cm}{\scalebox{1.5}{$\rightsquigarrow$}}\hspace{5mm}
\includegraphics[page=6,scale=0.9]{figures/MatrixPictures}
  \caption{Illustration of BuRR construction with $n = 11$ keys, $m = 2b+w-1 = 15$ table positions, ribbon width $w = 4$ and bucket size $b = 6$. Keys of the first bucket were successfully inserted into row echelon form with two insertions “overflowing” into the second bucket. Insertions of the second bucket's rows will be attempted next, in the indicated order.}
  \label{fig:fillrtlnew}
\end{figure}

If all keys of a bucket can be successfully inserted, no
keys of the bucket are bumped. Otherwise, suppose the first failed insertion for a bucket $[i,i+b)$ concerns a key where $s(x) = i+k$ is the $k$-th position of the bucket. We \emph{could} decide to bump all keys $x'$ of the bucket with $s(x') ≤ i+k$, which would require storing the \emph{threshold} $k$ using $𝒪(\log w)$ bits and which would yield an overhead of $𝒪(\log²(w)/w²)$ due to metadata. \emph{Instead}, to reduce this overhead to $𝒪(\log(w)/w²)$, we only allow a constant number of threshold values. This means that we find the smallest threshold value $t$ with $t ≥ k$ representable by metadata and bump all keys $x'$ with $s(x') ≤ i+t$. This requires rolling back the insertions of keys $x'$ with $s(x') ∈ [k,t]$ by clearing the most recently populated rows from the row echelon form.
One good
compromise between space and speed stores 2 bits per bucket encoding
the threshold values $\{0,\ell, u, b\}$, for suitable $\ell$ and $u$. The special case $\ell=u = \frac{3}{8}w$ is
used in our analysis. Another slightly more compact variant ``\onebit'' stores one
bit encoding threshold values from the set $\{0,t\}$, for a suitable $t$, and additionally stores a hash table of
exceptions for thresholds $>t$.

\myparagraph{Running times.}
With these ingredients we obtain \cref{thm:bumped-ribbon} stated on page \pageref{thm:bumped-ribbon}.
It implies constant query time\footnote{%
  It should be noted that the proof invokes a lookup table in one case to speed up the computation of a matrix vector product. In \cref{s:related}, we argue that lookup tables should be avoided in practice. Technically, our \emph{implementation} therefore has a query time of $𝒪(r)$.%
} if $rw=\Oh{\log n}$ and linear
construction time if $w\in\Oh{1}$. For wider ribbons,
construction time is slightly superlinear. However, in practice this does
not necessarily mean that BuRR is slower than other approaches with
asymptotically better bounds as the factor $w$ involves operations
with very high locality. An analysis in the external memory model
reveals that BuRR construction is possible with a single scan of the
input and integer sorting of $n$ objects of size $\Oh{\log n}$
bits, see
\ifconfVersion
the full paper for details.
\else
\cref{ss:parallel}.
\fi

\subsection{Homogeneous Ribbon Filter}
For the application of ribbon
to AMQs, we can also compute a \emph{uniformly random}
solution of the \emph{homogeneous} equation system $AZ=0$, i.e., we
compute a retrieval data structure that will retrieve $0^r$ for all
keys of $S$ but is unlikely to produce $0^r$ for other
inputs. Since $AZ=0$ is always solvable, there is no need for
bumping. The crux is that the false positive rate is no longer $2^{-r}$ but higher.
\ifconfVersion
In the full paper
\else
In \cref{sec:homogeneous-ribbon}
\fi
we show that with table size
$m=(1+\eps)n$ and $ε = Ω(\frac{\max(r,\log w)}{w})$ the difference is negligible, thereby showing \cref{thm:homog-ribbon}.
Homogeneous ribbon AMQs are simpler and faster
than BuRR but have higher space overhead. Our experiments indicate that
together, BuRR and homogeneous ribbon AMQs cover a large part of the
best tradeoffs for static AMQs.

\subsection{Analysis outline}
To get an intuition for the relevant linear systems, it is useful to consider two simplifications. First, assume that $\vec{h}(x)$ contains a block of $w$ uniformly random \emph{real} numbers from $[0,1]$ rather than $w$ random bits. Secondly, assume that we sort the rows by starting position and use Gaussian elimination rather than ribbon to produce a row echelon form. In \cref{fig:simplified-diagonal}\,\textbf{(a)} we illustrate for such a matrix with $×$-marks where the pivots would be placed and in yellow the entries that are eliminated (with one row operation each); both with probability $1$, i.e.\ barring coincidences where a row operation eliminates more than one entry. The $×$-marks trace a diagonal through the matrix except that the green column and the red row are skipped because the end of the (gray) area of nonzeroes is reached. “Column failures” correspond to free variables and therefore unused space. “Row failures” correspond to linearly dependent equations and therefore failed insertions. This view remains largely intact when handling \emph{Boolean} equations in \emph{arbitrary} order except that the \emph{ribbon diagonal}, which we introduce as an analogue to the trace of pivot positions, has a more abstract meaning and probabilistically suffers from row and column failures depending on its \emph{distance} to the ribbon border.

\begin{figure}[bth]
    \centering
    \begin{tabular}{c@{\!}cc@{}c}
    \textbf{(a)} && \textbf{(b)}\\[-15pt]
    &\includegraphics[width=0.25\textwidth,page=1]{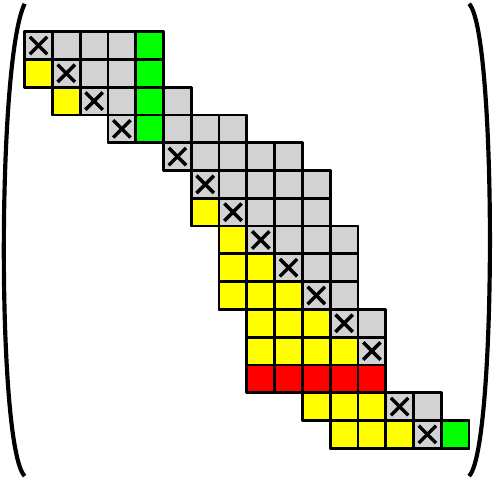}
    &&
    \includegraphics[width=0.25\textwidth,page=2]{figures/ribbon-diagonal.pdf}
    \raisebox{5em}{\Large $\longrightarrow$}
    \raisebox{1em}{\includegraphics[width=0.25\textwidth,page=3]{figures/ribbon-diagonal.pdf}}
    \end{tabular}
    \caption[fragile]{\textbf{(a)}
  The simplified ribbon diagonal (made up of $×$-marks) passing through $A$.\\
    \textbf{(b)} The idea of BuRR: When starting with an “overloaded” linear system and removing sets of rows strategically, we can often ensure that the ribbon diagonal does not collide with the ribbon border (except possibly in the beginning and the end).}
  \label{fig:simplified-diagonal}
    \label{fig:burr-idea}
\end{figure}

The idea of standard ribbon is to give the gray ribbon area an expected slope of less than $1$ such that row failures are unlikely. BuRR, as illustrated in \cref{fig:burr-idea}\,\textbf{(b)} largely avoids both failure types by using a slope bigger than $1$ but removing ranges of rows in strategic positions. Homogeneous ribbon filters, despite being the simplest approach, have the most subtle analysis as both failure types are allowed to occur. While row failures cannot cause outright construction failure, they are linked to a compromised false positive rate in a non-trivial way.
Our proofs involve mostly simple techniques as would be used in the analysis of linear probing, which is unsurprising given that \cite{DW:One-Block-per-Row:2019} has already established a connection to Robin Hood hashing. We also profit from queuing theory via results we import\pesa{reformulate for the journal version?} from \cite{DW:One-Block-per-Row:2019}.

\subsection{Further results}
We have several further results around variants of BuRR that we summarize here.
\ifconfVersion
See the full paper for detail.
\fi

Perhaps most interesting is {\bf bump-once ribbon retrieval
  (Bu${}^1$RR)}, which improves~the worst-case query time by
guaranteeing that each key can be retrieved from one out of two
layers -- its \emph{primary layer} or the next one.
The primary layer of the keys is now distributed over all layers (except for the last).
When building a layer, the keys bumped from the previous layer
are inserted into the row echelon form first.
The layer sizes have to be chosen in such a way that no bumping is
needed for these keys with high probability.
Only then
are the keys with the current layer as their primary layer
inserted -- now allowing bumping.
\ifconfVersion
\else
See \cref{ss:twoLayer} for details.
\fi

For building large retrieval data structures, {\bf parallel
  construction} is important. Doing this directly is difficult for
ribbon retrieval since there is no efficient way to parallelize
back-substitutions. However, we can partition the equation system into
parts that can be solved independently by bumping $w$ consecutive
positions. Note that this can be done transparently to the query
algorithm by using the bumping mechanism that is present anyway.
\ifconfVersion
\else
See \cref{ss:parallel} for details.
\fi

For large $r$, we accelerate queries by working with {\bf sparse bit
  patterns} that set only a small fraction of the $w$ bits in the
window used for BuRR.  In some
sense, we are covering here the middle ground between ribbon and
spatial coupling \cite{W:SpatialCoupling:2021}. Experiments indicate
that setting 8 out of 64 bits indeed speeds up queries for
$r\in\{8,16\}$ at the price of increased (but still small)
overhead. Analysis and further exploration of this middle ground may
be an interesting area for future work.





\section{Summary of Experimental Findings}
\label{sec:exp-summary}

We performed extensive experiments to evaluate our ribbon-based data structures and competitors. We summarize our findings here with details provided in
\ifconfVersion
the full paper.

\myparagraph{Implementation Details.}  We implemented BuRR in C++
using template parameters that allow us to navigate a large part of
the design space mapped in the full paper.
Input keys themselves are only hashed once to a 64-bit \emph{master-hash-code (MHC)}
that is subsequently used when further hash values are needed.
For this, fast linear congruential mapping is used.
\stwa{Removed the following sentence, because we have not introduced alternatives to these points: “When not
otherwise mentioned, our default configuration is BuRR with
left-to-right processing of buckets, aggressive right-to-left
insertion within a bucket, and threshold-based bumping.”}
The table is stored in an \emph{interleaved fashion}, i.e., it is organized as $rm/w$ words of $w$ bits each
where word $i$ represents bit $i\bmod r$ of $w$ subsequent table
entries.  This organization allows the extraction
of one retrieved bit from two adjoining machine words using
population-count instructions.  Interleaved representation is
advantageous for uses of BuRR as an AMQ data structure since a negative
query only has to extract two bits in expectation. Moreover, the
implementation directly works for any value of $r$.
The default data
structure has four layers, the last of which uses
$w'\Def\min(w,64)$ and $\eps\geq0$, where $\eps$ is increased in increments of 0.05 until no keys are bumped.  For \onebit, we
choose $t \Def \lceil -2\eps b + \sqrt{b/(1+\eps)}/2 \rceil$ and
$\eps \Def -2/3 \cdot w/(4b+w)$.  For 2-bit, parameter tuning showed that
 $\ell\Def\ceil{(0.13-\eps/2)b}, u\Def\ceil{(0.3-\eps/2)b}$, and
$\eps \Def -3/w$~work well for $w=32$; for $w \geq 64$, we use
$\ell=\ceil{(0.09-3\eps/4)b}$, $u=\ceil{(0.22-1.3\eps)b}$, and $\eps \Def -4/w$.

In addition, there is a prototypical implementation of Bu$^1$RR from
\cite{Ribbon:Arxiv:2021}. Both BuRR and Bu$^1$RR build
on the same software for ribbon solving from
\cite{Ribbon:Arxiv:2021}. 
For validation we extend the experimental setup used for Cuckoo and Xor
filters~\cite{Lemire:fastfilter:2020}, with our code and scripts available at
\href{https://github.com/lorenzhs/fastfilter_cpp}{github.com/lorenzhs/fastfilter\_cpp} and \href{https://github.com/lorenzhs/BuRR}{github.com/lorenzhs/BuRR}.

\myparagraph{Experimental Setup.}\label{exp:hw} %
All experiments were run on a machine with an AMD EPYC 7702 processor with 64
cores, a base clock speed of 2.0\,GHz, and a maximum boost clock speed of
3.35GHz.  The machine is equipped with 1\,TiB of DDR4-3200 main memory and runs
Ubuntu 20.04.  We use \texttt{clang++} 11.0 with
optimization flags \texttt{-O3 -march=native}.  During sequential experiments,
only a single core was used at any time to minimize interference.

We looked at different input sizes $n\in\set{10^6,10^7,10^8}$.  Like most
studies in this area, we first look at a {\bf sequential} workload on a powerful
processor with a considerable number of cores.  However, this seems unrealistic
since in most applications, one would not let most cores lay bare but use
them. Unless these cores have a workload with very high locality this would have
a considerable effect on the performance of the AMQs. We therefore also look at
a scenario that might be the opposite extreme to a sequential unloaded
setting. We run the benchmarks on all available hardware threads in {\bf
  parallel}.  Construction builds many AMQs of size $n$ in parallel. Queries
select AMQs randomly.
This emulates a large AMQ that is parallelized using sharding and
puts very high
load on the memory system.

\myparagraph{Experimental Results.}
\else
\cref{s:exp}.
\fi
Two preliminary remarks are in order: Firstly, since every retrieval data structure can be used as a filter but not vice versa, our experiments are for filters, which admits a larger number of competitors. Secondly, to reduce complexity (for now), our speed ranking
considers the sum of construction time per key and three query times.\footnote{%
Queries measured in three settings: Positive keys, negative keys and a mixed data set (50\,\% chance of being positive). The latter is not an average of the first two due to branch mispredictions.}

\ifconfVersion
\myparagraph{Space Overhead of BuRR}
\pgfplotscreateplotcyclelist{eps}{
  red, mark=* \\
  black, mark=* \\
  blue, mark=square \\
  red, mark=square \\
  blue, mark=asterisk \\
  red, mark=asterisk \\
}
\pgfplotscreateplotcyclelist{eps2}{
  black, mark=* \\
  black, mark=square \\
  black, mark=asterisk \\
  red, mark=* \\
  red, mark=square \\
  red, mark=asterisk \\
  blue, mark=square \\
  blue, mark=asterisk \\
}
%
\begin{figure}\centering
  \begin{tikzpicture}
    \begin{axis}[
      xlabel={Overloading factor $\eps = 1 - m/n$},
      ylabel={Fraction of empty slots $e$},
      width=\textwidth,
      height=10cm,
      ymode = log,
      xmin=-20,
      xmax=0,
      mark size=1.5pt,
      legend style={font=\small},
      legend pos=north west,
      cycle list name=eps2,
      every axis legend/.append style={nodes={left}},
      xticklabel={\pgfmathprintnumber\tick\%},
      minor tick num=1,
      ]
\addplot coordinates { (-12.5,0.00344107) (-12.1875,0.00305143) (-11.875,0.00270312) (-11.5625,0.00239274) (-11.25,0.00211286) (-10.9375,0.0018644) (-10.625,0.00164167) (-10.3125,0.00144626) (-10.0,0.00127343) (-9.6875,0.00111884) (-9.375,0.000982776) (-9.0625,0.00086344) (-8.75,0.000757241) (-8.4375,0.000665932) (-8.125,0.000585121) (-7.8125,0.000516006) (-7.5,0.000454459) (-7.1875,0.000403004) (-6.875,0.000358929) (-6.5625,0.000323967) (-6.25,0.000295019) (-5.9375,0.000276897) (-5.625,0.000267358) (-5.3125,0.000268961) (-5.0,0.000283565) (-4.6875,0.00031269) (-4.375,0.0003609) (-4.0625,0.000433411) (-3.75,0.000537686) (-3.4375,0.00067923) (-3.125,0.000868863) (-2.8125,0.00112058) (-2.5,0.0014437) (-2.1875,0.00185595) (-1.875,0.00237278) (-1.5625,0.00301205) (-1.25,0.00379161) (-0.9375,0.00472636) (-0.625,0.00582658) (-0.3125,0.00710256) (0.0,0.00855855) };
\addlegendentry{plain, $b\!=\!256$};
\addplot coordinates { (-7.84314,0.025898) (-7.64706,0.0235997) (-7.45098,0.0214507) (-7.2549,0.0194218) (-7.05882,0.0175356) (-6.86274,0.0157801) (-6.66667,0.0141471) (-6.47059,0.012635) (-6.27451,0.0112538) (-6.07843,0.0100022) (-5.88235,0.00885793) (-5.68627,0.00783061) (-5.4902,0.00691131) (-5.29412,0.0060919) (-5.09804,0.00538075) (-4.90196,0.00475378) (-4.70588,0.00423376) (-4.5098,0.00377811) (-4.31373,0.00341486) (-4.11765,0.00312841) (-3.92157,0.00290111) (-3.72549,0.00274523) (-3.52941,0.00264115) (-3.33333,0.00259433) (-3.13725,0.0025985) (-2.94118,0.00265606) (-2.7451,0.00275117) (-2.54902,0.00288103) (-2.35294,0.00306267) (-2.15686,0.00328121) (-1.96078,0.00354027) (-1.76471,0.00383912) (-1.56863,0.00418917) (-1.37255,0.00458848) (-1.17647,0.00503615) (-0.980392,0.00554291) (-0.784314,0.0061064) (-0.588235,0.00673203) (-0.392157,0.00742479) (-0.196078,0.00818777) (0.0,0.00892927) };
\addlegendentry{\onebits, $b\!=\!256$};
\addplot coordinates { (-12.5,0.0474202) (-12.1875,0.0435703) (-11.875,0.0399044) (-11.5625,0.0380378) (-11.25,0.0346695) (-10.9375,0.0329653) (-10.625,0.0298657) (-10.3125,0.0269841) (-10.0,0.0255071) (-9.6875,0.022913) (-9.375,0.0216163) (-9.0625,0.0193261) (-8.75,0.0172274) (-8.4375,0.0161603) (-8.125,0.0143429) (-7.8125,0.0134387) (-7.5,0.0118986) (-7.1875,0.0105289) (-6.875,0.00982617) (-6.5625,0.00871386) (-6.25,0.00814382) (-5.9375,0.00723434) (-5.625,0.0064863) (-5.3125,0.00606431) (-5.0,0.00549379) (-4.6875,0.00524442) (-4.375,0.00485587) (-4.0625,0.00458073) (-3.75,0.00453195) (-3.4375,0.00442091) (-3.125,0.0045383) (-2.8125,0.00461032) (-2.5,0.0048019) (-2.1875,0.00518652) (-1.875,0.00559684) (-1.5625,0.00620988) (-1.25,0.00686461) (-0.9375,0.00766728) (-0.625,0.00868672) (-0.3125,0.00978406) (0.0,0.0111192) };
\addlegendentry{2-bit, $b\!=\!256$};
\addplot coordinates { (-18.75,9.76875e-07) (-18.125,7.6e-07) (-17.5,6.325e-07) (-16.875,5.4375e-07) (-16.25,4.2875e-07) (-15.625,3.4625e-07) (-15.0,2.53125e-07) (-14.375,2.2625e-07) (-13.75,2.05e-07) (-13.125,2.0625e-07) (-12.5,1.9625e-07) (-12.1875,1.95625e-07) (-11.875,2.06875e-07) (-11.5625,2.20625e-07) (-11.25,2.40625e-07) (-10.9375,3.0e-07) (-10.625,3.5125e-07) (-10.3125,4.49375e-07) (-10.0,5.73125e-07) (-9.6875,7.60625e-07) (-9.375,1.0275e-06) (-9.0625,1.44438e-06) (-8.75,2.01938e-06) (-8.4375,2.81e-06) (-8.125,3.96e-06) (-7.8125,5.70375e-06) (-7.5,8.16e-06) (-7.1875,1.1675e-05) (-6.875,1.6445e-05) (-6.5625,2.34469e-05) (-6.25,3.30681e-05) (-5.9375,4.65306e-05) (-5.625,6.52725e-05) (-5.3125,9.10244e-05) (-5.0,0.000126447) (-4.6875,0.000173787) (-4.375,0.000238588) (-4.0625,0.000325136) (-3.75,0.000441188) (-3.4375,0.000593733) (-3.125,0.000794131) (-2.8125,0.0010523) (-2.5,0.00138282) (-2.1875,0.00180172) (-1.875,0.00232416) (-1.5625,0.00296993) (-1.25,0.00375275) (-0.9375,0.00469249) (-0.625,0.00579612) (-0.3125,0.00707529) (0.0,0.008536) };
\addlegendentry{plain, $b\!=\!128$};
\addplot coordinates { (-14.8148,0.0216292) (-14.4444,0.0187468) (-14.0741,0.0161362) (-13.7037,0.0137642) (-13.3333,0.011647) (-12.963,0.00976829) (-12.5926,0.00811006) (-12.2222,0.00665114) (-11.8519,0.00539643) (-11.4815,0.00432627) (-11.1111,0.00341416) (-10.7407,0.0026554) (-10.3704,0.00203986) (-10.0,0.00153513) (-9.62963,0.00114352) (-9.25926,0.000846284) (-8.88889,0.000623753) (-8.51852,0.000470966) (-8.14815,0.000365955) (-7.77778,0.000300624) (-7.40741,0.000265219) (-7.03704,0.000250851) (-6.66667,0.000253808) (-6.2963,0.000269763) (-5.92593,0.000338304) (-5.55556,0.000384674) (-5.18518,0.000443703) (-4.81481,0.000527281) (-4.44444,0.00063379) (-4.07407,0.000772338) (-3.7037,0.000953613) (-3.33333,0.00118595) (-2.96296,0.00148201) (-2.59259,0.00186467) (-2.22222,0.0023489) (-1.85185,0.00295963) (-1.48148,0.00372113) (-1.11111,0.00466447) (-0.740741,0.00581115) (-0.37037,0.00718342) (0.0,0.00880585) };
\addlegendentry{\onebits, $b\!=\!128$};
\addplot coordinates { (-12.5,0.000455558) (-12.1875,0.000417065) (-11.875,0.000383734) (-11.5625,0.000351142) (-11.25,0.00026694) (-10.9375,0.000246483) (-10.625,0.000227725) (-10.3125,0.000212185) (-10.0,0.000167732) (-9.6875,0.000186363) (-9.375,0.000151176) (-9.0625,0.000145274) (-8.75,0.000144576) (-8.4375,0.000142104) (-8.125,0.000132879) (-7.8125,0.000138738) (-7.5,0.000152489) (-7.1875,0.000163181) (-6.875,0.000179032) (-6.5625,0.000206453) (-6.25,0.000238142) (-5.9375,0.000271866) (-5.625,0.000324495) (-5.3125,0.000395069) (-5.0,0.000465569) (-4.6875,0.000576842) (-4.375,0.000681295) (-4.0625,0.000837634) (-3.75,0.001018) (-3.4375,0.00124162) (-3.125,0.00151732) (-2.8125,0.00187091) (-2.5,0.00226493) (-2.1875,0.00277111) (-1.875,0.00336516) (-1.5625,0.00407586) (-1.25,0.00489406) (-0.9375,0.00587375) (-0.625,0.00698788) (-0.3125,0.00823936) (0.0,0.00966829) };
\addlegendentry{2-bit, $b\!=\!128$};
\addplot coordinates { (-26.6667,0.0117626) (-26.0,0.00937658) (-25.3333,0.00835214) (-24.6667,0.00649915) (-24.0,0.00494558) (-23.3333,0.00367588) (-22.6667,0.00264974) (-22.0,0.0018548) (-21.3333,0.00124943) (-20.6667,0.000811399) (-20.0,0.000696296) (-19.3333,0.000429037) (-18.6667,0.000252317) (-18.0,0.000140941) (-17.3333,7.48113e-05) (-16.6667,3.76444e-05) (-16.0,1.90831e-05) (-15.3333,9.58312e-06) (-14.6667,9.79312e-06) (-14.0,6.16188e-06) (-13.3333,4.8075e-06) (-12.6667,4.49125e-06) (-12.0,4.57937e-06) (-11.3333,5.21813e-06) (-10.6667,6.46938e-06) (-10.0,1.06469e-05) (-9.33333,1.39481e-05) (-8.66667,1.99506e-05) (-8.0,2.9075e-05) (-7.33333,4.44138e-05) (-6.66667,6.9685e-05) (-6.0,0.000111784) (-5.33333,0.000183428) (-4.66667,0.000331454) (-4.0,0.000545313) (-3.33333,0.00090276) (-2.66667,0.0014821) (-2.0,0.00240582) (-1.33333,0.00381582) (-0.666667,0.00586993) (0.0,0.00868792) };
\addlegendentry{\onebits, $b\!=\!64$};
\addplot coordinates { (-18.75,2.169e-05) (-18.125,1.74838e-05) (-17.5,2.95006e-05) (-16.875,2.58669e-05) (-16.25,4.23537e-05) (-15.625,3.70863e-05) (-15.0,5.95669e-05) (-14.375,5.11656e-05) (-13.75,8.01362e-05) (-13.125,0.000124503) (-12.5,0.00011199) (-12.1875,0.000102417) (-11.875,0.000167561) (-11.5625,0.000154661) (-11.25,0.000144189) (-10.9375,0.000133428) (-10.625,0.00022282) (-10.3125,0.000207287) (-10.0,0.000192996) (-9.6875,0.000179489) (-9.375,0.000283045) (-9.0625,0.000264823) (-8.75,0.000255051) (-8.4375,0.000243194) (-8.125,0.00037187) (-7.8125,0.000350787) (-7.5,0.000333814) (-7.1875,0.000316779) (-6.875,0.000481061) (-6.5625,0.000476229) (-6.25,0.000465172) (-5.9375,0.000461648) (-5.625,0.000669364) (-5.3125,0.000673083) (-5.0,0.000690216) (-4.6875,0.000980121) (-4.375,0.00103607) (-4.0625,0.00109756) (-3.75,0.0012033) (-3.4375,0.00163678) (-3.125,0.00180167) (-2.8125,0.00203891) (-2.5,0.00237753) (-2.1875,0.0030698) (-1.875,0.00355425) (-1.5625,0.00415344) (-1.25,0.00489496) (-0.9375,0.00605374) (-0.625,0.00706369) (-0.3125,0.00826849) (0.0,0.00962377) };
\addlegendentry{2-bit, $b\!=\!64$};

    \end{axis}
  \end{tikzpicture}
  \caption{Fraction of empty slots for various configurations of bumped ribbon
    retrieval with $w=64$, depending on the overloading factor $\eps$.
  }
  \label{fig:epsilon64}
\end{figure}
\Cref{fig:epsilon64}
plots the fraction $e$ of empty slots of BuRR for
$w=64$ and several combinations of bucket size $b$ and different
threshold compression schemes.  Similar plots are given in the full paper for $w=32$, $w=128$, and for $w=64$
with sparse coefficients.  Note that (for an infinite number of
layers), the overhead is about
$o=e+\mu/(rb(1-e))$ where $r$ is the
number of retrieved bits and $\mu$ is the number of metadata bits per
bucket. Hence, at least when $\mu$ is constant, the overhead is a
monotonic function in $e$ and \mbox{thus minimizing $e$ also minimizes overhead.}

We see that for small $|\eps|$, $e$ decreases exponentially. For
sufficiently small $b$, $e$ can get almost arbitrarily small. For fixed
$b>w$, $e$ eventually reaches a local minimum because with
threshold-based compression, a large overload enforces large
thresholds ($>w$) and thus empty regions of buckets.  Which actual
configuration to choose depends primarily on $r$.  Roughly, for larger
$r$, more and more metadata bits (i.e., small $b$, higher resolution
of threshold values) can be invested to reduce~$e$.  For fixed~$b$
and threshold compression scheme, one can choose $\eps$ to
minimize $e$. One can often choose a larger $\eps$ to get slightly
better performance due to less bumping with little impact on $o$.
Perhaps the most delicate tuning parameters are the thresholds to use
for 2-bit and \onebit compression\ifconfVersion\else(see \cref{ss:thresholdCompression})\fi.
Indeed, in \cref{fig:epsilon64} \onebit compression has lower $e$ than
2-bit compression for $b=64$ but higher $e$ for $b=128$. We expect
that 2-bit compression could always achieve smaller $e$ than \onebit
compression, but we have not found choices for the threshold values
that always ensure this.
\ifconfVersion
\else
\Cref{tab:configs} in \cref{app:exp} summarizes key parameters
of some selected BuRR configurations.
\fi
\fi

\enlargethispage{0.35em}
\myparagraph{Ribbon yields the fastest static AMQs for overhead $\bm{< 44\%}$.} 
\begin{figure}\centering
  \includegraphics[page=1,width=\textwidth]{%
    \ifconfVersion%
    heatmap.png%
    \else%
    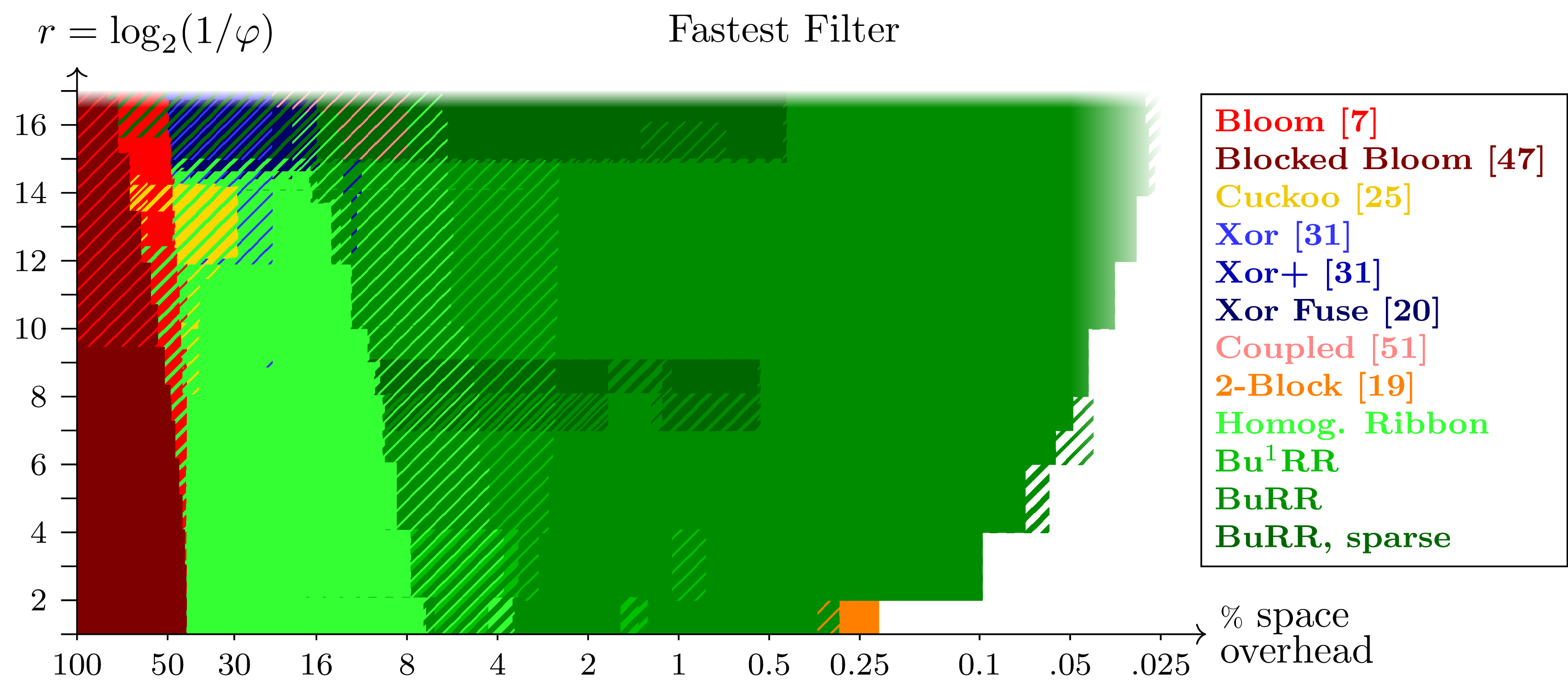%
    \fi%
  }
  \caption{\label{fig:heatmap}Fastest AMQ\pesa{todo: adapt citations once final} category for different choices of
    overhead and false-positive rate $ϕ=2^{-r}$. Shaded regions indicates
a dependency on the input type.  Ranking metric:
    construction time per key plus time for three queries, of which one is positive, one negative, and one mixed \mbox{(50\,\% chance of either).\stwa{Is this for sequential or parallel workloads?}\lhs{both, one of the diagonal shadings is for parallel}}}
\end{figure}
Consider \Cref{fig:scatter1} on page \pageref{fig:scatter1}, where
we show the tradeoff between space overhead and computation cost
for a range of AMQs for false positive rate $ϕ \approx 2^{-8}$
(i.e., $r=8$ for BuRR) and large inputs.%
\footnote{Small deviations of parameters are necessary because not all
  filters support arbitrary parameter choices. Also note that
  different filters have different functionality: (Blocked) Bloom
  allows dynamic insertion, Cuckoo, Morton and Quotient additionally
  allow deletion and counting. Xor \cite{BPZ:Practical:2013,DW:DensePeelable:2019,GL:XorFilters:2020}, Coupled \cite{W:SpatialCoupling:2021},
  LMSS \cite{LMSS:Efficient_Erasure:2001} and all ribbon variants are static retrieval data structures.}
In the parallel workload on the right all cores access many AMQs randomly.

Only three AMQs have Pareto-optimal configurations for this case: BuRR for
space overhead below 5\,\% (actually achieving between 1.4\,\% and 0.2\,\%
for a narrow time range of 830--890\,ns),
homogeneous ribbon for space overhead below 44\,\%
(actually achieving between 20\,\% and 10\,\% for a narrow time range
580--660\,ns), and
\emph{blocked Bloom filters} \cite{Putze:Efficient-Bloom-Filters:2009}
with time around 400\,ns at the price of space overhead of around
50\,\%. All other tried AMQs are dominated by homogeneous ribbon and
BuRR. Somewhat surprisingly, this even includes plain Bloom filters \cite{B:Space:1970}
which are slow because they incur several cache faults for each
insertion and positive query. Since plain Bloom filters are
extensively used in practice (often in cases where a static interface
suffices), we conclude that homogeneous ribbon and BuRR are fast enough
for a wide range of applications, opening the way for substantial
space savings in those settings.
BuRR is at least twice as fast as all tried retrieval data structures.%
\ifconfVersion
\footnote{FiRe \cite{Sanders:Retrieval-FingerPrinting:2014} is likely
  to be faster but has two orders of magnitude higher overhead; see the full paper for more details.}
\else
\footnote{FiRe \cite{Sanders:Retrieval-FingerPrinting:2014} is likely
  to be faster but has two orders of magnitude higher overhead; see \cref{s:exp} for more details.}
\fi
%
The filter data structures that support counting and deletion (Cuckoo filters \cite{FAK:CuckooFilterBetter:2013} and the related Morton filters \cite{BJ:MortonFilters:2020} as well as the quotient filters QF \cite{MSW:ConcQuotientFilter:2020} and CQF \cite{BFJKKMMSS:QuotientFilters:2012})
are slower than the best static AMQs.

The situation changes slightly when going to a sequential workload with large inputs as shown on the left of \Cref{fig:scatter1}.%
 Blocked Bloom and BuRR are still the best filters for large and small overhead,
respectively.  But now homogeneous ribbon and (variants of) the hypergraph peeling based Xor filters \cite{GL:XorFilters:2020,DW:DensePeelable:2019} share the
middle-ground of the Pareto curve between them. Also, plain Bloom
filters are almost dominated by Xor filters with half the overhead.  The reason is
that modern CPUs can handle several main memory accesses
in parallel. This is very helpful for Bloom and Xor, whose queries do little else than
computing the logical (x)or of a small number of randomly chosen memory cells.
Nevertheless, the faster variants of BuRR are only moderately slower than
Bloom and Xor filters while having at least an order of magnitude smaller overheads.

\myparagraph{Further Results.} Other claims supported by our data are:
\begin{itemize}
  \setlength{\itemsep}{0pt}
  • \textbf{Good ribbon widths are $w = 32$ and $w = 64$.} Ribbon widths as small as
$w=16$ can achieve small overhead but at least on 64-bit
processors, $w\in\{32,64\}$ seems most sensible. The case $w=32$ is only 15--20\,\%
faster than $w=64$
while the latter has about four times less overhead.
Thus the case $w=64$ seems the most favorable one. This confirms that the
linear dependence of the construction time on $w$ is to some extent hidden behind the cache faults which are similar for both values of $w$ (this is in line with our analysis in the external memory model).
  • \textbf{Bu$\bm{{}^1}$RR is slower than BuRR} by about 20\,\%, which may be a reasonable price
  for better worst-case query time in some real-time applications.%
\ifconfVersion
\footnote{Part of the performance difference might be due to implementation details; see the full paper.}
\else  
\footnote{Part of the performance difference might be due to implementation details; see \cref{ss:twoLayer}.}
\fi
  • \textbf{The $\bm{1^+}$-bit variant of BuRR is smaller but slower}
than the variant with 2-bit metadata per bucket, as expected, though not by a large margin.
• \textbf{Smaller inputs and smaller $\bm{r}$ change little.}
For inputs that fit into cache, the Pareto
curve is still dominated by blocked Bloom, homogeneous ribbon, and BuRR,
but the performance penalty for achieving low overhead increases.
For $r = 1$ we have data for additional competitors. GOV \cite{GOV:retrieval-Compressed:2020}, which relies on structured Gaussian elimination,
is several times slower than BuRR and exhibits an unfavorable time--overhead tradeoff.  2-block \cite{DW:Retrieval-log-extra-bits:2019} uses two small dense blocks of nonzeroes and
can achieve very small overhead at the cost of prohibitively expensive construction.
• \textbf{For large $\bm{r}$, Xor filters and Cuckoo filters
come into play.}
\Cref{fig:heatmap} shows the fastest AMQ depending on overhead and false positive rate $ϕ = 2^{-r}$ up to $r = 16$. While blocked Bloom, homogeneous ribbon, and BuRR cover most of the area,
they lose ground for large $r$ because their running time depends on $r$. Here Xor filters and Cuckoo filters make an appearance.
• \textbf{Bloom filters and Ribbon filters are fast for \emph{negative queries}} where, on average, only two bits need to be retrieved to prove that a key is not in the set. This improves the relative standing  of plain Bloom filters on large and parallel workloads with mostly negative queries. 
• \textbf{Xor filters \cite{GL:XorFilters:2020} and Coupled \cite{W:SpatialCoupling:2021} have fast queries} since they can exploit parallelism in memory accesses. They suffer, however, from slow construction on large sequential inputs due to poor locality, and exhibit poor query performance when accessed from many threads in parallel.  For small $n$, large $r$, and overhead between 8\,\% and 20\,\%, Coupled becomes the fastest AMQ.
\end{itemize}

\section{Related Results and Techniques}
\label{s:related}

We now take the time to review some related work on retrieval including all approaches listed in \cref{tab:retrieval-comparison}.

\myparagraph{Related Problems.}
An important application of retrieval besides AMQs is encoding perfect \emph{hash functions} (PHF), i.e.\ an injective function $p: S → [(1+ε)|S|]$ for given $S ⊆ \U$. Objectives for $p$ are compact encoding, fast evaluation and small $ε$. 
Consider a result from cuckoo hashing \cite{FPSS:Space_Efficient:2005,FP:Sharp:2012,L:A_New_Approach:2012}, namely that given four hash functions $h₁,h₂,h₃,h₄ : S → [1.024|S|]$ there exists, with high probability, a choice function $f : S → [4]$ such that $x ↦ h_{f(x)}(x)$ is injective.
A $2$-bit retrieval data structure for $f$ therefore gives rise to a perfect hash function~\cite{BPZ:Practical:2013}, see also \cite{CKRT:The_Bloomier:2004}.
Retrieval data structures can also be used to directly store compact names of
objects, e.g., in column-oriented databases
\cite{Sanders:Retrieval-FingerPrinting:2014}. This takes more space
than perfect hashing but allows to encode the ordering of the keys
into the names.

In retrieval for AMQs and PHFs the stored values $f(x) ∈ \{0,1\}^r$ are uniformly random. However, some authors consider applications where $f(x)$ has a skewed distribution and the overhead of the retrieval data structure is measured with respect to the \emph{$0$-th order empirical entropy} of $f$ \cite{HKP:CompressedFunction:2009,BelazzouguiV13,GOV:retrieval-Compressed:2020}.
Note that once we can do 1-bit retrieval with low overhead, we can use that to store data with
prefix-free \emph{variable-bit-length encoding} (e.g. Huffman or Golomb codes).
We can store the $k$-th bit of $f(x)$ as
data to be retrieved for the input tuple $(x,k)$. This can be further
improved by storing $R$ 1-bit retrieval data structures where $R = \max_{x ∈ S}|f(x)|$ \cite{HKP:CompressedFunction:2009,BelazzouguiV13,GOV:retrieval-Compressed:2020}.
By interleaving these data structures, one can make queries almost as fast
as in the case of fixed $r$.

\myparagraph{More Linear Algebra based approaches.}
It has long been known that some matrices with random entries are likely to have full rank, even when sparse \cite{Cooper:Rank-Of-Random-Matrices:2000} and density thresholds for random $k$-XORSAT formulas to be solvable – either at all \cite{DM:The_3-XORSAT:2002,DGMMPR:Tight:2010} or with a linear time peeling algorithm \cite{Molloy05:Cores-in-random-hypergraphs,Luczak:A-simple-solution} – have been determined.

Building on such knowledge, a solution to the retrieval problem was identified by Botelho, Pagh and Ziviani \cite{BPZ:Simple:2007,B:Near-Optimal,BPZ:Practical:2013} in the context of perfect hashing. In our terminology, their rows $\vec{h}(x)$ contain $3$ random $1$-entries per key which makes $AZ = \bm{b}$ solvable with peeling, provided $m > 1.22n$.

Several works develop the idea from \cite{BPZ:Practical:2013}. In \cite{Vigna:Fast-Scalable-Construction-of-Functions:2016,GOV:retrieval-Compressed:2020} only $m > 1.089n$ is needed in principle (or $m > 1.0238n$ for $|\vec{h}(x)| = 4$) but a Gaussian solver has to be used. More recently in the \emph{spatial coupling} approach
\cite{W:SpatialCoupling:2021} $\svec{h}(x)$ has $k$ random $1$-entries within a small window, achieving space overhead $\approx e^{-k}$ while still allowing a peeling solver.
With some squinting, a class of linear erasure correcting codes from \cite{LMSS:Efficient_Erasure:2001} can be interpreted as a retrieval data structure of a similar vein, where $|\vec{h}(x)| ∈ \{5,…,k\}$ is random with expectation $\O(\log k)$.

Two recent approaches also based on sparse matrix solving are \cite{DW:Retrieval-log-extra-bits:2019,DW:One-Block-per-Row:2019} where $\vec{h}(x)$ contains two blocks or one block of random bits. Our ribbon approach builds on the latter.

We end this section with a discussion of seemingly promising techniques and give reasons why we choose not to use them in this paper.
\ifconfVersion
Some more details are also discussed in the experimental section of the full paper.
\else
Some more details on methods used in the experiments are also discussed in \cref{s:exp}.
\fi

\myparagraph{Shards.}
A widely used technique in hashing-based data structures is to use a
splitting hash function to first divide the input set into many much
smaller sets (shards, buckets, chunks, bins,\ldots) that can then be
handled separately
\cite{GOV:retrieval-Compressed:2020,BBOVV:Cache-Oblivious-Peeling:14,DW:Retrieval-log-extra-bits:2019,DW:One-Block-per-Row:2019,ADR:Experimental:2009,P:An_Optimal:2009}. This
incurs only linear time overhead during preprocessing and constant
time overhead during a query, and allows to limit the impact of
superlinear cost of further processing to the size of the shard.  Even
to ribbon, this could be used in multiple ways.  For example, by
statically splitting the table into pieces of size $n^{ε}$ for
standard ribbon, one can achieve space overhead
$ε+\Oh{n^{-ε}}$, preprocessing time $\Oh{n/ε}$,
and query time $\Oh{r}$ \cite{DW:One-Block-per-Row:2019}. This is, however, underwhelming on reflection. Before arriving at the current form of BuRR, we
designed several variants based on sharding but never achieved better
overhead than $\Om{1/w}$. The current overhead of $\Oh{\log w/w^2}$
comes from using the splitting technique in a ``soft'' way -- keys
are assigned to buckets for the purpose of defining bumping
information but the ribbon solver may implicitly allocate them to
subsequent buckets.

\myparagraph{Table lookup.}
The first asymptotically efficient succinct retrieval data structure
we are aware of \cite{P:An_Optimal:2009} uses two levels of sharding
to obtain very small shards of size $\Oh{\sqrt{\log n}}$ with small
asymptotic overhead.  It then uses dense random matrices per shard to
obtain per-shard retrieval data structures.  This can be done in
constant time per shard by tabulating the solutions of all possible
matrices. This leads to a multiplicative overhead of $\Oh{\log\log
  n/\sqrt{\log n}}$. 
%
Belazzougui and Venturini \cite{BelazzouguiV13} use slightly larger
shards of size $\Oh{(1+\log\log(n)/r)\log\log(n)/\log n}$. Using
carefully designed random lookup tables they show that linear
construction time, constant lookup time, and overhead $\Oh{(\log\log
  n)^2/\log n}$ is possible. We discussed on page \pageref{page:overhead-calculation-porat} why we suspect large overhead for \cite{P:An_Optimal:2009} and \cite{BelazzouguiV13} in practice.


In general, lookup tables are often problematic for compressed data
structures in practice -- they cause additional space overhead and
cache faults. Even if the table is small and fits into cache, this may
yield efficient benchmarks but can still cause cache faults in
practical workloads where the data structure is only a small part in a
large software system with a large working set.

\myparagraph{Cascaded bumping.}
Hash tables consisting of multiple shrinking
levels are also used in \emph{multilevel adaptive hashing}
\cite{BK:Multilevel:1990} and \emph{filter hashing} \cite{FPSS:Space_Efficient:2005}.
While similar to BuRR in this sense, they do not maintain bumping information. This is fine for storing key-value pairs because all levels can be searched for a requested key. But it is unclear how the idea would work in the context of retrieval, i.e.\ without storing keys.



\section{Ribbon Insertions}\label{s:ribbonRetrieval}

In this section we enhance the \sgauss construction for retrieval from \cite{DW:One-Block-per-Row:2019} with a new solver called \textbf{R}apid \textbf{I}ncremental \textbf{B}oolean \textbf{B}anding \textbf{ON} the fly (\emph{Ribbon}), which is the basis of all ribbon variants considered later.

\myparagraph{The \sgauss construction.} For a parameter $w \in\nat$ that we call the \emph{ribbon width}, the vector $\svec{h}(x) ∈ \{0,1\}^m$ is given by a random \emph{starting position} $s(x) ∈ [m-w-1]$ and a random \emph{coefficient vector} $c(x) ∈ \{0,1\}^w$ as $\svec{h}(x) = 0^{s(x)-1}c(x)0^{m-s(x)-w+1}$. Note that even though $m$-bit vectors like $\svec{h}(x)$ are used to simplify mathematical discussion, such vectors can be represented using $\log(m)+w$ bits.
The matrix $A$ with rows $(\svec{h}(x))_{x ∈ S}$ sorted by $s(x)$ has all of its $1$-entries in a “ribbon” of width $w$ that randomly passes through the matrix from the top left to the bottom right, as in \cref{fig:ribbon-matrix} \textbf{(a)}.
The authors show:
\begin{theorem}[{\cite[Thm 2]{DW:One-Block-per-Row:2019}}]
    \label{thm:sgauss}
    For any constant $0 < ε < \frac{1}{2}$ and $\frac{n}{m} = 1-ε$ there is a suitable choice for $w = Θ(\smash{\frac{\log n}{ε}})$ such that with high probability the linear system $(\svec{h}(x)·Z = f(x))_{x ∈ S}$ is solvable for any $r ∈ ℕ$ and any $f : S → \{0,1\}^r$. Moreover, after sorting $(\svec{h}(x))_{x ∈ S}$ by $s(x)$, Gaussian elimination can compute a solution $Z$ in expected time $\O(n/ε²)$.
\end{theorem}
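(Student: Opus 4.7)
The plan is to exploit the ribbon structure of $A$: after sorting the rows by starting position $s(x)$, every nonzero of $A$ lies inside a band of width $w$ around the diagonal, and (by an easy induction on the elimination order) Gaussian elimination performed top-to-bottom without row swaps produces no fill-in outside this band, because XORing any already-processed row $j$ into a later row $i$ leaves all resulting nonzeros inside the $w$-wide window anchored at $s(x_i)$. The algorithm thus reduces to a left-to-right column sweep: at column $j$ we pick any row whose current leftmost $1$ is at $j$, use it as pivot, and XOR it into the other rows whose window covers $j$. I would split the analysis into (a) solvability and (b) running time.

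For solvability, I would decouple a combinatorial statement about the starting positions $\{s(x)\}$ from a linear-algebraic statement about the random coefficients $\{c(x)\}$. For the combinatorial step, define $N_j := |\{x : s(x) \in [j-w+1, j]\}|$; each such window has expected load $(1-\varepsilon)w$ and slack $\varepsilon w$. Modelled as an M/D/1-style queue whose arrivals are keys and whose departures are pivots, failure to find a pivot at column $j$ corresponds to a busy period of length $\geq w$, whose probability decays as $\exp(-\Omega(\varepsilon w))$ via a sharp tail for queue length (a naive Chernoff per window gives only the weaker $\exp(-\Omega(\varepsilon^2 w))$). A union bound over the $O(m)$ windows bounds congestion failure by $n^{-\Omega(1)}$ for $w = C\log(n)/\varepsilon$ with $C$ large. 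Conditioned on congestion-free supports, a row is zeroed during elimination only with probability at most $2^{-\Omega(w)}$ over the random coefficients $c(x) \in \{0,1\}^w$, and another union bound over $n$ rows preserves the high-probability conclusion.

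For the running time, sorting $n$ integers in $[m]$ takes $O(n)$ by radix sort on $O(\log n)$-bit keys. In the elimination sweep, each row XOR touches a window of $w$ bits and therefore costs $O(w/W)$ word operations on a word-RAM with word width $W = \Omega(\log n)$; a charging argument (each row participates as a receiver in at most $O(w)$ XORs before becoming a pivot or being zeroed, and the expected load per pivot column is $O(w)$) together with $w = O(\log n/\varepsilon)$ assembles into the claimed $O(n/\varepsilon^2)$ expected bound, and back-substitution to recover $Z$ is dominated by the same count. The main obstacle I expect is the sharp exponential tail on the queue busy period: the naive Chernoff on a single sliding window is too weak by a factor $\varepsilon$ in the exponent and would force $w$ to grow as $\log(n)/\varepsilon^2$. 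The right remedy, as carried out in \cite{DW:One-Block-per-Row:2019}, is to compare the overload process with a random walk with drift $-\varepsilon$ and invoke known linear-in-$\varepsilon w$ tails on its maximum excursion, and then to stitch the combinatorial and linear-algebraic high-probability events together so that they hold simultaneously.
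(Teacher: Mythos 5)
This theorem is cited from \cite{DW:One-Block-per-Row:2019} rather than proved here, so strictly there is no ``paper's own proof''; but the present paper reuses exactly the same machinery in Lemma~\ref{lem:standard-ribbon} and its application to Theorem~\ref{thm:standard-ribbon}, and your outline tracks it closely. You correctly observe that elimination stays inside the band, correctly separate the combinatorics of the starting positions from the linear algebra of the coefficients, and — importantly — correctly diagnose that a per-window Chernoff bound only gives $\exp(-\Omega(\varepsilon^2 w))$ and that the needed $\exp(-\Omega(\varepsilon w))$ tail comes from viewing the overload as a reflected random walk with drift $-\varepsilon$ (equivalently, the M/D/1 queue in the paper's Lemma~\ref{lem:standard-ribbon}\textbf{(b)}). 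That is the real crux and you nailed it. One slip in the algorithm description: at column $j$ the pivot is XORed only into the other rows whose \emph{current leftmost $1$ is at $j$}, not into all rows whose window covers $j$ — the latter would introduce fill-in for rows that currently have a $0$ at $j$.

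The running-time argument, however, has a genuine gap. You charge ``each row participates in at most $O(w)$ XORs'' and ``the expected load per pivot column is $O(w)$'', and combine with $w = O(\log n /\varepsilon)$. That gives $O(nw)$ row operations, each costing $O(w/W)$ word operations on a word-RAM with $W=\Theta(\log n)$, i.e.\ $O(nw^2/W)=O(n\log n/\varepsilon^2)$ — a $\log n$ factor too large. The bound $O(w)$ per row is only the trivial worst case; what the theorem actually needs is that the \emph{expected} number of rows eliminated at each pivot column is $O(1/\varepsilon)$, not $O(w)$. This is precisely the \emph{expectation} half of the queueing analysis (the paper's Lemma~\ref{lem:standard-ribbon}\textbf{(a)}, $\E[h_i]=O(1/\varepsilon)$, fed into Lemma~\ref{lem:time-bound-insertions}): total expected row operations $O(n/\varepsilon)$, times $O(w/W)=O(1/\varepsilon)$ per operation, gives $O(n/\varepsilon^2)$. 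So you need the random walk/queue not only for its tail (solvability) but also for its stationary mean (running time); the worst-case charging you propose cannot recover the claimed bound.
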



\myparagraph{Boolean banding on the fly.} For ribbon retrieval we use the same hash function $\svec{h}$ as in \sgauss except that we force coefficient vectors $c(x)$ to start with $1$, which slightly improves presentation and prevents construction failures caused by single keys with $c(x) = 0^w$.
The main difference lies in how we solve the linear system. The \emph{insertion phase} maintains a system $M$ of linear equations in row echelon form using on-the-fly Gaussian elimination~\cite{BGV:OnTheFlyGauss:2010}. This system is of the form shown in \cref{fig:ribbon-matrix} \textbf{(b)} and has $m$ rows that we also call \emph{slots}.
The $i$-th slot contains a $w$-bit vector $c_i ∈ \{0,1\}^w$ and $b_i ∈ \{0,1\}^r$. Logically, the $i$-th slot is either empty ($c_i = 0^w$) or specifies a linear equation $c_i · Z_{[i,i+w)} = b_i$ where $c_i$ starts with a $1$.
With $Z_{[i,i+w)} ∈ \{0,1\}^{w × r}$ we refer to rows $i,…,i+w-1$ of $Z ∈ \{0,1\}^{w × r}$. We ensure $c_i · Z_{[i,i+w)}$ is well-defined even when $i + w - 1 > m$ with the invariant that $c_i$ never selects ``out of bounds'' rows of $Z$.

We consider the equations $\svec{h}(x)·Z = f(x)$ for $x ∈ S$ one by one, in arbitrary order, and try to integrate each into $M$ using \cref{algo:ribbon-insertion}, which we explain now.

\SetKwFor{Loop}{loop}{}{again}
\SetKwData{shift}{shift}
\SetKwFunction{ctz}{ctz}
\begin{algorithm}[h]
$(i,c,b) ← (s(x),c(x),f(x))$\;
 \Loop{}{
  \If(\tcp*[h]{slot $i$ of $M$ is empty}){$M.c[i] = 0$}{
   $(M.c[i],M.b[i]) ← (c,b)$\;
   \Return \textsc{success}
  }
  $(c,b) ← (c ⊕ M.c[i],b ⊕ M.b[i])$\;
  \If{$c = 0$}{
   \lIf{$b = 0$}{ \Return \textsc{redundant}} 
   \lElse{ \Return \textsc{failure}} 
  }
  $j ← \mathrm{findFirstSet}(c)$ \tcp{a.k.a. BitScanForward}
  $i ← i + j$\;
  $c ← c >> j$ \tcp{logical shift last toward first}
 }
 \caption[fragile]{\hbox{Adding a key's equation to the linear system $M$.}}
 \label{algo:ribbon-insertion}
\end{algorithm}

A key's equation may be modified several times before it can be added to $M$, but a loop invariant is that its form is
\begin{equation}
    c · Z_{[i,i+w)} = b \text{ for some $i ∈ [m]$, $c ∈ 1∘\{0,1\}^{w-1}$, $b ∈ \{0,1\}^r$.}\label{eq:form-of-equation}
\end{equation}
The initial equation $\svec{h}(x)·Z = f(x)$ of key $x ∈ S$ has this form with $i = s(x)$, $c = c(x)$ and $b = f(x)$.
\begin{description}
    •[Case 1:] In the simplest case, slot $i$ of $M$ is empty and we can store \cref{eq:form-of-equation} in it.
    •[Case 2:] Otherwise slot $i$ of $M$ is occupied by an equation $c_i · Z_{[i,i+w)} = b_i$.  We perform a \emph{row operation} to obtain the new equation
    \begin{equation}
        c' · Z_{[i,i+w)} = b' \text{ with $c' = c ⊕ c_i$ and $b' = b ⊕ b_i$,}\label{eq:form-of-eq-modified}
    \end{equation}
    which, in the presence of the equation in slot $i$ of $M$, puts the same constraint on $Z$ as \cref{eq:form-of-equation}.
    Both $c$ and $c_i$ start with $1$, so $c'$ starts with $0$. We consider the following sub-cases.
    \begin{description}
        •[Case 2.1:] $c' = 0^w$ and $b' = 0^r$.
        The equation is void and can be ignored. This happens when the original equation of $x$ is implied by equations previously added to $M$.
        •[Case 2.2:] $c' = 0^w$ and $b' ≠ 0^r$.
        The equation is unsatisfiable. This happens when the key's original equation is inconsistent with equations previously added to $M$.
        •[Case 2.3:] $c'$ starts with $j > 0$ zeroes followed by a $1$. Then \cref{eq:form-of-eq-modified} can be rewritten as $c'' · Z_{[i',i'+w)} = b'$ where $i' = i + j$ and $c''$ is obtained from $c'$ by discarding the $j$ leading zeroes of $c$ and appending $j$ trailing zeroes.%
        \footnote{Note that in the bit-shift of \cref{algo:ribbon-insertion} the roles of “leading” and “trailing” may seem reversed because the least-significant “first” bit of a word is conventionally thought of as the “right-most” bit.}
    \end{description}
\end{description}
Termination is guaranteed since $i$ increases with each loop iteration.

\myparagraph{“On-the-fly” and “incremental.”} The insertion phase of Ribbon is \emph{on-the-fly} \cite{BGV:OnTheFlyGauss:2010}, i.e.\ maintains a row echelon form as keys arrive. This allows us to determine the longest prefix $(x₁,…,x_n)$ of a sequence $S = (x₁,x₂,x₃,…)$ of keys for which construction succeeds: Simply insert keys until the first failure. We say the insertion phase is \emph{incremental} since an insertion may lead to a new row in $M$ but does not modify existing rows. This allows us to easily undo the most recent successful insertions by clearing the slots of $M$ that were filled last. These properties are not shared by \sgauss and will be exploited by BuRR in \cref{sec:bumped-ribbon}.

\ifdefined\pagelimithack\relax\else
\myparagraph{Efficiency.} Running times of \sgauss and Ribbon are tied in $\O$-notation. However, Ribbon improves upon \sgauss in constant factors for the following reasons:
\begin{itemize}
    • Ribbon need not pre-sort the keys by $s(x)$.
    • During construction, \sgauss explicitly maintains for each row the index of the left-most~$1$. Ribbon represents these implicitly, saving significant amounts memory.
    • \sgauss performs some number $D$ of elimination steps, which, depending on some bit, turn out to be xor-operations or no-ops. Ribbon on the other hand performs roughly $D/2$ bit shifts and $D/2$ (unconditional) xor operations. Though the details are complicated, intuition on branching complexity seems to favor Ribbon.
\end{itemize}
\fi

\def\Obs#1{{\small\textbf{\color{darkgray}\sffamily(\kern-0.5ptO#1\kern-0.5pt)}}}
\def\Assump#1{{\small\textbf{\color{darkgray}\sffamily(\kern-0.5ptM#1\kern-0.5pt)}}}

\section{Analysis of Ribbon Insertions}
\label{sec:ribbon-analysis}

Given a set $S$ of $n$ keys we wish to analyze the process of inserting these keys into the system $M$ using \cref{algo:ribbon-insertion}. In particular, we are interested in the number of successful and failed insertions, the set of occupied slots in $M$ and the total running time.
Recall that $A ∈ \{0,1\}^{n × m}$ contains the rows $\svec{h}(x)$ for $x ∈ S$ sorted by $s(x)$, see \cref{fig:ribbon-matrix} \textbf{(a)}. Our analysis considers the \emph{ribbon diagonal}, which is a line passing through $A$. We begin with an instructive simplification.

\subsection{A Warm Up: The Simplified Ribbon Diagonal}
\label{sec:intuition}

We make the following two assumptions:
\begin{description}
	\item[\Assump{1}] Keys are inserted in the order they appear in $A$ (sorted by $s(x)$). This ensures that the insertion of each key $x ∈ S$ fails or succeeds within the first $w$ steps because no $1$-entries can exist in $M$ beyond column $s(x)+w-1$.
	\item[\Assump{2}] Inserting $x ∈ S$ fills the first free slot $i ∈ [s(x),s(x)+w-1]$ unless all of these slots are occupied, in which case the insertion fails. This ignores the role of $c(x)$.
\end{description}
\Cref{fig:simplified-diagonal} visualizes the process with an $×$ in position $(j,i)$ if the insertion of the $j$-th key fills slot $i$ of $M$. These points approximately trace a diagonal line from top left to bottom right and we call it the \emph{simplified ribbon diagonal} $\dsimp$. We make the following observations:
\begin{description}
	\item[\Obs{1}] If $\dsimp$ were to cross the bottom border of the ribbon, it skips a column (shown in green). Column $i$ is skipped if and only if slot $i$ of $M$ remains empty.
	\item[\Obs{2}] If $d$ were to cross the right border of the ribbon, it skips a row (shown in red). Row $j$ is skipped if and only if the $j$-th key is not inserted successfully.
	\item[\Obs{3}] The area enclosed between $d$ and the left border of the ribbon (shown in yellow) is an upper bound on the number of row operations performed during successful insertions.
\end{description}


\subsection{The Ribbon Diagonal}
\label{sec:ribbon-diagonal}

A formal analysis can salvage much of the intuition from the simplified model. First, we show that \Assump{1}, though not \Assump{2}, can be made without loss of generality. For an adjusted definition of the ribbon diagonal, we then prove probabilistic versions of \Obs{1}, \Obs{2} and \Obs{3}. The following notation will be useful.
\begin{itemize}
	• $S_i = \{x ∈ S \mid s(x) ≤ i\}$ and $s_i = |S_i|$, for $i ∈ [m]$.
	• $S' ⊆ S$ is the set of keys not inserted successfully. Moreover, $S'_i = S_i ∩ S'$ and $s_i' = |S'_i|$.
	• $r_i$, for $i ∈ [m]$, is the rank of the first $i$ columns of $A$.
	• $P_M$ is the set of slots of $M$ that end up being filled.
\end{itemize}
\subparagraph{On \Assump{1}: The order of keys is irrelevant.}
Since $M$ arises from $A$ by row operations, which do not affect ranks of sets of columns, we conclude that $r_i$ is the rank of the first $i$ columns of $M$, regardless of the order in which keys are handled. From the form of $M$ (see \cref{fig:ribbon-matrix} \textbf{(b)}) it is clear that $i ∈ P_M ⇔ r_i = r_{i-1}+1$. Therefore, the set $P_M$ and thus the number $n-|P_M| = |S'|$ of failed insertions is also invariant.

Assuming all insertions are successful, the number of row operations performed for key $x$ is at most the distance of $s(x)$ to the slot $i(x) ∈ P_M$ that is filled. An invariant upper bound $Δ$ on the number of row operations, which are the dominating contribution to construction time, is then
\begin{equation}
	Δ := \sum_{x ∈ S} (i(x) - s(x)) = \sum_{i ∈ P_M} i - \sum_{x ∈ S}s(x).\label{eq:row-op-upper-bound}
\end{equation}
Except for the time related to failed insertions, which we have to bound separately, we can derive everything we want from $S$ and the invariants $P_M$, $|S'|$. We can therefore assume \Assump{1}.

\subparagraph{Definition and properties of $d$.}
Given \Assump{1}, we formally define the \emph{ribbon diagonal} $d$ as the following set of matrix positions in $A$.
\[d = \{(d_i,i) \mid i ∈ [m]\} \text{ where } d_i = r_i + s'_{i-w+1}.\]
It is useful to imagine the “default case” to be $r_{i} = r_{i-1} + 1$ and $s'_{i} = s'_{i-1}$. We then have $d_i = d_{i-1}+1$ and the ribbon diagonal indeed moves {diagonally down and to the right}. An {empty slot $i ∉ P_M$ correspond to a right-shift} (due to $r_{i} = r_{i-1}$) and a {failed insertion of a key with $s(x) = i-w+1$ correspond to a down-shift} (due to $s'_{i-w+1} > s'_{i-w}$).

Let us first check that $d$ is actually within the ribbon. More precisely:
\begin{lemma}
	\label{lem:diagonal-in-ribbon}
	For any $i ∈ [m]$, $d_i$ is not below the \emph{bottom ribbon border} $s_i$ and at most one position above the \emph{top ribbon border} $s_{i-w}+1$.%
\end{lemma}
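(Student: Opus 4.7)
The plan is to rewrite the claim as the two-sided inequality $s_{i-w} \leq d_i \leq s_i$ (using $d_i = r_i + s'_{i-w+1}$) and prove each bound by a direct counting argument that tracks where successful insertions land. Throughout I would work under assumption \Assump{1} (keys inserted in non-decreasing order of $s(x)$), which is legitimate because the paragraphs just above established that $P_M$ (and hence each $r_i$) as well as $S'$ (and hence each $s'_i$) are invariant under the processing order. Under \Assump{1}, a key $x$ with starting position $s$ is processed only after all keys with smaller starting position, and on its turn either fails within $w$ loop iterations or fills a unique slot in $[s, s+w-1]$.

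For the upper bound $d_i \leq s_i$, I would charge each of the $r_i$ filled slots in $[1,i]$ to the unique successful key $x$ that fills it; that key necessarily satisfies $s(x) \leq i$. This gives an injection from $P_M \cap [1,i]$ into $S_i \setminus S'$, hence $r_i \leq s_i - s'_i$, and the monotonicity $s'_{i-w+1} \leq s'_i$ then yields $d_i = r_i + s'_{i-w+1} \leq s_i$.

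For the lower bound $d_i \geq s_{i-w}$, I would argue dually: every key $x$ with $s(x) \leq i-w+1$ has its entire support in columns $[1,i]$, so if $x$ succeeds it fills a slot in $[1,i]$ and contributes to $r_i$, while if $x$ fails it contributes to $s'_{i-w+1}$. Each such key is therefore counted exactly once by $d_i$, giving the (in fact stronger) bound $d_i \geq s_{i-w+1} \geq s_{i-w}$ and completing the proof.

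I do not anticipate a genuine obstacle here; the argument is elementary. The main thing to be careful about is the off-by-$w$ bookkeeping between the three indices $i$, $i-w+1$, and $i-w$ that appear in $r_i$, $s'_{i-w+1}$, and $s_{i-w}$, and invoking the invariance result cleanly so that the reduction to \Assump{1} is justified rather than tacitly assumed.
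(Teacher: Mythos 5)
Your proof is correct and is essentially the same counting argument the paper uses: for the upper bound, successful insertions that fill slots in $[1,i]$ come from distinct keys of $S_i$ disjoint from $S'$, while for the lower bound, every key with $s(x) \leq i-w+1$ either fills a slot in $[1,i]$ (contributing to $r_i$) or fails (contributing to $s'_{i-w+1}$), giving $d_i \geq s_{i-w+1} \geq s_{i-w}$. One small correction to your framing: the paper establishes invariance of $P_M$ and of the \emph{cardinality} $|S'|$, not of the set $S'$ itself (which can depend on the order in which tied keys are processed), so $s'_i$ is not order-invariant; this does not matter here because the diagonal $d$ is defined under \Assump{1} and your argument, like the paper's, works entirely in that setting.
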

\begin{proof}
	The first claim holds because
	\[d_i = r_i + s'_{i-w+1} ≤ r_i + s'_i = |P_M ∩ [1,i]| + s'_i ≤ s_i\]
	where the last step uses that each key in $S_i$ can fail to be inserted or fill a slot in $M$, but not both. The latter is true because
	\[d_i = |P_M ∩ [1,i]| + s'_{i-w+1} ≥ (s_{i-w+1}-s'_{i-w+1}) + s'_{i-w+1} = s_{i-w+1} ≥ s_{i-w}.\]
	where the first “$≥$” uses that the first $s_{i-w+1}$ rows cause $s_{i-w+1}-s'_{i-w+1}$ slots with index at most $i$ to be filled.
\end{proof}

The first part of \cref{lem:diagonal-in-ribbon} ensures that the height $h_i := s_i - d_i$ of the ribbon diagonal above the bottom ribbon border is non-negative. It plays a central role in the precise versions of \Obs{1} to \Obs{3} we prove next. The main adjustment we have to make is that $d$ is probabilistically repelled when \emph{close} to the ribbon border, while $\dsimp$ only responds to outright collisions.

\begin{lemma}[Precise version of \Obs{1}]
	\label{lem:no-position-empty}
	We have $\Pr[i ∉ P_M \mid h_{i-1} = k] ≤ 2^{-k}$ for any $k ∈ ℕ₀$.
\end{lemma}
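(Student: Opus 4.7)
The plan is to recast $i \notin P_M$ as a linear-algebra event on the matrix $A$ and to bound its conditional probability by a rank-counting argument. Since $P_M$ depends only on $A$, the condition $i \notin P_M$ is equivalent to $r_i = r_{i-1}$, i.e.\ column $i$ of $A$ lies in the span $V$ of columns $1, \ldots, i-1$. Throughout I restrict to the rows $S_i$, since all other rows are zero in the first $i$ columns. Decompose column $i$ (restricted to $S_i$) as $v_{\mathrm{fix}} + v_{\mathrm{rand}}$, where $v_{\mathrm{fix}}$ holds the forced $1$'s at rows with $s(x) = i$ (by the convention that $c(x)$ begins with a $1$) and $v_{\mathrm{rand}}$ holds the uniform random bits at rows $R := \{x : s(x) \in [i-w+1, i-1]\}$.

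The decisive conditioning move is to fix the starting positions $s(x)$ of all keys together with the first $i-1$ columns of $A$. This determines $h_{i-1}$, the subspace $V$, the set $R$, and the vector $v_{\mathrm{fix}}$, yet leaves the $L := |R| = s_{i-1} - s_{i-w}$ bits of $v_{\mathrm{rand}}$ independent and uniform over the $L$-dimensional coordinate subspace $U$ indexed by $R$. A standard coset calculation then gives
\[
\Pr\bigl[\,v_{\mathrm{fix}}+v_{\mathrm{rand}} \in V \;\big|\; \text{fixed data}\,\bigr] \le \frac{|U \cap V|}{|U|} = 2^{r_{i-1} - \dim(U+V)},
\]
so it suffices to prove $\dim(U+V) \ge r_{i-1} + k$ on the event $\{h_{i-1} = k\}$, after which averaging over the conditioning yields the lemma.

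The main technical step is to compute $\dim(U+V)$ exactly. Projecting onto the coordinates outside $R$ (i.e.\ onto $S_{i-w} \cup \{x : s(x)=i\}$) kills $U$ and sends $V$ to the column span of $A$'s first $i-1$ columns restricted to $S_{i-w}$, since rows with $s(x) = i$ have no support in columns $< i$. By equality of row and column rank, this dimension equals $\mathrm{rank}(A_{S_{i-w}})$. I would identify this rank with $s_{i-w} - s'_{i-w}$ by appealing to \Assump{1}: processed in sorted order, the successful insertions from $S_{i-w}$ each contribute a new pivot and are therefore linearly independent, while each of the $s'_{i-w}$ failures is, by definition of failure, a linear combination of previously inserted rows. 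Substituting yields $\dim(U+V) = L + (s_{i-w} - s'_{i-w}) = s_{i-1} - s'_{i-w} = r_{i-1} + h_{i-1}$, giving the desired bound (with equality when $v_{\mathrm{fix}} \in V$, and a potentially smaller probability otherwise).

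The main obstacle is the clean justification of the rank identity $\mathrm{rank}(A_{S_{i-w}}) = s_{i-w} - s'_{i-w}$. Although intuitive, it relies on the fact that $P_M$ and the multiset of failed keys are invariants of $A$ independent of processing order, which allows me to invoke the specific ordering in \Assump{1} to read the rank directly off the algorithm's execution. Everything else is a routine application of the formula $|U \cap V|/|U| = 2^{\dim V - \dim(U+V)}$ together with the observation that the bits of column $i$ at rows in $R$ are independent of anything encoded in columns $1, \ldots, i-1$.
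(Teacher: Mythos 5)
Your proof is correct, but it takes a genuinely different route from the paper's. The paper recasts \cref{algo:ribbon-insertion} in the language of linear probing: each key probes slots $s(x),s(x)+1,\ldots$ in turn, and upon reaching an empty slot fills it with probability $\tfrac12$ because the relevant coefficient bit stays uniform across the row operations; it then counts that exactly $h_{i-1}$ keys probe slot $i$ and multiplies. You instead keep the analysis entirely static: $i\notin P_M$ is the event that column $i$ of $A$ lies in the span $V$ of columns $1..i-1$, you condition on the starting positions and on columns $1..i-1$ (fixing $h_{i-1}$, $V$, the coordinate subspace $U$ on the rows $R=\{x: s(x)\in[i-w+1,i-1]\}$, and $v_{\mathrm{fix}}$), bound the conditional probability by $|U\cap V|/|U|=2^{r_{i-1}-\dim(U+V)}$, and then compute $\dim(U+V)=|R|+\operatorname{rank}(A_{S_{i-w}})=r_{i-1}+h_{i-1}$ via the projection with kernel $U$ and the identity $\operatorname{rank}(A_{S_{i-w}})=s_{i-w}-s'_{i-w}$, which follows from reading off the row echelon form under \Assump{1}. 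Both routes give $2^{-h_{i-1}}$. Your argument is arguably the more self-contained one: the only randomness it uses is that entries of $A$ in column $i$ at rows in $R$ are fresh coins independent of the conditioning, which is immediate from the structure $\svec{h}(x)=0^{s(x)-1}c(x)0^{\ldots}$, whereas the paper's probing view relies on a footnote asserting that the coefficient for slot $i$ ``remains fully random'' as the row is xored with earlier rows. The trade-off is that the probing picture is the workhorse for the rest of \cref{sec:ribbon-analysis} (it directly yields \cref{lem:no-failed-insertion,lem:time-bound-insertions}), so the paper's choice buys uniformity of exposition. One small point worth making explicit if you write this up: $\dim V$ equals $r_{i-1}$ even though you define $V$ over the index set $S_i$ rather than $S_{i-1}$, because rows with $s(x)=i$ vanish on columns $1..i-1$; and $s'_{i-w}$ is indeed measurable with respect to your conditioning because, under \Assump{1}, the fate of every key in $S_{i-w}$ is decided by rows whose support is contained in columns $1..i-1$.
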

\begin{proof}
	A useful alternative way to think about \cref{algo:ribbon-insertion} uses language from linear probing: A key $x$ \emph{probes} slots $s(x),s(x)+1,…,s(x)+w-1$ one by one. When probing an empty slot, $x$ is inserted into that slot with probability $\frac{1}{2}$, otherwise it keeps probing.\footnote{This uses that for any key $x$ and $i ∈ [s(x)+1,s(x)+w-1]$ the random coefficient $a_i$ that $x$ has for slot $i$ remains fully random until slot $i$ is reached, since the bits that are added to $a_i$ during row operations are uncorrelated with $a_i$.}
	Now consider slot $i$.
	Of the $s_{i-1}$ keys with starting position at most $i-1$, precisely $r_{i-1}$ are successfully inserted to slots in $[1,i-1]$ and $s'_{i-w}$ insertions fail without probing slot $i$. Therefore $s_{i-1}-s'_{i-w}-r_{i-1} = s_{i-1}-d_{i-1} = h_{i-1}$ keys probe slot $i$. So conditioned on $h_{i-1} = k$, slot $i$ remains empty with probability at most $2^{-k}$.
\end{proof}

\begin{lemma}[Precise version of \Obs{2}]
	\label{lem:no-failed-insertion}
	Let $x$ be a key with $s(x) = i$.
	\begin{substatement}
		• Let $i' ∈ [i,i+w]$ be the column of $A$ where the ribbon diagonal passes the row of $x$. Assume $i'-i = w-k$, i.e. $i'$ is $k$ positions left of the right ribbon border. Conditioned on this, $\Pr[x ∈ S'] ≤ 2^{-k}$.
		• A simple variant of this claim is: If $h_i ≤ w-k$ for some $k ∈ ℕ$ then $\Pr[x ∈ S'] ≤ 2^{-k}$.
	\end{substatement}
	
\end{lemma}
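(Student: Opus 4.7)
The proof would mirror that of Lemma~\ref{lem:no-position-empty} and rest on the linear-probing interpretation established there. By the order-invariance of ribbon insertions noted just above (the quantities $P_M$, $|S'|$ and the set of slots eventually filled do not depend on insertion order), I would insert $x$ last and denote by $M^{-x}$ the state of the system after every other key has been processed. The insertion of $x$ then becomes the following random experiment: $x$ always tests slot $i=s(x)$, and for each $j\in(i,i+w-1]$ it independently tests slot $j$ with probability~$\tfrac12$; $x$ is placed into the first tested slot that happens to be empty in $M^{-x}$, and otherwise $x\in S'$. Writing $E'\subseteq(i,i+w-1]$ for the empty slots of $M^{-x}$ in that range, this immediately yields
\[
    \Pr\bigl[x\in S'\;\big|\;M^{-x}\bigr]
    \;\le\;\mathbb{1}[\text{slot $i$ filled in $M^{-x}$}]\cdot 2^{-|E'|}.
\]
The proof then reduces, in each part, to showing that the stated hypothesis forces $|E'|\ge k$.

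For part~(a), the hypothesis $d_{i'}=\mathrm{row}(x)$ with $i'=i+w-k$ says that the ribbon diagonal still has to traverse the $k$ columns in $[i',i+w-1]$ \emph{after} having passed $x$'s row. Using the decomposition $d_{j}-d_{j-1}=(r_j-r_{j-1})+(s'_{j-w+1}-s'_{j-w})$ together with the upper bound $d_{i+w-1}\le s_{i+w-1}$ from Lemma~\ref{lem:diagonal-in-ribbon}, a counting argument shows that within these $k$ columns at least $k$ of them have $r_j=r_{j-1}$, i.e.\ correspond to empty slots of $M$; since $M$ and $M^{-x}$ coincide on $(i,i+w-1]$ whenever $x\in S'$, these are also empty slots of $M^{-x}$, giving $|E'|\ge k$.

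Part~(b) would then follow from part~(a). From $h_i=s_i-d_i\le w-k$, together with $d_j-d_{j-1}\le 1+(\text{number of failures with }s=j-w+1)$ (so failures can only make the diagonal climb \emph{faster}), the monotonicity of $d$ and the upper bound $d_j\le s_j$ from Lemma~\ref{lem:diagonal-in-ribbon} imply that the diagonal first reaches the row of $x$ at some column $i'\le i+w-k$. The premise of part~(a) is therefore met with a value $k'\ge k$, and the bound $2^{-k'}\le 2^{-k}$ carries over.

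\paragraph*{Main obstacle.}
The delicate step is the counting argument inside part~(a): one must account cleanly for the two sources of movement of the diagonal — the $(r_j-r_{j-1})$ steps corresponding to new pivots in $M$ versus the discrete upward jumps $(s'_{j-w+1}-s'_{j-w})$ caused by failures — without losing the additive constants that govern the $2^{-k}$ bound. I would handle this by restricting to the event $\{x\in S'\}$, in which $M$ and $M^{-x}$ agree on $(i,i+w-1]$, and by then extracting $|E'|$ directly from the identities $r_{j}=d_{j}-s'_{j-w+1}$ and $|E'|=(w-1)-(r_{i+w-1}-r_i)$, so that the hypothesis on the diagonal translates transparently into a lower bound on $|E'|$.
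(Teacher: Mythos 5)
Your linear-probing reduction and the inequality $\Pr[x\in S'\mid M^{-x}]\le\mathds{1}[\text{slot $i$ filled}]\cdot 2^{-|E'|}$ are correct, but you are counting empty slots in the wrong state. Inserting $x$ absolutely last makes $M^{-x}$ the \emph{final} system, in which the keys with starting position in $(i,i+w-1]$ have already been placed, and generically those keys occupy exactly the slots you need to be empty. The paper instead works under Assumption (M1), takes $x$ to be the last key with $s(x)=i$, and counts occupied slots at \emph{that intermediate} moment, before any key with $s>i$ has been inserted: the quantity it bounds by $w-k$ is $s_i-1-r_{i-1}-s'_{i-1}$.

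Your intermediate claim — that all $k$ columns of $[i',i+w-1]$ satisfy $r_j=r_{j-1}$, i.e.\ end up as empty slots of the final $M$ — is false in general. The border inequality $d_{i+w-1}\le s_{i+w-1}$ leaves slack $s_{i+w-1}-s_i$ for $r$ to increase across that range, one unit per key with starting position in $(i,i+w-1]$; chaining through your identity $|E'|=(w-1)-(r_{i+w-1}-r_i)$ gives only $|E'|\ge k-(s_{i+w-1}-s_i)$, which can be negative. A concrete witness: take $w=4$, two keys at $s=1$ (one of them $x$) and one key at each of $s=2,3,4,5$. On $\{x\in S'\}$ (a coefficient collision between the two $s=1$ keys) the diagonal meets $x$'s row $s_1=2$ at $i'=i+1$, so $k=3$; yet in $M=M^{-x}$ the slots $i+1,i+2,i+3$ are all occupied and $|E'|=0$. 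The fix is to let $M^{-x}$ be the state after processing only the keys that precede $x$ in the sorted order; with that replacement your plan becomes the paper's proof. Part (b) is then cleanest as a direct variant of the same counting (as the paper does it), since your proposed reduction would still need a separate argument that $h_i\le w-k$ forces the diagonal to reach $x$'s row at a column $i'\le i+w-k$.
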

\begin{proof}
	\begin{substatement}
		• Let $i = s(x)$. We may assume that $x$ is the last key with starting position $i$ as this can only increase $\Pr[x∈ S']$. This means $x$ corresponds to row $s_i$ and hence $d_{i'} ≥ s_i$. Of the $s_i-1$ keys that are handled before $x$, exactly $r_{i-1}$ are inserted before slot $i$ and at least $s_{i-1}'$ were not inserted successfully. The number of slots in $[i,m]$ that are occupied when $x$ is handled is therefore at most
		\begin{align*}
			s_i-1-r_{i-1}-s_{i-1}' ≤ d_{i'}-1-r_{i-1}-s_{i-1}' = r_{i'}+s_{i'-w+1}-1-r_{i-1}-s_{i-1}'\\
			≤ r_{i'}-r_{i-1}-1 ≤ i'-i = w-k.
		\end{align*}
		This means at least $k$ slots within $[i,i+w-1]$ are empty. The probability that $x$ cannot be inserted despite probing these $k$ slots is $2^{-k}$.
		• The assumption gives an alternative way to derive the same intermediate step:
		\[ s_i-1-r_{i-1}-s_{i-1}' ≤ s_i - r_i - s_{i-w+1}' = s_i - d_i = h_i ≤ w-k.\qedhere\]
	\end{substatement}
\end{proof}

\def\op{\mathrm{op}}
\begin{lemma}[Precise version of \Obs{3}]
 	\label{lem:time-bound-insertions}
	Let $\op₊$ be the number of row operations performed during \emph{successful} insertions. We have $\op₊ ≤ n(w-1)$ (trivially) as well as $\op₊ ≤ \sum_{i ∈ [m]} h_i$.
\end{lemma}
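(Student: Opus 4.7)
To prove the trivial bound I invoke \Assump{1}, which has been argued to be WLOG above. Under \Assump{1} keys are processed in order of increasing $s(x)$, so at the moment key $x$ is inserted no slot of $M$ beyond position $s(x)+w-1$ has yet been touched; a successful insertion therefore places $x$ at some $i(x) \in [s(x),s(x)+w-1]$. Because every row operation in \cref{algo:ribbon-insertion} is followed by a shift of $j \ge 1$ and hence strictly increases the running index $i$, the number of row operations triggered by a successful insertion is at most $i(x) - s(x) \le w-1$. Summing over the at most $n$ successful insertions yields $\op_+ \le n(w-1)$.

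For the sharper bound I first note, by the same ``each row op advances $i$ by at least one'' observation, that $\op_+ \le \sum_{x \in S \setminus S'}(i(x)-s(x))$; this is \cref{eq:row-op-upper-bound} restricted to the successfully inserted keys. I then rewrite $i(x)-s(x) = |\{i : s(x) < i \le i(x)\}|$ and exchange the order of summation to obtain
\begin{equation*}
  \op_+ \;\le\; \sum_{i=1}^{m} N_i, \qquad N_i := \bigl|\{x \in S \setminus S' : s(x) \le i-1 \text{ and } i(x) \ge i\}\bigr|.
\end{equation*}
A clean expression for $N_i$ follows by inclusion-exclusion: of the $s_{i-1}$ keys with $s(x) \le i-1$, exactly $s'_{i-1}$ are failed and $r_{i-1}$ are successfully placed in slots of $[1,i-1]$, so $N_i = s_{i-1} - s'_{i-1} - r_{i-1}$.

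Unpacking $h_{i-1} = s_{i-1} - d_{i-1} = s_{i-1} - r_{i-1} - s'_{i-w}$ from the definitions and using that $i \mapsto s'_i$ is non-decreasing (whence $s'_{i-w} \le s'_{i-1}$) yields $N_i \le h_{i-1}$. Summing over $i$ and noting that $h_0 = 0$ finally gives $\op_+ \le \sum_{i \in [m]} h_i$.

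I do not expect this to present a serious obstacle: every step uses only quantities (notably $S'$ and $P_M$) that have already been shown to be invariants of the insertion order, and each inequality tightens only by a single index shift. The one point worth a sentence of care is the boundary behaviour, where $s'_{i-w}$ must be interpreted as $0$ for $i \le w$ so that $h_0 = 0$ and the shifted summation $\sum_{i=1}^m h_{i-1}$ is correctly bounded by $\sum_{i \in [m]} h_i$.
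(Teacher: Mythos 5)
Your proof is correct and follows the same high-level route as the paper: bound $\op_+$ by $\sum_{x}(i(x)-s(x))$ and then exchange the order of summation, converting a per-key sum into a per-column sum that matches the heights $h_i$. The paper realizes the exchange geometrically (``area between the ribbon diagonal and the lower ribbon border'') after first assuming $S'=\varnothing$ and then patching in the failed insertions by a perturbation argument; you instead carry $S'$ through explicitly, which makes the bookkeeping cleaner because $N_i = s_{i-1}-s'_{i-1}-r_{i-1} \le h_{i-1}$ holds without any case distinction. One small nit: the final step $\sum_{i=1}^{m} h_{i-1} \le \sum_{i\in[m]} h_i$ needs not only $h_0=0$ but also $h_m\ge 0$; the latter is supplied by \cref{lem:diagonal-in-ribbon}\,(first part) and is worth mentioning alongside the boundary convention for $s'_{i-w}$.
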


\begin{proof}
    First assume that all insertions succeed and consider \cref{eq:row-op-upper-bound}. Since $i(x) - s(x) ≤ w-1$ holds for all $x ∈ S$ the trivial bound $\op₊ ≤ Δ ≤ n(w-1)$ follows.
    Now consider the right hand side of \cref{eq:row-op-upper-bound}. The sum $\sum_{x ∈ S} (s(x)-1)$ can be interpreted as the area in $A$ (i.e.\ the number of matrix positions) left of the ribbon. Moreover we have
    \[
        \sum_{i ∈ P_M} (i-1) = \sum_{i ∈ [m]} |P_M ∩ [i+1,m]| = \sum_{i∈[m]} n -r_i = \sum_{i∈[m]} n -d_i.
    \]
    which is the area below the ribbon diagonal. This makes $Δ$ the area enclosed between the ribbon diagonal and the lower ribbon border. A column-wise computation of this area yields $\op₊ ≤ Δ = \sum_{i ∈ [m]} h_i$ as desired.

    Contrary to our initial assumption, there may be keys that fail to be inserted. But our bounds remain valid in the presence of such keys: The number $\op₊$ only counts operations made for successfully inserted keys and hence does not change. Our bounds $n(w-1)$ and $\sum_{i ∈ [m]} h_i$ are easily seen to increase by $w-1$ and $w$, respectively, for each additional “failing” key we take into account.
\end{proof}





\subsection{Chernoff Bounds}

The following lemma will play a role in \cref{sec:homogeneous-ribbon,sec:bumped-ribbon}.

\begin{lemma}
	\label{lem:chernoff}
	Let $(X_j)_{j ∈ [N]}$ be i.i.d.\ indicator random variables, $X := \sum_{j ∈ [N]} X_j$ and $μ := \E X$.
	\begin{substatement}
		• For $δ ∈ [0,1]$ we have
		$\Pr[|X - μ| ≥ δμ] ≤ 2\exp(-δ²μ/3)$.
		• There exists $C > 0$ such that for any $w ∈ ℕ$ and $μ ≤ \frac{2w²}{C\log w}$ we have
		$\Pr[|X - μ| ≥ \tfrac{w}{8}] = \O(w^{-5})$.
	\end{substatement}
\end{lemma}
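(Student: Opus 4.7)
The plan is to treat part (a) as a direct application of the standard two-sided multiplicative Chernoff bound for sums of i.i.d.\ Bernoullis, for which the inequality $\Pr[|X-\mu| \ge δμ] \le 2\exp(-δ²μ/3)$ on the range $δ \in [0,1]$ is textbook; I would simply cite it rather than rederive the moment generating function computation.

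For part (b) the strategy is to convert the additive deviation $w/8$ into a multiplicative deviation $δμ$ and invoke part (a), but since the ratio $\frac{w}{8μ}$ can exceed $1$ when $μ$ is small, I will split the analysis into two regimes depending on the size of $μ$. In the regime $μ \ge w/8$, I would set $δ := w/(8μ) \in (0,1]$ and apply (a) to obtain
\[
  \Pr[|X-μ| \ge w/8] \le 2\exp(-δ²μ/3) = 2\exp\!\left(-\tfrac{w²}{192μ}\right).
\]
The hypothesis $μ \le \tfrac{2w²}{C\log w}$ yields $w²/μ \ge (C\log w)/2$, so the bound becomes $2w^{-C/384}$. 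Choosing $C$ large enough (e.g.\ $C \ge 2000$) makes this $\mathcal{O}(w^{-5})$.

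In the regime $μ < w/8$, the lower tail is trivial because $X \ge 0$ forces $\Pr[X \le μ - w/8] = 0$. For the upper tail I would use the standard Chernoff variant valid for $δ \ge 1$, namely $\Pr[X \ge (1+δ)μ] \le \exp(-δμ/3)$, with $δ := w/(8μ) > 1$. Since $δμ = w/8$ regardless of $μ$, the bound is $\exp(-w/24)$, which decays super-polynomially in $w$ and is thus $o(w^{-5})$; finitely many small values of $w$ are absorbed into the constant hidden by $\mathcal{O}(\cdot)$.

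The only subtle point is the case split to accommodate $δ$ possibly exceeding $1$; everything else reduces to routine arithmetic with Chernoff constants and a suitable choice of the existentially quantified $C$.
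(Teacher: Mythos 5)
Your argument follows the same route as the paper: part~(a) is the textbook two-sided multiplicative Chernoff bound, and part~(b) is obtained by setting $\delta = \frac{w}{8\mu}$ and plugging into~(a), using the hypothesis $\mu \le \frac{2w^2}{C\log w}$ to turn $\exp(-w^2/(192\mu))$ into $w^{-C/384}$ and then choosing $C$ large enough. The paper's proof does exactly this and chooses $C = 1920$.

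Where you differ is that you notice that $\delta = \frac{w}{8\mu}$ may exceed $1$ when $\mu < w/8$, in which case part~(a) (which is stated only for $\delta \in [0,1]$) does not apply. You handle this cleanly with a case split: the lower tail is vacuous since $X \ge 0$, and for the upper tail you fall back on the one-sided Chernoff bound $\Pr[X \ge (1+\delta)\mu] \le \exp(-\delta\mu/3)$, valid for $\delta \ge 1$, which with $\delta\mu = w/8$ gives the super-polynomial $\exp(-w/24)$. The paper's proof silently assumes $\delta \le 1$ and does not address this regime at all, so your version is a genuine tightening: it proves the lemma as stated rather than under an implicit side condition. In the paper's downstream applications $\mu$ is always $\Theta(w^2/\log w) \gg w$, so the omitted case is never hit in practice, but your more careful treatment is the cleaner way to establish the lemma at the stated level of generality.
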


\begin{proof}
	\begin{substatement}
		• This combines standard Chernoff bounds on the probability of $\{X ≥ (1+δ)μ\}$ and $\{X ≤ (1-δ)μ\}$ as found for instance in \cite[Chapter 4]{MU:Probability:2017}.
		• We set $δ = \frac{w}{8μ}$ and apply \textbf{(a)}. This gives
		\[ \Pr[|X - μ| ≥ \tfrac{w}{8}] ≤ 2\exp(-δ²μ/3) = 2\exp(-\tfrac{w²}{192μ}) ≤ 2\exp(-\tfrac{C \log w}{384}) = 2w^{-C/384}. \]
		Choosing $C = 1920$ achieves the desired bound.\footnote{We do not attempt to optimise $C$ here. In practice much smaller values of $C$ are sufficient, see \cref{s:exp}}
		\qedhere
	\end{substatement}
\end{proof}


\section{Analysis of Standard Ribbon Retrieval}
\label{sec:architectures}


By \emph{standard ribbon} we mean the original design from \cite{DW:One-Block-per-Row:2019}, except that we use our improved solver. We sketch an implementation in \cref{algo:standard-ribbon} and recall the broad strokes of the analysis from \cite{DW:One-Block-per-Row:2019} which will help us to analyze homogeneous ribbon filters in \cref{sec:homogeneous-ribbon}.

Given $n ∈ ℕ$ keys we allocate a system $M$ of size $m = n/(1-ε)+w-1$ and try to insert all keys using \cref{algo:ribbon-insertion}. If any insertion fails, the entire construction is restarted with new hash functions. Otherwise, we obtain a solution $Z$ to $M$ in the \emph{back substitution phase}. The rows of $Z$ are obtained from bottom to top. If slot $i$ of $M$ contains an equation then this equation uniquely determines row $i$ of $Z$ in terms of later rows of $Z$. If slot $i$ of $M$ is empty, then row $i$ of $Z$ can be initialized arbitrarily.

\SetKwInput{Input}{Input}
\SetKwInput{Parameters}{Parameters}
\SetKw{DownTo}{down to}
\SetKw{Restart}{restart}
\begin{algorithm}[h]
    \Input{$f : S → \{0,1\}^r$ for some $S ⊆ \U$ of size $n$.}
    \Parameters{$w ∈ ℕ$, $ε > 0$.}
    $m ← n/(1-ε)+w-1$; allocate system $M$ of size $m$\;
    pick hash functions $s : \U → [m-w+1],\ c : \U → \{0,1\}^w$\;
    \For{$x ∈ S$}{
        $\mathrm{ret} ← \mathrm{insert}(x)$ \tcp{using \cref{algo:ribbon-insertion}}
        \If{$\mathrm{ret} = \text{\upshape\scshape failure}$}{
            \Restart\;
        }
    }
    $Z ← 0^{m × r}$\;
    \For{$i = m$ \DownTo $1$}{
        $Z_i ← M.c[i] · Z_{i..i+w-1}$ \tcp{back substitution}
    }
    \Return $(s,c,Z)$\;
    \caption[fragile]{The construction algorithm of standard ribbon.}
    \label{algo:standard-ribbon}
\end{algorithm}

The expected ``slope'' of the ribbon is $1-ε$, giving us reason to hope that the ribbon diagonal will stick to the left ribbon border making failures unlikely.


\begin{lemma}
	\label{lem:standard-ribbon}
	The heights $h_i := s_i - d_i$ for $i ∈ [m]$ satisfy:
	\begin{substatement}
		• $𝔼[h_i] ≤ 𝒪(1/ε)$
		• $∀k ∈ ℕ: \Pr[h_i > k] = \exp(-Ω(εk))$.
	\end{substatement}
\end{lemma}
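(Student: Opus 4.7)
The height process $h_i$ behaves essentially as the queue length of a discrete-time single-server queue with arrival rate $1-\varepsilon$ and effectively perfect service whenever the queue is non-empty. Making this intuition precise yields both bounds via standard exponential drift analysis, and the result can be viewed as a (re)derivation of what is imported from \cite{DW:One-Block-per-Row:2019}.

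Concretely, I would write the one-step update
\[ h_i = h_{i-1} + X_i - \Delta r_i - f_i, \]
where $X_i := s_i - s_{i-1}$ is the number of keys arriving at column $i$ (binomial with mean $n/(m-w+1) = 1-\varepsilon$), $\Delta r_i := r_i - r_{i-1} \in \{0,1\}$ is the indicator that slot $i$ gets filled, and $f_i := s'_{i-w+1} - s'_{i-w} \geq 0$ counts failures at the left border of the window. \Cref{lem:no-position-empty} gives $\Pr[\Delta r_i = 0 \mid h_{i-1} = k] \leq 2^{-k}$, so
\[ \mathbb{E}[h_i - h_{i-1} \mid h_{i-1} = k] \leq (1-\varepsilon) - (1 - 2^{-k}) = 2^{-k} - \varepsilon, \]
which is bounded above by $-\varepsilon/2$ once $k \geq k_0 := \lceil\log_2(4/\varepsilon)\rceil$. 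Below the threshold $k_0$ the height is trivially bounded in expectation, so it remains to control the tail above $k_0$.

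To turn the negative drift into an exponential tail, I would introduce the Lyapunov function $M_i := \exp(c\, h_i)$ with $c = \Theta(\varepsilon)$ chosen small enough that $\mathbb{E}[\exp(c(X_i - \Delta r_i - f_i)) \mid h_{i-1} = k] \leq 1$ whenever $k \geq k_0$. A second-order expansion of the moment generating function combined with the negative conditional mean yields this supermartingale condition above the threshold, and a standard truncation/reflected-random-walk argument extends it across the threshold up to multiplicative constants. This produces $\Pr[h_i > k] = \exp(-\Omega(\varepsilon k))$ uniformly in $i$, proving (b). Part (a) then follows by integrating the tail: $\mathbb{E}[h_i] = \sum_{k \geq 0} \Pr[h_i > k] \leq k_0 + \sum_{k \geq k_0} \exp(-\Omega(\varepsilon(k-k_0))) = O(\log(1/\varepsilon)) + O(1/\varepsilon) = O(1/\varepsilon)$.

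The main obstacle is handling rare but large jumps of $X_i$: although $X_i$ is tightly concentrated around $1-\varepsilon$, it is binomially distributed and has a tail that must be controlled to make the MGF computation go through with $c = \Theta(\varepsilon)$. I would resolve this either by truncating $X_i$ at $O(\log(1/\varepsilon))$ and paying a negligible union-bound cost using \cref{lem:chernoff}, or by applying that lemma directly to the increments. A secondary nuisance is that $\Delta r_i$ and $f_i$ are not independent of $X_i$ (they share the same random coefficients via \cref{algo:ribbon-insertion}), but since conditioning on $h_{i-1}$ is all that is needed for the drift estimate, this can be sidestepped by only invoking the probe-sequence interpretation from the proof of \cref{lem:no-position-empty}.
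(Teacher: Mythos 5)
Your setup mirrors the paper's: the one-step identity $h_i - h_{i-1} = X_i - \Delta r_i - f_i$ (the paper drops the non-negative $f_i$), the use of \cref{lem:no-position-empty} to obtain conditional drift $\E[h_i - h_{i-1}\mid h_{i-1}=k]\le 2^{-k}-\varepsilon$, and the observation that $X_i$ has mean $1-\varepsilon$. You diverge in how this drift is converted into the two claims. The paper couples $h_i$ to the queue length of an M/D/1 queue with $\mathrm{Po}(1-\varepsilon)$ arrivals and unit service per step, then imports both the $O(1/\varepsilon)$ expectation and the $\exp(-\Omega(\varepsilon k))$ tail as known facts from the queueing literature (following \cite{DW:One-Block-per-Row:2019}). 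You instead run an explicit Lyapunov argument with $\exp(ch_i)$ and $c=\Theta(\varepsilon)$, proving the exponential tail (b) from a supermartingale condition above a threshold $k_0=\Theta(\log(1/\varepsilon))$ and deriving (a) by integrating the tail. Your version is more self-contained but must redo MGF bookkeeping that the queueing literature has already packaged; the paper's is shorter but opaque without the outside reference. Both arguments glide over the same subtleties — $X_i$ is binomial and correlated with $\Delta r_i$ through the evolving system state, and the probability that slot $i$ remains empty is only $\le 2^{-h_{i-1}}$ rather than $0$ for $h_{i-1}>0$ — and you correctly identify truncation of $X_i$ and the probe-sequence conditional bound from \cref{lem:no-position-empty} as the right patches.
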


\begin{proof}[Proof idea from \cite{DW:One-Block-per-Row:2019}.]
	\def\Bin{\mathrm{Bin}}
	\def\Po{\mathrm{Po}}
	By definition of $h_i$, $d_i$ and $r_i$ we have
	\begin{align*}
		h_i - h_{i-1} &= (s_i - s_{i-1}) - (r_i - r_{i-1}) - (s'_{i-w+1}-s'_{i-w})\\
		&≤ (s_i - s_{i-1}) - (r_i - r_{i-1}) = (s_i - s_{i-1}) - 𝟙_{i ∈ P_M}.
	\end{align*}
	The number $s_i - s_{i-1}$ of keys with starting position $i$ has distribution $\Bin(n,\frac{1}{m-w+1})$ which is approximately $\Po(1-ε)$.
	By \cref{lem:no-position-empty} we have $\Pr[i ∉ P_M] ≤ 2^{-h_{i-1}}$. Roughly speaking this means that $\Pr[𝟙_{i ∈ P_M}≠1]$ is negligible as soon as $h_{i-1}$ rises to a value large enough to threaten the upper bounds we intend to prove. A coupling argument then allows us to upper bound $h_i$ in terms of a so-called M/D/1 queue. In every time step $\Po(1-ε)$ customers arrive and $1$ customer can be serviced. The stated bounds on \textbf{(a)} expectation and \textbf{(b)} tails of $h_i$ stem from the literature on such queues.

	We remark that the term $s'_{i-w+1}-s'_{i-w}$ that we ignored relates to failed insertions. It translates to customers abandoning the queue after waiting for $w$ time steps without being serviced.
\end{proof}

By choosing $w = Ω(\frac{\log n}{ε})$ it follows from \cref{lem:standard-ribbon} \textbf{(b)} that $h_i ≤ w/2$ for all $i ∈ [m]$ whp. \Cref{lem:no-failed-insertion}~\textbf{(b)} then ensures that all keys can be inserted successfully whp. Combining \cref{lem:time-bound-insertions} with \cref{lem:standard-ribbon}~\textbf{(a)} shows that the expected number of row operations during construction is $\O(n/ε)$. This proves \cref{thm:standard-ribbon}.



\section{Analysis of Homogeneous Ribbon Filters}
\label{sec:homogeneous-ribbon}

In this section we give a precise description and analysis of homogeneous ribbon filters, which are even simpler than filters based on standard ribbon but unsuitable for retrieval.

Recall the approach for constructing a filter by picking hash functions $\svec{h} : \U → \{0,1\}^m$, $f : \U → \{0,1\}^r$ and finding $Z ∈ \{0,1\}^{m×r}$ such that all $x ∈ S$ satisfy $\svec{h}(x)·Z = f(x)$, while most $x ∈ \U \setminus S$ do not. We now dispose of the fingerprint function $f$, effectively setting $f(x) = 0$ for all $x ∈ \U$. A filter is then given by a solution $Z$ to the \emph{homogeneous} system $(\svec{h}(x)·Z = 0^r)_{x ∈ S}$. The FP rate for $Z$ is $ϕ_Z = \Pr_{a \sim H}[a·Z = 0^r]$ where $H$ is the distribution of $\svec{h}(x)$ for $x ∈ \U$. An immediate issue with the idea is that $Z = 0^{m × r}$ is a valid solution but gives $ϕ_Z = 1$. We therefore pick $Z$ \emph{uniformly at random} from all solutions. If $\svec{h}$ has the form $\svec{h}(x) = 0^{s(x)-1}c(x)0^{m-s(x)-w+1}$ from standard ribbon retrieval, we call the resulting filter a \emph{homogeneous ribbon filter}. The full construction is shown in \cref{algo:homogeneous-ribbon}. Two notable simplifications compared to \cref{algo:standard-ribbon} are that no function $f$ is needed and that a restart is never required.
Note, however, that free variables must now be sampled uniformly at random\footnotemark\ during back substitution.
\footnotetext{Our implementation uses trivial pseudo-random assignments instead: a free variable in row $i$ is assigned $pi \bmod 2^{r}$ for some fixed large odd number $p$.}

\begin{algorithm}[h]
    \Input{$S ⊆ \U$ of size $n$.}
    \Parameters{$r ∈ ℕ, w ∈ ℕ, ε > 0$.}
    $m ← n/(1-ε)+w-1$; allocate system $M$ of size $m$\;
    pick hash functions $s : \U → [m-w+1],\ c : \U → \{0,1\}^w$\;
    sort $S$ approximately by $s(x)$ (see \cref{lem:homogeneous-construction})\;
    \For{$x ∈ S$}{
        $\mathrm{insert}(x)$ \tcp{using \cref{algo:ribbon-insertion} with $f ≡ 0$. Cannot fail!}
    }
    $Z ← 0^{m × r}$\;
    \For{$i = m$ \DownTo $1$}{
        \uIf(\tcp*[h]{slot unused?}){$M.c[i] = 0$}{
            sample $Z_i \sim U(\{0,1\}^r)$ \tcp{randomly initialize free variable}
        }\Else{
            $Z_i ← M.c[i] · Z_{i..i+w-1}$  \tcp{back substitution}
        }
    }
    \Return $(s,c,Z)$\;
    \caption[fragile]{The construction algorithm of homogeneous ribbon filters.}
    \label{algo:homogeneous-ribbon}
\end{algorithm}


The overall FP rate is $ϕ = \E[ϕ_Z]$ where $Z$ depends on the randomness in $(\svec{h}(x))_{x ∈ S}$ and the free variables.
A complication is that $ϕ = 2^{-r}$ no longer holds, instead there is a gap $ϕ - 2^{-r} > 0$.
We show that this gap is negligible under two conditions. Firstly, the filter must be \emph{underloaded}, with $ε ≈ \frac{m-n}{m} > 0$, which leads to a memory overhead of $𝒪(ε)$. Secondly, the ribbon width $w$ must satisfy $w = Ω(r/ε)$. The good news is that there is no dependence of $w$ on $n$ (such as $w = Ω(\frac{\log n}{ε})$ required in standard ribbon) and that no sharding or bumping is required.
More precisely, we prove \cref{thm:homog-ribbon}, restated here for convenience.
\homogRibbonTheorem*
Note that when targeting $w = Θ(\log n)$ we can achieve an overhead of $ε = 𝒪(\frac{\max(r,\log \log n)}{\log n})$.

\subsection{Proof of Theorem \ref{thm:homog-ribbon}}


The easier part is to prove the running time bounds. The query time of $𝒪(r)$ is, in fact, obvious. For the construction time, we reuse results for standard ribbon. Though insertions cannot fail, the set of \emph{redundant} keys, i.e. keys for which \cref{algo:ribbon-insertion} returns “\textsc{redundant}” rather than “\textsc{success}” now demands attention.

\begin{lemma}
    \label{lem:homogeneous-construction}
    Consider the setting of \cref{thm:homog-ribbon}.
    \begin{substatement}
        • The fraction of keys that lead to redundant insertions is $\exp(-Ω(εw))$.
        • The expected number of row additions during construction is $\O(n/ε)$.
    \end{substatement}
\end{lemma}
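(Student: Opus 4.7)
The plan is to reduce both parts to the standard-ribbon machinery of \cref{sec:ribbon-diagonal,sec:architectures}. The central observation I would open with is that a \emph{redundant} insertion in the homogeneous setting is algorithmically identical to what would be a \emph{failed} insertion in the non-homogeneous setting: both arise exactly when the coefficient vector $c$ is reduced to $0$ by row operations before an empty slot is encountered. The heights $h_i$, the set $P_M$, and the set $S'$ are functionals of the pair $(A,(s(x),c(x))_{x\in S})$ alone and are invariant under insertion order (see \cref{sec:ribbon-diagonal}), so the statements of \cref{lem:standard-ribbon,lem:time-bound-insertions,lem:no-failed-insertion} carry over verbatim; the approximate sorting in \cref{algo:homogeneous-ribbon} is needed only for cache behaviour, not for the analytical bounds.

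For part~(a), I would fix a key $x$ with $s(x)=i$ and apply \cref{lem:no-failed-insertion}(b): conditional on $h_i \le w-k$, $x$ becomes redundant with probability at most $2^{-k}$. Splitting on the event $\{h_i \le w/2\}$ versus its complement gives
\[\Pr[x\text{ is redundant}] \ \le\ 2^{-w/2} + \Pr[h_i > w/2].\]
Since $\varepsilon \le 1/2$, the first term is $\exp(-\Omega(w)) \le \exp(-\Omega(\varepsilon w))$, and the second is $\exp(-\Omega(\varepsilon w))$ by \cref{lem:standard-ribbon}(b). Linearity of expectation over the $n$ keys then yields the claimed bound on the expected fraction of redundant insertions.

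For part~(b), I would split the row operations into those arising from successful and from redundant insertions. \Cref{lem:time-bound-insertions} bounds the successful contribution by $\sum_{i\in[m]} h_i$, whose expectation is $\O(m/\varepsilon) = \O(n/\varepsilon)$ via \cref{lem:standard-ribbon}(a) combined with $m = \O(n)$. Each redundant insertion performs at most $w$ row operations (the loop of \cref{algo:ribbon-insertion} advances $i$ every iteration and cannot exceed the ribbon), so by part~(a) its expected total contribution is at most $n\,w\exp(-\Omega(\varepsilon w))$. The theorem's hypothesis $w = \Theta(\max(r,\log w)/\varepsilon)$ forces $\varepsilon w = \Omega(\log w)$, hence $w\exp(-\Omega(\varepsilon w)) = \O(1)$ and the redundant contribution is $\O(n) \subseteq \O(n/\varepsilon)$, completing part~(b).

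The main obstacle I anticipate is confirming that \cref{lem:standard-ribbon} really transfers without hidden assumptions: its proof couples the height process to an M/D/1 queue, and I would need to verify that this coupling rests only on the joint distribution of $(s(x),c(x))_{x\in S}$ and on the order-invariance from \cref{sec:ribbon-diagonal}, without any tacit use of the restart rule of standard ribbon. A secondary subtlety worth spelling out is why \emph{approximate} (rather than exact) sorting suffices in \cref{algo:homogeneous-ribbon}: since both bounds above are order-independent, any bucketing compatible with the cache-efficient implementation is acceptable, and the lemma statement can be read as implicitly specifying only the granularity needed for the running time of the sort itself.
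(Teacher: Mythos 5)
Part~(a) is correct and mirrors the paper's argument. Part~(b), however, rests on a false bound for redundant insertions, and the remedy is precisely the piece of \cref{algo:homogeneous-ribbon} that you dismiss as a cache optimisation.

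You assert that each redundant insertion costs at most $w$ row operations because ``the loop of \cref{algo:ribbon-insertion} advances $i$ every iteration and cannot exceed the ribbon.'' That is only true under \Assump{1}, i.e.\ when keys are inserted sorted by $s(x)$: then, at the moment $x$ is handled, no occupied slot lies past column $s(x)+w-1$, so the loop must terminate within $w$ iterations. In an arbitrary order this fails. If some key $y$ with $s(y)>s(x)$ has already been inserted into a slot in $[s(x),s(x)+w-1]$, a row operation against $y$'s row yields a vector with support extending to $s(y)+w-1 > s(x)+w-1$, and iterating this pushes the loop variable $i$ far past $s(x)+w-1$; there is no a priori bound of $w$. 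The order-invariance developed in \cref{sec:ribbon-diagonal} covers $P_M$, $|S'|$, $\Delta$, and the heights $h_i$, so it rescues the bound $\operatorname{op}_+\le\sum_i h_i$ for \emph{successful} insertions; it says nothing about the number of row operations charged to a \emph{redundant} key, because those operations do not contribute to $\Delta$.

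This is exactly why \cref{algo:homogeneous-ribbon} sorts approximately by $s(x)$: the sort is load-bearing for the analysis, not merely a locality improvement. Bucket keys into groups of $b$ consecutive starting positions and process buckets left to right. At the time a key from the current bucket is handled, all keys processed so far (earlier buckets plus part of the current one) have starting position at most the end of the current bucket, hence all non-zeros in play lie within $b+w$ columns of the bucket boundary, and every attempted insertion terminates within $b+w$ steps. The redundant contribution is then at most $n\,e^{-\Omega(\varepsilon w)}\,(b+w)$, which is $\O(n)$ provided $b\le e^{\Omega(\varepsilon w)}$ with a compatible constant; that constraint on $b$ is an explicit part of the argument and is also missing from your proposal. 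Your final conclusion $\O(n/\varepsilon)$ is correct, but the route through a per-key bound of $w$ does not establish it.
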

\begin{proof}
    \begin{substatement}
        • Any key $x ∈ S$ with a starting position $i = s(x)$ for which $h_i ≤ w/2$ is, by \cref{lem:no-failed-insertion} \textbf{(b)}, inserted successfully with probability at least $1-2^{-w/2}$. By \cref{lem:standard-ribbon} the expected fraction of positions to which this argument does not apply is $\exp(-Ω(εw))$. From this it is not hard to see that the expected fraction of \emph{keys} to which this argument does not apply is also $\exp(-Ω(εw))$. The fraction of keys not inserted successfully is therefore $\O(2^{-w/2}) + \exp(-Ω(εw)) = \exp(-Ω(εw))$.
        • Combining \cref{lem:time-bound-insertions} with \cref{lem:standard-ribbon} \textbf{(a)} shows that the expected number of row operations during successful insertions is $\O(n/ε)$. Redundant keys are somewhat annoying to deal with. They are the reason we partially sort the key set in \cref{algo:homogeneous-ribbon}. If keys are sorted into buckets of $b$ consecutive starting positions each and buckets handled from left to right, then no attempted insertion can take longer than $b+w$ steps. Thus, $b ≤ \exp(Ω(εw))$ ensures that redundant insertions contribute $𝒪(n)$ to expected total running time.\qedhere
    \end{substatement}
\end{proof}

\noindent To get a grip on the false positive rate, we start with the following simple observation.
\begin{lemma}
    \label{lem:homogeneous-fpr}
    Let $p$ be the probability that for $y ∈ \U \setminus S$ the vector $\svec{h}(y)$ is in the span of $(\svec{h}(x))_{x ∈ S}$. The false positive rate of the homogeneous ribbon filter is
    \[\fpr = p + (1-p)2^{-r}.\]
\end{lemma}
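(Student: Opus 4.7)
The plan is to condition on whether $\svec{h}(y)$ lies in the linear span $V := \operatorname{span}\{\svec{h}(x) : x ∈ S\} ⊆ \{0,1\}^m$, where $V$ is a random subspace depending on the hashes of $S$. By definition of $p$ we have $\Pr[\svec{h}(y) ∈ V] = p$ and $\Pr[\svec{h}(y) ∉ V] = 1-p$.

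First I would handle the easy case. If $\svec{h}(y) ∈ V$, write $\svec{h}(y) = \sum_{x ∈ T} \svec{h}(x)$ for some $T ⊆ S$. Then
\[
  \svec{h}(y)·Z = \sum_{x ∈ T} \svec{h}(x)·Z = 0^r
\]
holds deterministically for any solution $Z$ of the homogeneous system, so $y$ is a false positive with probability $1$ conditional on $\svec{h}(y) ∈ V$.

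Next I would handle the case $\svec{h}(y) ∉ V$. Condition on the hashes $(\svec{h}(x))_{x ∈ S}$ and on $\svec{h}(y)=a ∉ V$. The solution set $\{Z ∈ \{0,1\}^{m × r} : A_S Z = 0\}$ is exactly $(\ker A_S)^r$, since the constraints act column by column. As $Z$ is chosen uniformly at random from this set by \cref{algo:homogeneous-ribbon}, its $r$ columns $Z_{:,1},…,Z_{:,r}$ are independent and each uniformly distributed over $\ker A_S$. Therefore
\[
  \Pr[a·Z = 0^r \mid a, A_S] \;=\; \prod_{j=1}^{r} \Pr[a·Z_{:,j} = 0 \mid a, A_S].
\]
For a single column, $a·Z_{:,j}$ is the image of $Z_{:,j}$ under the linear functional $z ↦ a·z$ restricted to $\ker A_S$. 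The annihilator of $\ker A_S$ in $(\{0,1\}^m)^\ast$ is the row space $V$ (standard duality over $\mathbb{F}_2$, via rank--nullity). Since $a ∉ V$, this functional is not identically zero on $\ker A_S$, hence surjective onto $\{0,1\}$, so $\Pr[a·Z_{:,j} = 0 \mid a, A_S] = 1/2$. Multiplying gives $2^{-r}$, which is independent of the conditioning, so the unconditional probability in this case is also $2^{-r}$.

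Combining the two cases by the law of total probability yields
\[
  ϕ = p·1 + (1-p)·2^{-r} = p + (1-p)2^{-r},
\]
as claimed. The only subtle step is the duality argument used to conclude that $a ∉ V$ implies the functional $z ↦ a·z$ is nontrivial on $\ker A_S$; everything else is a direct unpacking of the construction in \cref{algo:homogeneous-ribbon}.
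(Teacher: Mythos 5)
Your proof is correct and uses the same span/no-span dichotomy as the paper, with the first case ($\svec{h}(y)\in V$) handled identically. Where you diverge is in the case $\svec{h}(y)\notin V$. The paper argues concretely through the solver: if $\svec{h}(y)\notin V$, hypothetically inserting the equation $\svec{h}(y)\cdot Z = 0$ after all of $S$ would have filled a previously empty slot $i$; that slot is a free variable that the back-substitution initializes uniformly at random from $\{0,1\}^r$, and exactly one of its $2^r$ values satisfies the equation once the later rows of $Z$ are fixed, giving $2^{-r}$. Your argument abstracts away from the ribbon solver entirely: you use only that \cref{algo:homogeneous-ribbon} returns a uniformly random element of the solution set, decompose that set as $(\ker A_S)^r$ with independent uniform columns, and invoke the duality $(\ker A_S)^\perp = \mathrm{rowspace}(A_S) = V$ over $\mathbb{F}_2$ to conclude that $z\mapsto a\cdot z$ is a nonzero (hence balanced) functional on $\ker A_S$ whenever $a\notin V$, giving $1/2$ per column. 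Both are sound; your version is implementation-agnostic and would apply unchanged to any solver that samples uniformly from the homogeneous solution set, while the paper's is more elementary and self-contained given that the row echelon / back-substitution structure has already been laid out.
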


\begin{proof}
    First assume there exists $S' ⊆ S$ with $\svec{h}(y) = \sum_{x ∈ S'} \svec{h}(x)$ which happens with probability $p$. In that case \[\svec{h}(y)·Z = (\sum_{x ∈ S'}\svec{h}(x))·Z = \sum_{x ∈ S'}(\svec{h}(x)·Z) = 0\] and $y$ is a false positive. Otherwise, i.e.\ with probability $1-p$, an attempt to add $\svec{h}(y)·Z = 0$ to $M$ after all equations for $S$ were added would have resulted in a (non-redundant) insertion in some row $i$. During back substitution, only one choice for the $i$-th row of $Z$ satisfies $\svec{h}(y)·Z = 0$. Since the $i$-th row was initialized randomly we have $\Pr[\svec{h}(y)·Z = 0 \mid \svec{h}(y) ∉ \mathrm{span}((\svec{h}(x))_{x ∈ S)}] = 2^{-r}$.
\end{proof}

We now derive an asymptotic bound on $p$ in terms of large $w$ and small $ε$.

\begin{lemma}
    \label{lem:extra-fpr-homogeneous}
    There exists a constant $C$ such that whenever $C\frac{\log w}{w} ≤ ε ≤ \frac 12$ we have $p = \exp(-Ω(εw))$.
\end{lemma}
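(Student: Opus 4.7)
The first step is to observe that $p$ equals the probability that attempting to insert $y$ into $M$ via \cref{algo:ribbon-insertion} yields \textsc{redundant}. This holds because the row space of $M$ coincides with $\mathrm{span}((\vec{h}(x))_{x ∈ S})$, and \cref{algo:ribbon-insertion} reduces the key's running coefficient to $0$ exactly when its original vector lies in this span. By the order-invariance argument from \cref{sec:ribbon-diagonal}, we may analyse this by imagining $y$ appended to $S$ and processed last.

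The core inequality then comes from \cref{lem:no-failed-insertion}\,\textbf{(b)} applied to $y$ with threshold $k = w/2$: conditioned on $h_i ≤ w/2$ at $i = s(y)$ in the extended system, the probability that $y$'s insertion is unsuccessful is at most $2^{-w/2}$. In the homogeneous setting the right-hand side of every equation is identically $0$, so inconsistency (\textsc{failure}) cannot occur; ``unsuccessful'' therefore means \textsc{redundant}. Combining this with \cref{lem:standard-ribbon}\,\textbf{(b)}, which gives $\Pr[h_i > w/2] = \exp(-Ω(εw))$ uniformly in $i$, a union bound yields
\[
\Pr[y \text{ redundant} \mid s(y) = i] ≤ 2^{-w/2} + \exp(-Ω(εw)) = \exp(-Ω(εw)),
\]
where the final step uses $ε ≤ \tfrac{1}{2}$ to absorb the $2^{-w/2}$ term. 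Averaging over the uniform choice of $i = s(y) ∈ [m-w+1]$ delivers $p = \exp(-Ω(εw))$.

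Two small technical points need attention but pose no real difficulty. First, appending one query key to a system of size $n$ perturbs the effective $ε \approx 1 - n/m$ by $O(1/n)$, which is absorbed into the asymptotic bounds and does not affect the applicability of \cref{lem:standard-ribbon}. Second, the role of the constant $C$ in the hypothesis $ε ≥ C\log w/w$ is to ensure that the constant hidden in $Ω(·)$ from \cref{lem:standard-ribbon}\,\textbf{(b)} is large enough that the resulting bound $\exp(-Ω(εw))$ is meaningfully small (at least $\exp(-Ω(\log w)) = w^{-Ω(1)}$), as required downstream in the proof of \cref{thm:homog-ribbon}. The main ``obstacle'', if any, is verifying that the height analysis of \cref{sec:ribbon-diagonal} transfers verbatim to the homogeneous setting where $S'$ denotes redundant rather than failed insertions; this is immediate because both \cref{lem:no-failed-insertion} and \cref{lem:standard-ribbon} are phrased directly in terms of the output of \cref{algo:ribbon-insertion} and make no use of the right-hand sides $b_i$.
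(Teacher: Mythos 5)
Your opening observation is correct: $p$ is the probability that appending $y$'s equation to the completed system $M$ returns \textsc{redundant}, and in the homogeneous setting \textsc{failure} never occurs.

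The gap is in the step where you claim that ``by the order-invariance argument from \cref{sec:ribbon-diagonal}, we may analyse this by imagining $y$ appended to $S$ and processed last'' and then apply \cref{lem:no-failed-insertion}\,\textbf{(b)}. The order-invariance in \cref{sec:ribbon-diagonal} only concerns the \emph{aggregate} quantities $r_i$, $P_M$ and $|S'|$; it explicitly does \emph{not} say that the set $S'$ itself is order-independent, and in fact it is not. \Cref{lem:no-failed-insertion} is proved under \Assump{1}, i.e.\ sorted processing by starting position, and even makes essential use of the key $x$ being processed before any key with larger starting position: the core step counts at most $s_i - 1 - r_{i-1} - s'_{i-1} \le h_i$ occupied slots in $x$'s probe window $[i,i+w-1]$, a count that is only valid because keys with $s(\cdot) > i$ have not yet been inserted. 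If instead you process $y$ at the very end, slots in $[s(y),s(y)+w-1]$ may already be occupied by keys with starting positions in $(s(y), s(y)+w-1]$, so the guarantee of ``at least $k$ empty slots to probe'' is lost and the $2^{-k}$ bound does not follow.

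Put differently: in sorted order, $y \in S'$ is the event $\svec{h}(y) \in \mathrm{span}\{\svec{h}(x) : x \in S, \ s(x) \le s(y)\}$, whereas $p$ is the probability of the strictly larger event $\svec{h}(y) \in \mathrm{span}\{\svec{h}(x) : x \in S\}$. A dependency for $y$ may very well involve keys with $s(x) > s(y)$ (e.g.\ $\svec{h}(y) = \svec{h}(x_1) \oplus \svec{h}(x_2)$ with $s(x_1) < s(y) < s(x_2)$), in which case $y$ is in a dependent set of $S^+$ but is \emph{not} redundant when processed in sorted order. Your argument therefore upper-bounds a subset of the event defining $p$ and does not upper-bound $p$ itself. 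This is exactly why the paper's proof is considerably heavier: it introduces minimal dependent sets $S'$, analyses the interval $\pos(S')$, and splits into cases (long overloaded interval handled by a Chernoff bound, long non-overloaded interval handled by a $2^{|S_I|}\cdot 2^{-|I|}$ union bound over candidate dependent sets, short interval reduced to the count of redundant keys from \cref{lem:homogeneous-construction}). The first two cases precisely capture dependencies that reach ahead to larger starting positions, which the height-of-diagonal machinery alone cannot see.

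The technical caveat you flagged (that $S'$ in \cref{sec:ribbon-diagonal} now means ``redundant'' rather than ``failed'') is indeed harmless, as you say. But the real obstacle — that the per-key bound of \cref{lem:no-failed-insertion} depends on processing order and only controls dependencies reaching backward — is the one that forces the paper into its three-case interval argument.
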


\begin{proof}
\def\pos{\mathrm{pos}}
\def\BAD#1{\textcolor{darkgray}{\sffamily#1}}
    We may imagine that $S ⊆ \U$ and $y ∈ \U \setminus S$ are obtained from a set $S^+ ⊆ \U$ of size $n+1$ by picking $y ∈ S^+$ at random and setting $S = S^+ \setminus \{y\}$. Then $p$ is simply the expected fraction of keys in $S^+$ that are contained in some \emph{dependent set}, i.e. in some $S' ⊆ S^+$ with $\sum_{x ∈ S'} \svec{h}(x) = 0^m$. Clearly, $x$ is contained in a dependent set if and only if it is contained in a \emph{minimal} dependent set.
    Such a set $S'$ always “touches” a consecutive set of positions, i.e.\ $\pos(S') := \bigcup_{x ∈ S'} [s(x),s(x) + w -1 ]$ is an interval.

    We call an interval $I ⊆ [m]$ \emph{long} if $|I| ≥ w²$ and \emph{short} otherwise. We call it \emph{overloaded} if $S_I := \{x ∈ S^+ \mid s(x) ∈ I\}$ has size $|S_I| ≥ |I|·(1-ε/2)$.
    Finally, we call a position $i ∈ [m]$ \emph{bad} if one of the following is the case:
    \begin{enumerate}[(b1)]
        •\label{it:b1} $i$ is contained in a long overloaded interval.
        •\label{it:b2} $i ∈ \pos(S')$ for a minimal dependent set $S'$ with long non-overloaded interval $\pos(S')$.
        •\label{it:b3} $i ∈ \pos(S')$ for a minimal dependent set $S'$ with short interval $\pos(S')$.
    \end{enumerate}
    We shall now establish the following
    \[\textbf{Claim: } ∀i ∈ [m]: \Pr[i \text{ is bad}] = \exp(-Ω(εw)).\]
    For each $i ∈ [m]$ the contributions from each of the badness conditions (\BAD{b1,b2,b3}) can be bounded separately. In all cases we use our assumption $ε ≥ C\smash{\frac{\log w}{w}}$. It ensures that $\exp(-Ω(εw))$ is at most $\exp(-Ω(\log w)) = \smash{w^{-Ω(1)}}$ and can “absorb” factors of $w$ in the sense that by adapting the constant hidden in $Ω$ we have $w \exp(-Ω(εw)) = \exp(-Ω(εw))$.

    \makeatletter
    \def\smallunderbrace#1{\mathop{\vtop{\m@th\ialign{##\crcr
       $\hfil\displaystyle{#1}\hfil$\crcr
       \noalign{\kern3\p@\nointerlineskip}%
       \tiny\upbracefill\crcr\noalign{\kern3\p@}}}}\limits}
    \makeatother
    \begin{enumerate}[(b1)]
        •
        Let $I ⊆ [m]$ be any interval and $X₁,…,X_{n+1}$ indicate which of the keys in $S^+$ have a starting position within $I$. For $n \gg w$ and $X := \sum_{j ∈ [n+1]} X_j$ we have
        \[ μ := \E[X] ≤ \frac{(n+1)|I|}{m-w+1} ≈ \frac{n|I|}{m-w+1} = (1-ε)|I|. \]
        Using a Chernoff bound (\cref{lem:chernoff} \textbf{(a)}), the probability for $I$ to be overloaded is (for $n \gg w$)
        \begin{align}
            \Pr[X &≥ (1-ε/2)|I|] ≤ \smash{\Pr[X ≥ (1+\smallunderbrace{ε/2}_{δ})\smallunderbrace{(1-ε)|I|}_{≥ μ}]}
            \stackrel{\text{Lem.\ref{lem:chernoff}}}{≤}
            \exp(\tfrac{-ε²(1-ε)|I|}{12}). 
            \label{eq:chernoff-long-overloaded}
            \vphantom{\underbrace{a}_{a}}
        \end{align}
        The probability for $i ∈ [m]$ to be contained in a long overloaded interval is bounded by the sum of \cref{eq:chernoff-long-overloaded} over all lengths $|I| ≥ w²$ and all $|I|$ offsets that $I$ can have relative to $i$.
        The result is of order $\exp(-Ω(ε²w²))$ and hence small enough.
        • Consider a long interval $I$ that is not overloaded, i.e.\ $|I| ≥ w²$ and $|S_I| ≤ (1-ε/2)|I|$. There are at most $2^{|S_I|}$ sets $S'$ of keys with $\pos(S') = I$ and each is a dependent set with probability $2^{-|I|}$ because each of the $|I|$ positions of $I$ that $S'$ touches imposes one parity condition.

        A union bound on the probability for $I$ to support at least one dependent set is therefore $2^{-|I|}·2^{|S_I|} = 2^{-\frac{ε}{2}|I|} = \exp(-Ω(ε|I|))$.

        Similar as in (\BAD{b1}) for $i ∈ [m]$ we can sum this probability over all admissible lengths $|I| ≥ w²$ and all offsets that $i$ can have in $I$ to bound the probability that $i$ is bad due to (\BAD{b2}).
        •
        \def\Sred{S_{\mathrm{red}}}
        Let $\Sred ⊆ S^+$ be the set of redundant keys. By \cref{lem:homogeneous-construction} we have $\E[|\Sred|] = n·\exp(-Ω(εw))$.

        Now if $i$ is bad due to (\BAD{b3}) then $i ∈ \pos(S')$ for some minimal dependent set $S'$ with short $\pos(S')$. At least one key from $S'$ is redundant (regardless of the insertion order). In particular, $i$ is within short distance ($< w²$) of the starting position of a redundant key $x$. Therefore at most $|\Sred|·2w²$ positions are bad due to (\BAD{b3}), which is an  $\exp(-Ω(εw))$-fraction of all positions as desired.
    \end{enumerate}
    Simple tail bounds on the number of keys with the same starting position suffice to show the following variant of the claim:
    \[\textbf{Claim': } ∀x ∈ S^+: \Pr[s(x) \text{ is bad}] = \exp(-Ω(εw)).\]
    Now assume that the key $y∈ S^+$ we singled out is contained in a minimal dependent set $S'$.
    It follows that all of $\pos(S')$ would be bad. Indeed, either $\pos(S')$ is a short interval ($→$ \BAD{b3}) or it is long. If it is long, then it is overloaded ($→$ \BAD{b1}) or not overloaded ($→$ \BAD{b2}). In any case $s(y) ∈ \pos(S')$ would be bad.

    Therefore, the probability $p$ for $y ∈ S^+$ to be contained in a dependent set is at most the probability for $s(y)$ to be bad. This is upper-bounded by $\exp(-Ω(εw))$ using Claim'.
\end{proof}

We are now ready to prove \cref{thm:homog-ribbon}.
\begin{proof}[Proof of \cref{thm:homog-ribbon}]
  \def\SPACE{\textsc{space}}
  \def\OPT{\textsc{opt}}
  We already dealt with running times in \cref{lem:homogeneous-construction}.

  The constraint $\frac{w}{\max(r,\log w)} = 𝒪(1/ε)$ leaves us room to assume $εw > Cr$ and $εw > C\log w$ for a constant $C$ of our choosing.
  Concerning the false positive rate we obtain
  \begin{align*}
    \label{eq:upper-bound-on-p}
    p &\refrel{lem:extra-fpr-homogeneous}{Lem}{≤} \exp(-εw) ≤ \exp(-2\log(w)-r) ≤ \tfrac{1}{w²}e^{-r} ≤ ε²2^{-r}\\
    \text{and hence } \fpr &\refrel{lem:homogeneous-fpr}{Lem}{=} p + (1-p)2^{-r} ≤ p + 2^{-r} ≤ ε²2^{-r}+2^{-r} = (1+ε²)2^{-r}.
  \end{align*}
  which is close to $2^{-r}$ as desired. Concerning the space overhead, recall its definition as $\frac{\SPACE}{\OPT}-1$ where $\SPACE$ is the space usage of the filter and $\OPT = -\log₂(\fpr)n$ is the information-theoretic lower bound for filters that achieve false positive rate $ϕ$. We have:
  \begin{align*}
    \OPT &= -\log₂(\fpr)n ≥ -\log₂(\tfrac{1+ε²}{2^{r}})n = (r-\log₂(1+ε²))n ≥ (r-ε²)n\\
    \text{and }  \SPACE &= rm = r(m-w+1) + 𝒪(rw) = \tfrac{rn}{1-ε} + 𝒪(wr)\\
    \text{ which yields }
    \frac{\SPACE}{\OPT} &= \frac{r}{(1+ε)(r-ε²)} + 𝒪(\tfrac{w}{n}) ≤ \frac{1}{(1+ε)(1-ε²)} + 𝒪(\tfrac{w}{n}) ≤ 1+3ε,
  \end{align*}
  where the last step uses $ε ≤ \frac{1}{2}$.
\end{proof}

\section{Analysis of Bumped Ribbon Retrieval (BuRR)}
\label{sec:bumped-ribbon}

We now single out one variant of BuRR and analyze it fully, thereby proving \cref{thm:bumped-ribbon}, restated here for convenience. The analysis could undoubtably be extended to cover other variants of BuRR (see \cref{s:designspace}), but in the interest of a cleaner presentation we will not do so.
\bumpedRibbonTheorem*
Recall the idea illustrated in \cref{fig:burr-idea} \textbf{(b)}: We use $m < n$, making the data structure \emph{overloaded}. This ensures that the ribbon diagonal $d$ rarely hits the bottom ribbon border and \Obs{1}/\cref{lem:no-position-empty} suggests that almost all slots in $M$ can be utilized. An immediate problem is that $d$ would necessarily hit the right ribbon border in at least $n-m$ places, causing at least $n-m$ insertions to fail.
We deal with this by removing contiguous ranges of keys in strategic positions such that without the corresponding rows, $d$ never hits the right ribbon border.
A small amount of “metadata” indicates the ranges of removed keys. These keys are \emph{bumped} to a fallback retrieval data structure. Many variants of this approach are possible, see \cref{s:designspace}.

\begin{algorithm}[h]
    \def\bumped{S_{\mathrm{bumped}}}
    \def\Dbumped{D_{\mathrm{bumped}}}
    \def\insertAll{\mathrm{insertAll}}
    \def\meta{\mathrm{meta}}
    \def\buckets{{\#}\mathrm{buckets}}
    \def\bumpNothing{\textsc{bumpNothing}}
    \def\bumpHead{\textsc{bumpHead}}
    \def\bumpAll{\textsc{bumpAll}}
    \Input{$f : S → \{0,1\}^r$ for some $S ⊆ \U$ of size $n$.}
    \Parameters{$w ∈ ℕ$.}
    $m ← n/(1+\frac{C\log w}{8w})+w-1$; allocate system $M$ of size $m$\;
    pick hash functions $s : \U → [m-w+1],\ c : \U → \{0,1\}^w$\;
    $b ← \frac{w²}{C \log w}, \ \ \buckets ← \frac{m-w+1}{b}$ \tcp{bucket size \& number of buckets}

    \For(\tcp*[h]{partition}){$j ∈ [\buckets]$}{
        $B_j ← \{x ∈ S \mid \lceil s(x)/b\rceil = j\}$\;
        $H_j ← \{x ∈ B_j \mid s(x) - (j-1)b ≤ \frac{3}{8}w\}$ \tcp{head}
        $T_j = B_j \setminus H_j$ \tcp{tail}
    }
    $\bumped ← ∅$\;
    \For{$j ∈ [\buckets]$}{
        \tcp{insertAll(X): attempt \cref{algo:ribbon-insertion} for all $x ∈ X$, roll back on failure}
        \uIf{$\insertAll(T_j)$}{
            \uIf{$\insertAll(H_j)$}{
                $\meta[j] ← \bumpNothing$\;
            }\Else{
                $\bumped ← \bumped ∪ H_j$\;
                $\meta[j] ← \bumpHead$\;
            }
        }\Else{
            $\bumped ← \bumped ∪ B_j$\;
            $\meta[j] ← \bumpAll$\;
        }
    }
    $Z ← 0^{m × r}$\;
    \For{$i = m$ \DownTo $1$}{
        $Z_i ← M.c[i] · Z_{i..i+w-1}$ \tcp{back substitution}
    }
    $\Dbumped ← \mathrm{construct}(\bumped)$ \tcp{recursive, unless base case reached}
    \Return $D = (s,c,Z,\meta,\Dbumped)$\;
    \caption[fragile]{The construction algorithm of BuRR as analyzed in \cref{sec:bumped-ribbon}.}
    \label{algo:burr}
\end{algorithm}

\subsection{Proof of \texorpdfstring{\cref{thm:bumped-ribbon}}{the Theorem on BuRR}}



Consider \cref{algo:burr}. In what follows, $C$ refers to a constant from \cref{lem:chernoff}. For $n$ keys, $M$ is given $m = n/(1+\smash{\frac{C\log w}{8w}})+w-1$ rows\footnote{We ignore rounding issues for a clearer presentation and assume that $w$ is large. This causes a certain disconnect to practical application where concrete values like $w = 32$ are used.}. The $m-w+1$ possible starting positions are partitioned into \emph{buckets} of size $b = \smash{\frac{w²}{C\log w}}$. The first $\frac{3}{8}w$ slots of a bucket are called its \emph{head}, and the larger rest is called its \emph{tail}. Keys implicitly belong to (the head or tail of) a bucket according to their starting position.
For each bucket the algorithm has three choices:
\begin{enumerate}
	• No keys belonging to the bucket are bumped.
	• The keys belonging to the head of the bucket are bumped.
	• All keys of the bucket are bumped.
\end{enumerate}
These choices are made greedily as follows. Buckets are handled from left to right. For each bucket, we first try to insert all keys belonging to the bucket's tail. If at least one insertion fails, then the successful insertions are undone and the entire bucket is bumped, i.e.\ Option 3 is used. Otherwise, we also try to insert the keys belonging to the bucket's head. If at least one insertion fails, all insertions of head keys are undone and we choose Option 2, otherwise, we choose Option 1.
The main ingredient in the analysis of this algorithm is the following lemma, proved later in this section.
\begin{lemma}
	\label{lem:bumped-ribbon}
	The expected fraction of empty slots in $M$ is $\O(w^{-3})$.
\end{lemma}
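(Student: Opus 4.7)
The plan is to invoke \cref{lem:no-position-empty}, which gives $\Pr[i \notin P_M \mid h_{i-1} = k] \le 2^{-k}$, and show that the ribbon-diagonal height $h_{i-1}$ exceeds $\tfrac{w}{8}$ with probability $1 - \O(w^{-3})$ for almost every position $i$. Since $2^{-w/8}$ is far smaller than $w^{-3}$, this yields $\E[2^{-h_{i-1}}] = \O(w^{-3})$ per position, summing to $\O(m \cdot w^{-3})$ expected empty slots and thus the claimed $\O(w^{-3})$ fraction. Positions in the first and last $\O(w)$ positions require only a trivial separate argument as they contribute a vanishing fraction.

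My first step is to analyze the boundary heights $H_j := h_{(j-1)b + w - 1}$ between consecutive buckets. Applying \cref{lem:chernoff}(b) to the number of arrivals $N_j$ in bucket $j$ (with mean $b\rho \approx b + \tfrac{w}{8}$, where $\rho := n/(m-w+1)$) and to the number of head arrivals $N_j^H$ (mean $\approx \tfrac{3w}{8}$) shows that, with probability $1 - \O(w^{-5})$ per bucket, both deviate from their means by at most $\tfrac{w}{8}$. On this typical event the boundary-height transitions are: Option~1 shifts $H$ by $N_j - b \in [0,\tfrac{w}{4}]$ (mean $+\tfrac{w}{8}$); Option~2 shifts $H$ by $N_j - N_j^H - b \in [-\tfrac{w}{2}, 0]$ (mean $-\tfrac{w}{4}$); Option~3 shifts $H$ down by $\Theta(b)$ up to a spillover term. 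Moreover, Option~3 requires a tail insertion to fail, which by \cref{lem:no-failed-insertion}(b) demands intra-bucket height to reach $w$, and this has probability $\O(w^{-5})$ whenever $H_{j-1}$ is comfortably below $w$. The same Chernoff bound also gives $|h_i - H_{j-1}| \le \tfrac{w}{8}$ for every position $i$ in bucket $j$ with probability $1 - \O(w^{-5})$, so the problem reduces to showing $\Pr[H_{j-1} < \tfrac{w}{4}] = \O(w^{-3})$ uniformly in $j$.

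This last concentration is the principal obstacle. The plan is to recognise $(H_j)$ as a mean-reverting Markov chain: small $H_j$ lets all insertions succeed, selecting Option~1 (positive drift $+\tfrac{w}{8}$); once $H_j$ grows large enough to threaten head insertions near $H \asymp \tfrac{5w}{8}$, Option~2 is selected (negative drift $-\tfrac{w}{4}$), pulling $H$ back down. This is structurally the same regime as the M/D/1 analysis imported from \cite{DW:One-Block-per-Row:2019} in the proof of \cref{lem:standard-ribbon}, with bumping taking the role of the ``abandonment'' mechanism. Following that template, a Lyapunov/drift argument with potential $|H_j - c|$ for a suitable $c \asymp w$, combined with the $\O(w^{-5})$ Chernoff and Option~3 tail bounds, should yield $\Pr[H_j < k] \le \exp(-\Omega(k))$; taking $k = \tfrac{w}{4}$ gives a bound comfortably smaller than the $\O(w^{-3})$ we need. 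The subtle point is to rule out long cascades of Option~2 (or Option~3) driving $H_j$ low, but the frequency of each negative-drift option is itself controlled by $H_j$ via \cref{lem:no-failed-insertion}(b), so such cascades self-extinguish once $H$ has been brought down.
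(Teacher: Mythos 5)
Your plan correctly identifies the decisive quantities (the overload $-\eps b\approx w/8$ per bucket, the $\tfrac{3}{8}w$ head, and the height $h_i$ of the ribbon diagonal) and the right tools (\cref{lem:chernoff}, \cref{lem:no-position-empty}, \cref{lem:no-failed-insertion}). Your reduction to controlling $\Pr[H_{j-1}<\tfrac{w}{4}]$ at bucket boundaries, plus within-bucket concentration, matches the role that the paper's property (P2) plays. Where you diverge is the route to that concentration: you propose a stationary-type drift/Lyapunov argument for the chain $(H_j)$, whereas the paper runs a direct bucket-by-bucket induction (``if (P2) holds for $B_0$ then (P1) and (P2) hold for $B_1$ w.p.\ $1-\O(w^{-3})$''), classifies buckets into good/recovery/bad, and bounds the total number of empty slots by $xb+(x+1)(w-1)+w-1$ with $\E[x]=\O(\tfrac{m}{b}w^{-3})$ bad buckets. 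The paper's decomposition deliberately sidesteps the need for a uniform-in-$j$ bound on $\Pr[H_j < \tfrac{w}{4}]$: in a recovery sequence (where $H$ may still be below $\tfrac{w}{4}$) it proves instead that \emph{all keys of those buckets are inserted}, so the buckets contribute at most $w-1$ empty slots to the tally regardless of their number.

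The gap in your plan is precisely what you flag as ``the subtle point'': ruling out long cascades and converting a drift heuristic into a uniform $\O(w^{-3})$ bound. A plain Lyapunov argument yields information about something like a stationary regime; here the chain starts at $H_0=0$, and after a rare Option~3 event the height is driven low again. You would have to argue that the time spent with $H_j<\tfrac{w}{4}$ after each $\O(w^{-3})$-probability shock is $\O(1)$ buckets in expectation, and then additionally account for how many slots are actually lost during that transient. The latter accounting is exactly what the paper's recovery-sequence classification does, and without it a bound on $\Pr[H_j<\tfrac{w}{4}]$ alone is not automatically enough (a bucket with low $H$ is not the same as a bucket with many empty slots). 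There is also a minor factual slip: an Option~3 transition does \emph{not} shift $H$ down by $\Theta(b)$. With the whole bucket bumped, the bumped keys are excluded from both $s_i$ and $d_i$ (per the footnote defining $h_i$), so $s_i$ is flat across the bucket while $d_i$ increases only by the $\le w-1$ spillover insertions from the previous bucket; hence $H_j\ge H_{j-1}-(w-1)$, not $H_{j-1}-\Theta(b)$. This matters for your drift bookkeeping, since it means the downward jumps you need to control are $\O(w)$, not $\O(b)$, and the ``cascade'' after a bad event is shorter than your account suggests. So: right physics, different architecture, and the step you defer is the one that carries the weight of the lemma.
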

If the fraction of empty slots is significantly higher than expected, we simply restart the construction with new hash functions until satisfactory (this is not reflected in \cref{algo:burr}). After back substitution, we obtain a solution vector of $mr$ bits. Additionally, we need to store the choices we made, which takes $\lceil\log₂3\rceil = 2$ bits of metadata per bucket.
Given that $|P_M| = m·(1-\O(w^{-3}))$ keys are taken care of, this suggests a space overhead of
\begin{align*}
	ε &= \frac{\textsc{space}}{\textsc{opt}} - 1 ≤ \frac{mr+2\frac{m}{b}}{|P_M|·r}-1 ≤ \frac{1+2\frac{1}{rb}}{1-\O(w^{-3})}-1\\
	& = (1+\tfrac{2}{rb})(1+\O(w^{-3}))-1 = \O(\tfrac{1}{rb})+\O(w^{-3}) = \O(\tfrac{\log w}{rw²})+\O(w^{-3}) = \O(\tfrac{\log w}{rw²}).
\end{align*}
The last step uses the assumption $r = 𝒪(w)$. The trivial bound in \cref{lem:time-bound-insertions} implies that $\O(bw)$ row operations are performed during the \emph{successful} insertions in a bucket. There can be at most one failed insertion for each bucket which takes $\O(b)$ row operations since insertions cannot extend past the next (still empty) bucket. Since $w = \O(\log n)$ bits fit into a word of a word RAM, these row operations take $\O(\frac{m}{b}·(bw+b)) = \O(nw)$ time in total.

A query of a non-bumped key involves computing the product of the $w$-bit vector $c(x)$ and a block $Z(x)$ of $w × r$ bits from the solution matrix $Z ∈ \{0,1\}^{m × r}$. The $wr$ bit operations can be carried out in $\O(1+\frac{wr}{\log n})$ steps on a word RAM with word size $Ω(\log n)$. A complication is that if $w,r ∈ ω(1) ∩ o(\log n)$ then we are forced us to handle several rows of $Z(x)$ in parallel (xor-ing a $c(x)$-controlled selection) or several columns of $Z(x)$ in parallel (bitwise \textsc{and} with $c(x)$ and popcount). Numbers ``much bigger than $1$ and much smaller than $\log n$'' are a somewhat academic concern, so we believe an academic resolution (not reflected in our implementation\footnote{Though AVX512 instructions such as VPOPCNTDQ may benefit a corresponding niche.}) is sufficient: We resort to the standard techniques of tabulating the results of a suitable set of vector matrix products.
Back substitution has the same complexity as $n$ queries and therefore takes $\O(n(1+\frac{wr}{\log n})) = \O(nw)$ time.

To complete the construction, we still have to deal with the $n-|P_M| = n-m(1-\O(w^{-3})) = \O(\frac{n\log w}{w})$ bumped keys. A query can easily identify from the metadata whether a key is bumped, so all we need is another retrieval data structure that is consulted in this case. We can recursively use bumped ribbon retrieval again. However, to avoid compromising worst-case query time we only do this for four levels. Let $S^{(4)}$ be the set of keys bumped four times. We have $|S^{(4)}| = \O(n\frac{\log⁴w}{w⁴}) = \O(n\frac{\log w}{rw²})$ and we can afford to
store $S^{(4)}$ using a retrieval data structure with constant
overhead, linear construction time and $\O(1)$ worst-case query time,
e.g., using minimum perfect hash functions
\cite{BPZ:Practical:2013}.%
\footnote{Our implementation is optimized for $w=\Om{\log n}$ and can simply use ribbon retrieval with an appropriate $\eps>0$.}




\subsection{Proof of Lemma \ref{lem:bumped-ribbon}}
\label{sec:bumped-lemma-proof}
We give an induction-like argument showing that “most” buckets satisfy two properties:
\def\P#1{\textbf{\upshape(P#1)}}
\begin{description}
	•[\P1] All slots of the bucket are filled.
	•[\P2] The height $h_i := s_i - d_i$ of the ribbon diagonal over the lower ribbon border at the last position $i$ of the bucket satisfies $h_i ≥ \frac{w}{4}$.\footnote{The key set underlying the definitions of $s_i$ and $d_i$ excludes the bumped keys.}
\end{description}
\begin{claim}
	\label{claim:bucket-induction}
	If \P2 holds for a bucket $B₀$ then \P1 and \P2 hold for the following bucket $B₁$ with probability $1-w^{-3}$.
\end{claim}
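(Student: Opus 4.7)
The plan is to show that, conditional on \P2 for $B_0$, the algorithm's ``no bumping'' branch is taken for $B_1$ with probability $1 - O(w^{-3})$, and that \P1 and \P2 are then automatic consequences. Write $i_0$ for the last position of $B_0$, so that $h_{i_0} \geq w/4$ by hypothesis. The first step is to concentrate the number $N_1$ of keys with starting position in $B_1$: it is binomial with mean $\mu = b(1+\frac{C\log w}{8w}) = b + w/8$, and since $\mu \leq 2w^2/(C\log w) = 2b$, \cref{lem:chernoff}\,\textbf{(b)} gives $\Pr[|N_1-\mu| > w/8] = O(w^{-5})$, so I may assume $N_1 \in [b,\,b+w/4]$.

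Next I would track the height $h_i$ as $i$ sweeps across $B_1$. Inside $B_1$ the contribution $s'_{i-w+1}-s'_{i-w}$ to $h_i-h_{i-1}$ is determined by bumping decisions of earlier buckets and may be absorbed into the inductive setup, leaving the clean recursion $h_i - h_{i-1} = \Delta s_i - \mathds{1}_{i\in P_M}$. The arrivals $\Delta s_i$ are Poisson-like with mean $1+\Theta(\log w/w)$, while \cref{lem:no-position-empty} gives $\E[\mathds{1}_{i\in P_M}\mid h_{i-1}]\geq 1-2^{-h_{i-1}}$. A bounded-differences / Azuma--Hoeffding argument over the $b$ steps inside $B_1$ then shows that $h_i$ stays inside the window $[w/8,\,3w/4]$ throughout $B_1$ with failure probability $O(w^{-5})$.

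On that height-control event, \cref{lem:no-position-empty} bounds the probability that any particular slot of $B_1$ remains empty by $2^{-w/8}$, while \cref{lem:no-failed-insertion}\,\textbf{(b)} bounds the probability that any particular key in $B_1$ fails to insert by $2^{-w/4}$. Union bounds over the $b = w^2/(C\log w)$ slots and over the at most $b+w/4$ keys of $B_1$ each yield $o(w^{-3})$. Combining these estimates: with probability $1-O(w^{-3})$, every slot of $B_1$ is filled (which is \P1), no insertion fails (so the algorithm takes the ``no bumping'' branch), and the end-of-bucket height equals $h_{i_0+b} = h_{i_0}+N_1-b \geq w/4$ (which is \P2).

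The delicate point I foresee is the \emph{upper} bound on $h_i$ required in the height-control step, since \P2 as stated is one-sided. The natural remedy is to strengthen the inductive invariant to a two-sided statement of roughly the form $h_{i_0} \in [w/4,\,w/4+O(\sqrt{b})]$: the expected per-bucket drift is only $+w/8$, which is a constant fraction of $w/4$, and any deviation above the operating point is absorbed by the head/tail bumping mechanism of the current or next bucket (which removes $\Theta(w)$ keys and pulls $h$ sharply back down). Carrying this two-sided concentration cleanly through the induction is, I expect, the main technical burden hiding behind the clean statement of the claim.
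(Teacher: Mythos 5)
Your outline matches the paper's in the ingredients: concentration of per-bucket arrivals via \cref{lem:chernoff}, tracking the height $h_i$ with the recursion $h_i - h_{i-1} \leq \Delta s_i - \mathds{1}_{i\in P_M}$, and then invoking \cref{lem:no-position-empty} and \cref{lem:no-failed-insertion} together with union bounds over the $\Theta(b)$ slots and keys of $B_1$. You also correctly put your finger on the real difficulty: \P2 is one-sided, so nothing a priori stops $h_{i_0}$ from being close to $w$, in which case your height-control window $[w/8,\,3w/4]$ is violated at $i = i_0$ before any analysis even begins.

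The remedy you sketch, however, does not work quantitatively, and this is a genuine gap. You propose strengthening the invariant to $h_{i_0} \in [w/4,\,w/4 + O(\sqrt{b})]$, but the deterministic per-bucket drift is $\Theta(w/8)$ while $\sqrt{b} = \Theta(w/\sqrt{\log w})$, so the drift alone pushes $h$ out of this window after a single bucket — it is not a fluctuation that Azuma--Hoeffding can hold in check, it is a systematic increase. Moreover, when the head-bumping mechanism does kick in it subtracts $\tfrac{3}{8}w$ from $h$ in one step, a discontinuous jump that again overshoots an $O(\sqrt{b})$-wide window. So there is no tight two-sided invariant of the form you propose that can be carried through the induction.

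What the paper does instead is keep exactly the one-sided invariant \P2 and \emph{split on the value of $h_{i_0}$}: in Case~1 ($h_{i_0} < \tfrac58 w$) your analysis essentially goes through, while in Case~2 ($h_{i_0} \geq \tfrac58 w$) the argument only tracks the tail keys of $B_1$. Because the spillover from $B_0$ already fills the head slots, one finds $h_{i_0'} = h_{i_0} - \tfrac38 w \in [\tfrac14 w,\,\tfrac58 w]$ at the start of the tail (using the automatic upper bound $h_{i_0} \lesssim w$ coming from \cref{lem:diagonal-in-ribbon}), which resets the height into the Case~1 regime without ever asserting a two-sided inductive bound. So the correct resolution is not a tighter invariant but a case distinction that exploits the algorithm's head/tail structure; your plan needs that structural case split, not a sharper concentration bound.
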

\begin{proof}
	Let $i₀$ and $i₁$ be the last positions of buckets $B₀$ and $B₁$, respectively. By \P2 for $B₀$ we have $h_{i₀} ≥ \frac{w}{4}$.
	\begin{description}
		•[Case 1: $\bm{h_{i₀} < \frac 58w}$.] We claim that with probability $1-\O(w^{-3})$ all keys belonging to $B₁$ (head and tail) can be inserted and \P1 and \P2 are fulfilled afterwards. The situation is illustrated in \cref{fig:bumping-ribbon-proof} on the left.

		The dashed black lines show the expected trajectories of the ribbon borders for bucket $B₁$. The lower expected border travels in a straight line from $(s_{i₀},i₀)$ to the point that is $b = i₁ - i₀$ positions to the right and $\E[|\{x ∈ S \mid s(x) ∈ B₁\}|] = \frac{n}{m-w+1}b = (1+\frac{C\log w}{8w})b = b+\frac{w}{8}$ positions below. The actual position of the border randomly fluctuates around the expectation. At each point the (vertical) deviation exceeds $\frac{w}{8}$ with probability at most $\O(w^{-5})$ by \cref{lem:chernoff} \textbf{(b)}. A union bound shows that it is at most $\frac{w}{8}$ \emph{everywhere in the bucket} and thus within the region shaded red with probability at least $1-\O(w^{-3})$. The region shaded yellow represents a “safety distance” of another $\frac w8$ that we wish to keep from the ribbon border. Finally, the blue line is a perfect diagonal starting from $(d_{i₀},i₀)$, which we claim the ribbon diagonal also follows. Due to $s_{i₀} - d_{i₀} = h_{i₀} ≥ \frac{w}{4}$ the diagonal does not intersect the lower yellow region. To see that it does not intersect the right yellow region, note that is passes through position $(s_{i₀},i₀+h_{i₀})$ which, by this case's assumption is at least $\frac{3}{8}w$ positions left of the right ribbon border. This is sufficient to compensate for the width of the red region ($\frac{1}{8}w$), the width of the yellow region ($\frac{1}{8}w$) as well as the difference in slope due to overloading which accounts for a relative vertical shift of another $\frac{1}{8}w$.

		Now our arguments nicely interlock to show that the ribbon diagonal $\{(d_i, i) \mid i ∈ B₁\}$ follows this designated path: As long as $d$ stays away from the yellow region, it remains $\frac{w}{8}$ positions away from the lower and right ribbon borders so each slot remains empty with probability at most $2^{-w/8}$ by \cref{lem:no-position-empty} and each insertion fails with probability at most $2^{-w/8}$ by \cref{lem:no-failed-insertion}. Conversely, as long as no insertion fails and no slot remains empty, $d$ proceeds along a diagonal path. 

		Two small caveats concern the area outside of the rectangle. We do not know the right ribbon border above row $s_{i₀}$; however, those rows correspond to keys from the previous bucket and would have been bumped if their insertions failed. We also do not know the lower border to the right of $i₁$; however here \cref{lem:no-failed-insertion} \textbf{(b)} helps: We may use that the vertical distance of the ribbon diagonal to the top ribbon border is at most $\frac{w}{8}$ to conclude that the keys with the last $w-1$ starting positions are also inserted successfully with probability $2^{-w/8}$.

		This establishes \P1 with probability $1-\O(w^{-3})$. Then \P2 follows easily: The extreme case is when both $h_{i₀} = \frac{w}{4}$ and $|\{x ∈ S \mid s(x) ∈ B₁\}| = b$ take the minimum permitted values. In that case we have $h_{i₁} = \frac{w}{4}$, so in general $h_{i₁} ≥ \frac{w}{4}$ follows.
		\begin{figure}
			\includegraphics[scale=0.8,page=1]{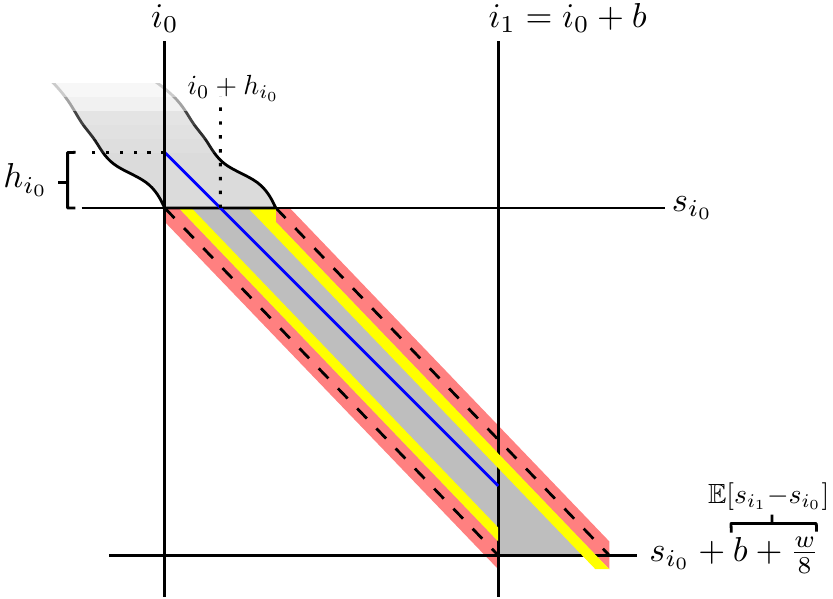}
			\hspace{-2em}
			\includegraphics[scale=0.8,page=2]{figures/bumped-ribbon-proof.pdf}
			\caption{Situation in Cases 1 (left) of 2 (right) of the proof of \cref{claim:bucket-induction}.}
			\label{fig:bumping-ribbon-proof}
		\end{figure}
		•[Case 2: $\bm{h_{i₀} ≥ \frac 58w}$.] We claim that all keys belonging to the tail of $B₁$ can be inserted and that afterwards \P1 and \P2 are fulfilled with probability $1-\O(w^{-3})$. In case the keys in the head of $B₁$ can also be successfully inserted this cannot hurt \P1 or \P2 because the number of filled slots and the height could only increase due to the additional keys.\footnote{Note that our analysis suggests that $B₁$ is already full after the tail-keys are inserted, which means that the head keys can only be inserted if they all “overflow” into the next bucket.}

		The situation is illustrated in \cref{fig:bumping-ribbon-proof} on the right. We only consider the keys in the tail of $B₁$ which starts at position $i₀'+1$ where $i₀' = i₀+\frac{3}{8}w$.
		Note that for $i ∈ [i₀,i₀']$ the ribbon diagonal $(d_i,i)$ follows an ideal diagonal trajectory with probability $1-\O(2^{-Ω(w)})$ since keys from $B₀$ are successfully inserted and the distance to the bottom border is at least $\frac{1}{4}w$. This implies that all slots in the head of $B₁$ are filled by keys from $B₀$ and $h_{i₀'} = h_{i₀} - \frac{3}{8}w$.
		Since $h_{i₀} ∈ [\frac{5}{8}w,w]$ we have $h_{i₀'} ∈ [\frac{1}{4}w,\frac{5}{8}w]$, which allows us to reason as in Case $1$ to show that all slots in the tail of bucket $B₁$ are filled and $h_{i₁} ≥ \frac{w}{4}$ with probability $1-\O(w^{-3})$.\qedhere
	\end{description}
\end{proof}

\subparagraph{Handling failures and the first bucket.}

The following claim is needed to deal with the rare cases where \cref{claim:bucket-induction} does not apply.

\begin{claim}
	\label{claim:bucket-recovery}
	Assume $B₁$ is either the first bucket, or preceeded by a bucket $B₀$ for which \P2 does not hold. Then with probability $1-\O(w^{-3})$ all keys of $B₁$ (head and tail) are successfully inserted.
\end{claim}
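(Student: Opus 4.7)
The plan is to adapt the Case~1 argument of \cref{claim:bucket-induction} to this more permissive setting, in which we need only rule out failed insertions (empty slots and a possibly small final height $h_{i_1}$ are both harmless). I treat both hypotheses uniformly by taking $h_{i_0} = 0$ when $B_1$ is the first bucket; otherwise the failure of \P2 on $B_0$ gives $h_{i_0} < w/4$. The goal is to show that, with probability $1 - \O(w^{-3})$, the height $h_i$ stays bounded above by $3w/4$ throughout $B_1$ and its short spillover into the next bucket. Given such a bound, \cref{lem:no-failed-insertion}\textbf{(b)} with $k = w/4$ implies that each insertion fails with probability at most $2^{-w/4}$, and a union bound over the $\O(b + w)$ relevant insertions yields total failure probability $\O(b \cdot 2^{-w/4}) = o(w^{-3})$ since $b = w^2/(C \log w)$.

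The first ingredient is a Chernoff concentration bound on arrivals. For each prefix of length $\ell \in [1, b + w]$ starting at $i_0$, the count $s_{i_0 + \ell} - s_{i_0}$ is binomial with mean $\ell(1 + \tfrac{C \log w}{8w})$, which is bounded by $2w^2/(C \log w)$; \cref{lem:chernoff}\textbf{(b)} therefore caps its deviation from the mean by $w/8$ with probability $\O(w^{-5})$. A union bound over the $\O(b)$ prefixes yields, with probability $1 - \O(w^{-3})$, that the cumulative arrival excess $(s_i - s_{i_0}) - (i - i_0)$ stays within $w/8 + w/8 = w/4$ for every $i \in (i_0, i_1 + w]$. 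Call this event $F$.

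The second ingredient is a stopping-time argument. Let $\tau$ be the first position in $B_1$ (or its spillover) at which either $h_\tau > 3w/4$ or an insertion fails. Before $\tau$ no failure has occurred, so the value of $s'$ is constant in the relevant range and the identity $h_i - h_{i_0} = (s_i - s_{i_0}) - (i - i_0) + E_i$ holds, where $E_i$ counts the empty slots in $(i_0, i]$. Under $F$ the first two terms on the right contribute at most $w/4$. Empty slots are controlled via \cref{lem:no-position-empty}: once $h_{j-1}$ exceeds $\log_2 b$ the per-slot empty probability is at most $1/b$, so the contribution of this ``high'' regime to $E_i$ is $\O(1)$ with high probability by a Chernoff bound. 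The ``low'' regime, in which $h_{j-1} \le \log_2 b$, is controlled by a coupling to a biased lattice walk: whenever $h$ is small, arrivals dominate (departures are suppressed with probability $\ge \tfrac12$), so $h$ escapes above $\log_2 b$ within $\O(\log b)$ steps with probability $1 - o(w^{-3})$, contributing at most $\O(\log b) < w/8$ to $E_i$. Combining, $h_i \le h_{i_0} + w/4 + w/8 < 3w/4$ before $\tau$, contradicting the $h$-trigger of $\tau$; the failure-trigger is ruled out as explained in the first paragraph, so $\Pr[\tau \le i_1 + w] = \O(w^{-3})$.

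The main obstacle will be the clean treatment of the ramp-up regime, since the bound $2^{-h_{j-1}}$ from \cref{lem:no-position-empty} is trivial when $h_{j-1}$ is near $0$. The coupling to a biased lattice walk alluded to above -- entirely analogous in spirit to the M/D/1 queuing argument invoked in the proof sketch of \cref{lem:standard-ribbon} -- is the natural tool; the rest of the proof is a direct reassembly of \cref{lem:chernoff}, \cref{lem:no-position-empty}, and \cref{lem:no-failed-insertion} along the blueprint of Case~1 of \cref{claim:bucket-induction}.
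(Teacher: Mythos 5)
Your route is genuinely different from the paper's, and it has a real gap in the empty-slot estimate. The paper's own proof never bounds $E_i$: it observes that since $h_{i_0} < w/4$, the ribbon diagonal $d$ starts strictly below the reference diagonal that Case~1 of \cref{claim:bucket-induction} tracks from height $w/4$; for $d$ to approach the right ribbon border it would have to \emph{cross} that reference diagonal first, and from the column where that happens the Case~1 machinery (Chernoff on arrivals plus the $2^{-\Omega(w)}$ per-step deviation bounds from \cref{lem:no-position-empty} and \cref{lem:no-failed-insertion}) keeps $d$ glued to the diagonal, hence clear of the right border, with probability $1-\O(w^{-3})$. Empty slots in $B_1$ are simply tolerated; only failures matter for the claim, and in the phase before the crossing $h_i$ is so small that failures are negligible anyway.

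Your plan instead needs an upper bound $E_i \le w/8$ with a $w^{-3}$ tail, and that is where it breaks. The claim that $h$ escapes above $\log_2 b$ within $\O(\log b)$ steps with probability $1 - \mathrm{o}(w^{-3})$ does not hold. The upward drift of $h$ at level $h$ is roughly $\frac{C\log w}{8w} + 2^{-h}$, which is a constant only while $h = \O(1)$; for $h$ in the range from a few units up to $\log_2 b = \Theta(\log w)$, the drift is exponentially small in $h$ (and bottoms out at the overload term $\Theta(\log w / w)$), so traversing that range takes $\Omega(w/\log w)$ steps, not $\O(\log b)$. Moreover the per-step variance of $h$ is $\Theta(1)$, much larger than the drift, so $h$ will repeatedly dip back below any fixed threshold; and the ``Chernoff bound'' invoked for the high-regime contribution to $E_i$ is not available as stated, since the empty-slot indicators are adapted to the filtration and not independent, so one would need a martingale/queuing estimate (a genuine overloaded analogue of the M/D/1 argument, not just an allusion to it). In short, controlling $E_i$ this tightly is essentially as hard as the claim itself; the piercing argument is the paper's device for never having to confront that estimate. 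If you want to salvage your decomposition, you should replace the direct $E_i$ bound with the same monotonicity observation: before $d$ reaches the height-$w/4$ reference diagonal, $h_i < w/4 + (\text{arrival excess}) \le w/2$ under $F$, so failures are already negligible regardless of $E_i$; after it does, hand off to Case~1.
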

\begin{proof}
    The ribbon diagonal $d$ starts at a height $h_{i₀} < \frac{w}{4}$, which is lower than desired, and might hit the lower ribbon border within $B₁$. However, $d$ avoids the right ribbon border, because (recycling ideas from Case 1 of \cref{claim:bucket-induction}) $d$ would have to pierce the diagonal starting at the desired height $\frac{w}{4}$ first and would afterwards stay on that diagonal with probability $1-\O(w^{-3})$.
    This allows for some slots in $B₁$ to remain empty but implies all keys are successfully inserted with probability $1-\O(w^{-3})$.
\end{proof}
We now classify the buckets.
Consider a bucket $B₁$. Note that either \cref{claim:bucket-induction} or \cref{claim:bucket-recovery} is applicable. If the corresponding event with probability $1-\O(w^{-3})$ fails to occur or if $B₁$ contains less than $b$ keys (this happens with probability $\O(w^{-5})$), then $B₁$ is a \emph{bad bucket}. Otherwise, if \cref{claim:bucket-induction} applies, then $B₁$ is a \emph{good bucket} and if \cref{claim:bucket-recovery} applies then $B₁$ is a \emph{recovery bucket}.
A \emph{recovery sequence} is a maximal contiguous sequence of recovery buckets. Since such a sequence cannot be preceded by a good bucket, the number of recovery sequences is at most the number of bad buckets plus $1$ (the first bucket is always a recovery bucket or bad). Only bad buckets contain fewer than $b$ keys, so a recovery sequence of $k$ buckets contains at least $kb$ keys, all of which are inserted successfully by \cref{claim:bucket-recovery}. At most $w-1$ of these insertions fill slots after the sequence so there are at most $w-1$ empty slots within a recovery sequence. With $x$ denoting the number of bad buckets, the number $m-|P_M|$ of empty slots in total is
\[m-|P_M| ≤ xb + (x+1)(w-1) + w-1\]
where the last $w-1$ accounts for slots $[m-w+1,m]$ that do not belong to any bucket. The dominating term is $xb$ so using $\E[x] = \O(\frac{m}{b}w^{-3})$ we obtain $\E[\frac{m-|P_M|}{m}] = \O(w^{-3})$, which completes the proof of \cref{lem:bumped-ribbon}.


\section{The Design Space of BuRR}\label{s:designspace}
There is a large
design space for implementing BuRR.  We outline it in some breadth
because there was a fruitful back-and-forth between different
approaches and their analysis, i.e., different approaches gradually
improved the final theoretical results while insights gained by the
analysis helped to navigate to simple and efficient design points.
The description of the design space helps to explain some of the
gained insights and might also show directions for future improvements
of BuRR.  To also accommodate more ``hasty'' readers, we nevertheless
put emphasis on the the main variant of BuRR analyzed
\cref{sec:bumped-ribbon} and also move some details to appendices.
We first introduce a simple generic approach and discuss
concrete instantiations and refinements in separate subsections.  In
\cref{s:moredesign}, we describe further details.

As all layers of BuRR function in the same way, we need only explain the
construction process of a single layer.  The BuRR construction process makes the
ribbon retrieval approach of \cref{s:ribbonRetrieval} more dynamic by
\emph{bumping} ranges of keys when insertion of a row into the linear system
fails by causing a linear dependence.  Bumping is effected by subdividing the
table for the current layer into \emph{buckets} of size $b$.
More concretely, bucket $B$ contains metadata for
keys $x$ with $s(x)\in Bb+1..(B+1)b$.  We also say that $x$ is \emph{allocated}
to bucket $B$ even though retrieving~$x$ can also involve subsequent buckets.
In \cref{sec:bumped-ribbon}, we showed that it basically suffices to adaptively
remove a fraction of the keys from buckets with high \emph{load} to make the
equation system solvable, i.e., to make all remaining keys retrievable from the
current layer.  The structure of the linear system easily absorbs most variance
within and between buckets but bigger fluctuations are more efficiently handled
with bumping.

\paragraph*{Construction.}
We first sort the keys by the buckets they are allocated
to.  For simplicity, we set each key's leading coefficient $\svec{h}(x)[s(x)]$ to 1.
As the starting positions are distributed randomly, this is not an issue.
The ribbon solving approach 
is adapted to build the row echelon form bucket by bucket from left to right (see
\cref{ss:globalOrder} for a discussion of variants).  Consider now the solution
process for a bucket $B$.  Within $B$, we place the keys in some order that
depends on the concrete strategy; see \cref{ss:insertionOrder}.  One useful
order is from right to left based on~$s(x)$.  We store metadata that indicates
one or several \emph{groups} of keys allocated to the bucket that are
bumped. These groups correspond to ranges in the placement order, not
necessarily from~$s(x)$. Which groups can be represented depends on the metadata
compression strategy, which we discuss in Sections \ref{ss:thresholdCompression} and
\ref{ss:groupCompression}.  For example, in the right-to-left order mentioned
above, it makes sense to bump keys $x$ with $s(x)\in Bb+1..Bb+t$ for some
threshold $t$, i.e., a leftmost range of slots in a bucket. This makes sense
because that part of the bucket is already occupied by keys placed during
construction of the previous bucket (see also \cref{fig:fillrtlnew}).

If placement fails for one key in a group, all keys in that group are bumped
together.  Such transactional grouping of insertions is possible by recording
offsets of rows inserted for the current group, and clearing those rows if
reverting is needed. This implies that we need to record which rows were used by
keys of the current group so that we can revert their insertion if needed.

Keys bumped during the construction of a layer are recorded and passed into the
construction process of the next layer. Note that additional or independent
hashing erases any relationships among keys that led to bumping.  In the last
layer, we allocate enough space to ensure that no keys are bumped, as in the
standard ribbon approach.

\myparagraph{Querying.}
At query time, if we try to retrieve a key $x$ from a bucket $B$, we check
whether $x$'s position in the insertion order indicates that $x$ is in a bumped
group. If not, we can retrieve $x$ from the current layer, otherwise we have to
go on to query the next layer.

\myparagraph{Overloading.}
The tuning parameter $\eps=1-m/n$ is very important
for space efficiency.  While other linear algebra based retrieval data
structures need $\eps>0$ to work, a decisive feature of BuRR is that
negative $\eps$  almost eliminates empty table slots by avoiding
underloaded ranges of columns.
%

We discuss further aspects of the design space of BuRR in additional
subsections.  By default, bucket construction is \emph{greedy}, i.e., proceeds
as far as possible. \Cref{ss:cautious} presents a \emph{cautious} variant that
might have advantages in some cases.  \Cref{ss:globalOrder} justifies our choice
to construct buckets from left to right. \Cref{ss:sparse} discusses how more
sparse bit patterns can improve performance.  Construction can be parallelized
by bumping ranges of $w$ consecutive table slots.  This separates the equation
system into independent blocks; see \cref{ss:parallel}.  In that section, we
also explain external memory construction that with high probability needs only
a single scan of the input and otherwise scanning and integer sorting of simple
objects. The computed table and metadata can be represented in various forms
that we discuss in \cref{representation}. In particular, \emph{interleaved
  storage} allows to efficiently retrieve $f(x)$ one bit at a time, which is
advantageous when using BuRR as an AMQ. We can also reduce cache faults by
storing metadata together with the actual table entries.

A very interesting variant of BuRR is \emph{bump-once ribbon retrieval
  (Bu$^1$RR)} described in \cref{ss:twoLayer} that guarantees that each key can
be found in one of two layers.

\subsection{Threshold-Based Bumping}\label{ss:thresholdCompression}

Recall that BuRR places the keys one bucket $B$ at a time and within $B$
according to some ordering -- say from right to left defined by a position in
$1..b$.  A very simple way to represent metadata is to store a single threshold
$j_B$ that remembers (or approximates) the first time this insertion process
fails for $B$ ($j_B=0$ if insertion does not fail). During a query, when
retrieving a key $x$ that has position $j$ in the insertion process, $x$ is
bumped if $j\geq j_B$.  We need $\log b$ bits if we conflate $b-1$ and $b$ onto
$b-1$ and bump the entire bucket.
It turns out that for small $r$ (few retrieved bits), the space overhead for
this threshold dominates the overhead for empty slots. Thus, we now discuss
several ways to further reduce this space.

\myparagraph{2-bit Thresholds and $c$-bit Thresholds.} 
metadata implies that we have to choose between four possible
thresholds values.  It makes sense to have support for threshold
values for $j_B=0$ (no bumping needed in this bucket) and $j_B=b$
(bump entire bucket). The latter is a convenient way to ensure
correctness in a nonprobabilistic way, thus obviating restarts with a
fresh hash function.  The former threshold makes sense because the
effect of obliviously bumping some keys from \emph{every} bucket could
also be achieved by choosing a larger value of $\eps$. This leaves two
threshold values $\ell$, $u$, as tuning parameters.
The experiments in \cref{s:exp} use linear formulas of the form $\ceil{(c_1-c_2\eps)b}$ for $\ell$
and $u$ with the values of $c_1$ and $c_2$ dependent on $w$,
but it also turns out that $u=2\ell$ is a good
choice so that the cases $0$, $\ell$, and $2\ell$ can be decoded with
a simple linear formula and only the case $j_B=b$ needs special case
treatment.  Moreover, this approach can be generalized to $c$-bit
thresholds, where we use $2^c-1$ equally spaced thresholds starting at
$0$ plus the threshold $b$.

\myparagraph{$1^+$-Bit Thresholds.} %
The experiments performed for 2-bit thresholds indicated that actually choosing
$\ell=u$ performs quite well.  Indeed, the analysis in \cref{sec:bumped-ribbon} takes up this
scheme.  Moreover, both experiments and theory show that the threshold $b$ (bump
entire bucket) occurs only rarely.  Hence, we considered compression schemes
that store only a single bit most of the time, using some additional storage to
store larger bumping thresholds.  We slightly deviate from the theoretical
setting by allowing arbitrary larger thresholds in order to reduce the space
incurred by empty buckets.  Thus, we considered a variant where the threshold
values $0$ (bump nothing), $t$ (bump something), and also values $t+1..b$ (bump
a lot) are possible but where the latter (rare) cases are stored in a separate
small hash table $H^+$ whose keys are bucket indices.


Compared to 2-bit thresholds, we get a space-time trade-off, however, because
accessing the exception table $H^+$ costs additional time (even if it is very
small and will often fit into the cache).  Thus, a further refinement of \onebit
thresholds is to partition the buckets into ranges of size $b^+$ and to store
one bit for each such range to indicate whether any bucket in that range has an
entry in $H^+$.

\subsection{Sparse Bit Patterns}\label{ss:sparse}
A query to a BuRR data structure as described so far needs to process about
$rw/2$ bits of data to produce $r$ bits of output. Despite the opportunity for
word parallelism, this can be a considerable cost.  It turns out that BuRR also
works with significantly more sparse bit patterns.  We can split $w$ bits into
$k$ groups and set just one randomly chosen bit per group.
The downside of sparse patterns is that they incur more linear
dependencies. This will induce more bumped keys and possibly more empty slots.
Our experiments indicate that the compromise is quite interesting.  For example
for $w=64$ and $k=8$ we can reduce the expected number of 1-bits by a factor of
four and observe faster queries for large $r$ at the cost of a slight increase
of space overhead. \Cref{fig:heatmap} indicates that our implementation
of sparse coefficient BuRR is indeed a good choice for $r\in\{8,16\}$.



\subsection{Parallelization and Memory Hierarchies}\label{ss:parallel}
As a static data structure, queries are trivial to parallelize on a
shared-memory  system.
Parallelizing construction can use (hard) sharding, i.e.,
subdividing the data structure into pieces that can be built
independently. An interesting property of BuRR is that sharding can be
done transparently in a way that does not affect query implementation
and thus avoids performance overheads for queries.  To subdivide the
data structure, we set the bumping information in such a way that each
piece is delimited by empty ranges of at least $w$ columns in the
constraint matrix.  This has the effect that the equation system can
be solved independently in parallel for ranges of columns separated by
such gaps. With the parametrization we choose for BuRR, the fraction
of additionally bumped keys will be negligible as long as $n\gg pw^2$
where $p$ is the number of threads.

For large BuRRs that fill a significant part of main memory, space
consumption during construction can be a major limitation.  Once more,
sharding is a simple way out -- by constructing the data structure one
shard at a time, the temporary space overhead for construction is only
proportional to the size of a shard rather than to the size of the
entire data structure.

An alternative is to consider a ``proper'' external memory algorithm.
The input is a file $F$ of $n$ key--value pairs. The output is a file containing
the layers of the BuRR data structure. A difficulty here is that
keys could be much larger than $\log n$ bits and that random
accesses to keys is much more expensive than just scanning or
sorting. We therefore outline an algorithm that, with high probability,
reads the key file only once. The trick is to first
substitute keys by \emph{master hash codes} (MHCs) with $c\log n$ bits for a constant $c>2$.
Using the well-known calculations for the birthday problem, this size
suffices that the MHCs are unique with high probability.
If a collision should occur, the
entire construction is restarted.%
\footnote{For use in AMQs, restarts are not needed since duplicate
  MHCs lead to identical equations that will be ignored as redundant
  by the ribbon solver.
}  Otherwise, the keys are never accessed again -- all
further hash bits are obtained by applying secondary hash functions to
the MHC.%
\footnote{We use
  a (very fast) linear congruential mapping
  \cite{Knuth:Vol2:81,Ribbon:Arxiv:2021} that, with some care, (multiplier
  congruent 1 modulo 4 and an odd summand) even defines a bijective
  mapping \cite{Knuth:Vol2:81}.  We also tried linear combinations of two
  parts of the MHC \cite{kirsch2008less} which did not work well
  however for a 64 bit MHC.}

Construction then begins by scanning the input file $F$ and producing
a stream $F_1$ of pairs (MHC, function value)%
\footnote{When used as an AMQ,
the function value need not be stored since it can be derived from the
MHC.}.  Construction at layer $i$ amounts to reading a stream $F_i$ of
MHC--value pairs.  This stream $F_i$ is then sorted by a bucket ID that
is derived from the MHCs.  A collision is detected when two pairs with
the same MHC show up.  These are easy to detect since they will
be sorted to the same bucket and the same column within that bucket.
Constructing the row echelon form (REM) then amounts to simultaneously
scanning $F_i$, the REM and the right-hand side. At no time during this
process do we need to keep more than two buckets worth of data in
main memory.  Bumped MHC--value pairs are output as a stream $F_{i+1}$ that is
the input for construction of the next layer.  Back-substitution
amounts to a backward scan of the REM and the right-hand
side -- producing the table for layer $i$ as an output.

Overall, the I/O complexity is the I/Os for scanning $n$ keys plus
sorting $\Oh{n}$ items consisting of a constant number of machine
words ($\Oh{\log n}$ bits).  The fact that there are multiple layers
contributes only a small constant factor to the sorting term since
these layers are shrinking geometrically with a rather large
shrinking factor.

\subsection{\texorpdfstring{Bu$^1$RR}{Bu1RR}: Accessing Only Two Layers in the Worst Case}\label{ss:twoLayer}
BuRR as described so far has worst-case constant access time when the
the number of layers is constant (our analysis and experiments use four layers).  However,
for real-time applications, the resulting worst case might be a limitation.
Here, we describe how the worst-case number of accessed layers can be limited to
two.
The idea is quite simple: Rather than mapping all keys to the first layer, we
map the keys to all layers using a distribution still to be determined.  We now
guarantee that a key originally mapped to layer $i$ is either retrieved there or
from layer $i+1$. A query for key $x$ now proceeds as follows. First the primary
layer $i(x)$ for that key is determined. Then the bumping information of layer
$i$ is queried to find out whether $i$ is bumped.  If not, $x$ is retrieved from
layer $i$, otherwise it is retrieved from layer $i+1$ \emph{without} querying
the bumping information for layer $i+1$.

For constructing layer $i$, the input consists of keys $E'_i$ bumped from layer
$i-1$ ($E'_1=\emptyset$) and keys $E_i$ having layer $i$ as their primary layer
($E_i=\emptyset$ for the last layer).  First, the bumped keys
$E'_i$ are inserted bucket by bucket.  In this phase, a failure to insert a key
causes construction to fail.  Then the keys $E_i$ are processed bucket by bucket
as before, recording bumped keys in $E'_{i+1}$.

The size of layer $i$ can be chosen as $(1+\eps)(|E'_i|+|E_i|)$ to achieve the
same overloading effect as in basic BuRR.  An exception is the last layer $i^*$
where we choose the size large enough so that construction succeeds with high
probability.  Note that if $|E_i|$ shrinks geometrically, we can choose $i^*$
such that $|E_{i^*}|$ is negligible.

Except in the last layer, construction can only fail due to keys already bumped,
which is a small fraction even for practical $w$.  In the unlikely\footnotemark\
case of construction failing, construction of that layer can be retried with
more space. Tests suggest that increasing space by a factor of $\frac{w+1}{w}$
has similar or better construction success probability than a fresh hash
function.  \footnotetext{We do not have a complete analysis of this case yet but
  believe that our analysis in \cref{sec:bumped-ribbon} can be adapted to show
  that the construction process will succeed with high probability for
  $w=\Om{\log n}$.}

A simpler Bu$^1$RR construction has layers of predetermined sizes, all in one
linear system. Construction reliability and/or space efficiency are reduced
slightly because of reduced adaptability. For moderate $n \approx w^{3}$, layers of
uniform size can work well, especially if the last layer is of variable size.

Our implementation of Bu$^1$RR, tested to scale to billions of keys, uses layers
with sizes shrinking by a factor of two, each with a power of two number of
buckets.  To a first approximation, the primary layer $i(x)$ of a key is simply
the number of leading zeros in an appropriate hash value, up to the maximum
layer.  To consistently saturate construction with expected overload of
$\alpha = -\eps$, this is modified with a bias for the first layer. A portion
($\alpha$) of values with a leading 0 (not first layer) are changed to a leading
1 (re-assigned to first layer), so $|E_0| \approx (1 + \alpha)2^{-1}m$ and other
$|E_i| \approx (1 - \alpha)2^{-(i+1)}m$.  With bumped entries,
$|E'_i| \approx \alpha 2^{-i}m$, expected overload is consistent through the
layers: $|E_i| + |E'_i| \approx (1 + \alpha)2^{-(i+1)}m$.



\section{Experiments}\label{s:exp}

\vspace{3mm} 
\myparagraph{Implementation Details.}  We implemented BuRR in C++
using template parameters that allow us to navigate a large part of
the design space mapped in the full paper.  We use sequential
construction using 64-bit master-hash-codes (MHCs) so that the input
keys themselves are hashed only once.  Linear congruential mapping is
used to derive more pseudo-random bits from the MHC.  When not
otherwise mentioned, our default configuration is BuRR with
left-to-right processing of buckets, aggressive right-to-left
insertion within a bucket, threshold-based bumping, interleaved
storage of the solution $Z$, and separately stored metadata. The data
structure has four layers, the last of which uses
$w'\Def\min(w,64)$ and $\eps\geq0$, where $\eps$ is increased in increments of 0.05 until no keys are bumped.  For \onebit, we
choose $t \Def \lceil -2\eps b + \sqrt{b/(1+\eps)}/2 \rceil$ and
$\eps \Def -2/3 \cdot w/(4b+w)$.  For 2-bit, parameter tuning showed that
 $\ell\Def\ceil{(0.13-\eps/2)b}, u\Def\ceil{(0.3-\eps/2)b}$, and
$\eps \Def -3/w$~work well for $w=32$; for $w \geq 64$, we use
$\ell=\ceil{(0.09-3\eps/4)b}$, $u=\ceil{(0.22-1.3\eps)b}$, and $\eps \Def -4/w$.

In addition, there is a prototypical implementation of Bu$^1$RR from
\cite{Ribbon:Arxiv:2021}; see \cref{ss:twoLayer}. Both BuRR and Bu$^1$RR build
on the same software for ribbon solving from
\cite{Ribbon:Arxiv:2021}. 
For validation we extend the experimental setup used for Cuckoo and Xor
filters~\cite{Lemire:fastfilter:2020}, with our code available at
\url{https://github.com/lorenzhs/fastfilter_cpp} and \url{https://github.com/lorenzhs/BuRR}.\lhs{todo publish}

\myparagraph{Experimental Setup.}\label{exp:hw} %
All experiments were run on a machine with an AMD EPYC 7702 processor with 64
cores, a base clock speed of 2.0\,GHz, and a maximum boost clock speed of
3.35GHz.  The machine is equipped with 1\,TiB of DDR4-3200 main memory and runs
Ubuntu 20.04.  We use the \texttt{clang++} compiler in version 11.0 with
optimization flags \texttt{-O3 -march=native}.  During sequential experiments,
only a single core was active at any time to minimize interference.

We looked at different input sizes $n\in\set{10^6,10^7,10^8}$.  Like most
studies in this area, we first look at a {\bf sequential} workload on a powerful
processor with a considerable number of cores.  However, this seems unrealistic
since in most applications, one would not let most cores lay bare but use
them. Unless these cores have a workload with very high locality this would have
a considerable effect on the performance of the AMQs. We therefore also look at
a scenario that might be the opposite extreme to a sequential unloaded
setting. We run the benchmarks on all available hardware threads in {\bf
  parallel}.  Construction builds many AMQs of size $n$ in parallel. Queries
select AMQs randomly.
This emulates a large AMQ that is parallelized using sharding and
puts very high
load on the memory system.

\myparagraph{Space Overhead of BuRR}

\Cref{fig:epsilon64}
plots the fraction $e$ of empty slots of BuRR for
$w=64$ and several combinations of bucket size $b$ and different
threshold compression schemes.  Similar plots are given in the appendix in
Figure~\ref{fig:furtherEpsilon} for $w=32$, $w=128$, and for $w=64$
with sparse coefficients.  Note that (for an infinite number of
layers), the overhead is about
$o=e+\mu/(rb(1-e))$ where $r$ is the
number of retrieved bits and $\mu$ is the number of metadata bits per
bucket. Hence, at least when $\mu$ is constant, the overhead is a
monotonic function in $e$ and thus minimizing $e$ also minimizes overhead.

We see that for small $|\eps|$, $e$ decreases exponentially. For
sufficiently small $b$, $e$ can get almost arbitrarily small. For fixed
$b>w$, $e$ eventually reaches a local minimum because with
threshold-based compression, a large overload enforces large
thresholds ($>w$) and thus empty regions of buckets.  Which actual
configuration to choose depends primarily on $r$.  Roughly, for larger
$r$, more and more metadata bits (i.e., small $b$, higher resolution
of threshold values) can be invested to reduce~$e$.  For fixed~$b$
and threshold compression scheme, one can choose $\eps$ to
minimize $e$. One can often choose a larger $\eps$ to get slightly
better performance due to less bumping with little impact on $o$.
Perhaps the most delicate tuning parameters are the thresholds to use
for 2-bit and \onebit compression (see \cref{ss:thresholdCompression}).
Indeed, in \cref{fig:epsilon64} \onebit compression has lower $e$ than
2-bit compression for $b=64$ but higher $e$ for $b=128$. We expect
that 2-bit compression could always achieve smaller $e$ than \onebit
compression, but we have not found choices for the threshold values
that always ensure this. \Cref{tab:configs} in \cref{app:exp} summarizes key parameters
of some selected BuRR configurations.

In all following experiments, we use $b=2^{\floor{w^2/(2\log_2w)}}$ for
uncompressed and 2-bit compressed thresholds, and $b=2^{\floor{w^2/(4\log_2w)}}$
when using \onebit threshold compression.

\subsubsection*{Performance of BuRR Variants}
We have performed experiments with numerous configurations of
BuRR.
\iftr
See \cref{tab:performance} in \cref{app:exp} for a small sample
and \cref{app:fullexp} for the complete list.
\else
See \cref{tab:performance} in \cref{app:exp} for a small sample;
we will publish a complete list online.\lhs{todo}
\fi%
The scatter plot in
\cref{fig:scatter} summarizes the performance--overhead trade-off for
$r\approx 8$. Plots for different values of $r$ and for construction
and query times separately are in \cref{app:exp}
(\cref{fig:scatter2,fig:scatter16,fig:scatterQuery,fig:scatterQueryNeg,fig:scatterConstruction}).

A small {\bf ribbon width} of {\boldmath $w=16$} is feasible but does not pay
off with respect to performance because its high bumping rates drive up
the expected number of layers accessed.  Choosing {\boldmath $w=32$} yields the
best performance in many configurations but the penalty for going to
{\boldmath $w=64$} is very small while reducing overheads.  In contrast, {\boldmath
  $w=128$} has a large performance penalty for little additional gain --
overheads far below $1\,\%$ are already possible with $w=64$. Thus, on a
64-bit machine, $w=64$ seems the logical choice in most situations.

\begin{figure}[tb]
  \centering
  \includegraphics[page=2,width=\textwidth]{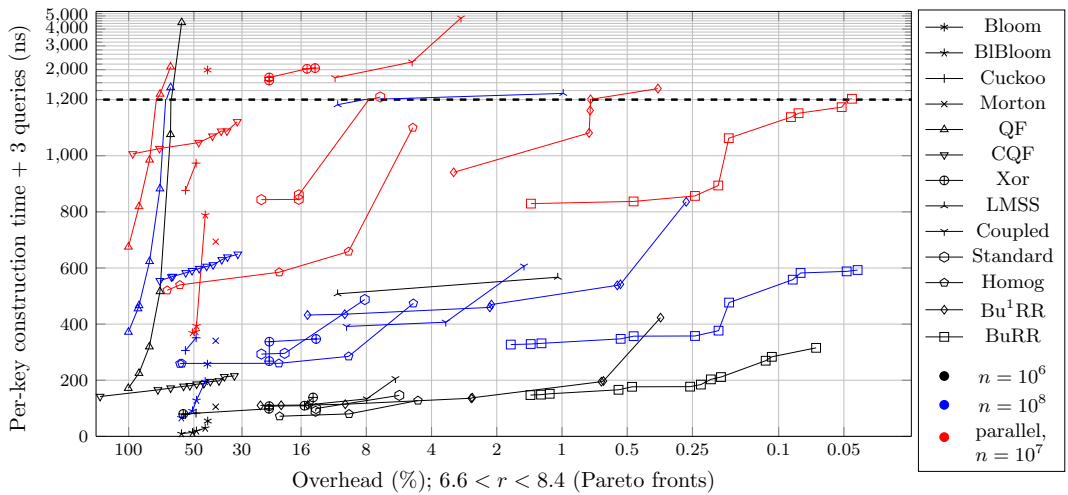}
  \caption{\label{fig:scatter}Performance--overhead trade-off for
    false-positive rate $2^{-r}$, approximately $0.3\,\%$ to $1\,\%$,
    for different AMQs and inputs. For each category of approaches, only
    points are shown that are not dominated by variants in the same
    category.  Sequential benchmarks use a single AMQ of size $n$
    while the parallel benchmark uses $1280$ AMQs and 64 cores.
    Logarithmic vertical axis above 1200\,ns.}
\end{figure}

With respect to performance, {\bf 1$^+$-bit} compression is slightly
slower than {\bf 2-bit} compression or uncompressed thresholds but not
by a large margin.  However, {\bf 1$^+$-bit} achieves the lowest
overheads.
{\bf Interleaved} table representation (see \cref{representation})
is mostly faster than {\bf contiguous} representation. This might change for
sufficiently large $r$ and use of advanced bit-manipulation or SIMD
instructions.
Nevertheless, {\bf sparse coefficients} with 8 out of 64 bits using contiguous
representation achieve significantly better query performance than the best
dense variant with comparable or better overhead when contiguous storage is
efficiently addressable, i.e., $r$ is a multiple of 8.
{\bf Bu$^1$RR} is around 20\,\% slower than BuRR and also somewhat
inferior with respect to the achieved overheads.  This may in part be due to
less extensive parameter tuning.  When worst-case guarantees
matter, Bu$^1$RR thus might be a good choice.\pesa{PD would you like to add anything?}


In a parallel scaling experiment with $10^{10}$ keys, construction and
queries both scaled well, shown in \cref{fig:scaling} in \cref{app:exp}.  Constructing many AMQs in parallel
achieves speedups of $65-71$ depending on the configuration when using all
64 cores plus hyperthreads of our machine (50 without hyperthreading).
Individual query times are around $15\,\%$ higher than sequentially when using all
cores, and $50\,\%$ higher when using all hyperthreads.  This
approximately matches the speedups for construction.

\subsubsection*{Comparison with Other Retrieval and AMQ Data Structures}
To compare BuRR with other approaches, we performed an extensive case study of
different {\bf AMQs} and retrieval data structures.  To compare space overheads,
we compare $r$-bit retrieval data structures to AMQs with false positive rate
$2^{-r}$.  Our benchmark extends the framework used to evaluate Cuckoo and Xor
filters~\cite{Lemire:fastfilter:2020}, with our modified version available at
\url{https://github.com/lorenzhs/fastfilter_cpp}. 
In addition to adding our implementations of standard ribbon, homogeneous
ribbon, BuRR, and Bu${}^1$RR, we integrated existing implementations of Quotient
Filters~\cite{Maier:lpqfilter:2020,MSW:ConcQuotientFilter:2020} and LMSS,
Coupled, GOV, 2-block, and BPZ
retrieval~\cite{Walzer:RetrievalImplementation:2020,W:Thesis}.  We extended the
implementations of LMSS, Coupled, and BPZ to support the cases of $r=8$ and
$r=16$ in addition to one-bit retrieval.\footnote{This is easy for peeling-based
  approaches, but far more work would be required to do the same for GOV and
  2-block.}

We also added a parallel version of the benchmark where each thread constructs
a number of AMQs independently, but queries access all of them randomly.  In
the Cuckoo filter implementation, we replaced calls to \texttt{rand()} with a
\texttt{std::mt19937} Mersenne Twister to eliminate a parallel bottleneck.  The
implementation of LMSS cannot be run simultaneously in multiple threads of the
same process and was excluded from the parallel benchmark.

Both the sequential and the parallel benchmark use three query workloads:
positive queries, negative queries, and a mixed set containing equal numbers of
positive and negative queries.  We report many results in the form of
construction time per key plus the time for one query from each of the three
sets, measured by dividing the running time for construction plus $n$ queries of
each type by the input size~$n$.  This metric is a reasonable tradeoff between
construction and queries; we provide figures for the individual components in
\cref{app:exp} (\cref{fig:scatterQuery,fig:scatterQueryNeg,fig:scatterConstruction}).

Once more, the scatter plot in \cref{fig:scatter} summarizes the
performance--overhead trade-off for $r\approx 8$; other values of
$r$ are covered in \cref{app:exp} (\cref{fig:scatter2,fig:scatter16}). In addition,
\cref{fig:heatmap} gives an overview of the
fastest approach for different values of $r$ and overhead.
We now discuss different approaches progressing from high to low space overhead.

\newcommand{\smallerskip}{\medskip}
\smallerskip\par
\noindent
    {\bf\sffamily Bloom Filters Variants:} 
    Plain Bloom filters \cite{B:Space:1970} set $k\sim
    \log(1/\varphi)$ random bits in a table for each key in order to
    achieve false-positive rate $\varphi$. They are the most well-known
    and perhaps most widely used AMQ. However, they have an inherent
    space overhead of at least 44\,\% compared to the
    information-theoretic optimum.  Moreover, for large inputs they
    cause almost $k$ cache faults for each insertion and positive
    query.  {\bf Blocked Bloom filters}
    \cite{Putze:Efficient-Bloom-Filters:2009} are faster because they
    set all of the $k$ bits in the same cache block.  The downside is
    that this increases the false-positive rate, in particular for
    large~$k$. This can be slightly reduced using {\bf two blocks}.

\smallerskip\par
\noindent {\bf\sffamily Cuckoo Filters}
\cite{FAK:CuckooFilterBetter:2013} store a random fingerprint for each
key (similar to retrieval-based AMQs). However, to allow good space efficiency, several positions need to be
possible. This introduces an intrinsic space overhead
of a few bits per key
that is further
increased by some empty slots that are required to allow fast insertions. The
latter overhead is reduced in {\bf Morton filters}
\cite{BJ:MortonFilters:2020} which can be viewed as a compressed
representation of cuckoo filters.
In our measurements, cuckoo and Morton filters are the most space efficient
dynamic AMQ for small $\varphi$, but are otherwise outperformed
by other constructions.

\smallerskip\par\noindent \textbf{\textsf{Quotient Filters (QF)}}
\cite{BFJKKMMSS:QuotientFilters:2012} can be viewed as a compressed
representation of a Bloom filter with a single hash function.
QFs support not only insertions but also deletions and
counting.  Similar to cuckoo filters, they incur an overhead of a few
bits per key (2--3 depending on the implementation) plus a
multiplicative overhead due to empty entries. Search time grows fairly
quickly as the multiplicative overhead approaches zero.
\textbf{Counting Quotient Filters (CQF)} \cite{Pandey:CQF:2017}
mitigate this dependence at the cost of less locality for low fill
degrees. Overall, quotient filters are good if their rich set of
operations is needed but cannot compete with alternatives (e.g.,
blocked Bloom filters and cuckoo filters) for dynamic AMQs without
deletions. Compared to static AMQs, they have comparable or slower
speed than BuRR but two orders of magnitude higher overhead.

\smallerskip\par
\noindent {\bf\sffamily Xor filter/retrieval.} {\bf
  Xor Filters} and {\bf Xor+ Filters} \cite{GL:XorFilters:2020} are a
recent implementation of peeling-based retrieval that can reduce
overhead to 23\,\% and 14\,\%, respectively.  An earlier
implementation \cite{BPZ:Practical:2013} is 35--90\,\% slower to
construct and also has slightly slower queries.
\textbf{Fuse}
filters \cite{DW:DensePeelable:2019}, a variant of Xor, achieve higher loads and
are fast to construct if $n$ is not too large, but construction becomes slow for
large $n$ and parallel queries are less efficient than for plain Xor
filters. In the plots we only show the Pareto-optimal variants in each case and
call all of them Xor in the following and in \cref{s:intro}.
Among the tried retrieval data structures, Xor has the fastest queries for sequential settings
but is otherwise dominated by BuRR, which has one to two orders of magnitude smaller overheads.

\smallerskip\par
\noindent {\bf\sffamily Low overhead peeling-based retrieval.}  By
using several hash functions in a nonuniform way, peeling-based
retrieval data structures can in principle achieve arbitrarily small
overheads and thus could be viewed as the primary competitors of
ribbon-based data structures.
{\bf LMSS} \cite{LMSS:Efficient_Erasure:2001} was
originally proposed as an error-correcting code but can also be used
for retrieval.  However,
in the experiments it is clearly dominated by BuRR.  The more recent
{\bf Coupled} peeling approach \cite{W:SpatialCoupling:2021} can achieve
Pareto-optimal query times for large sequential inputs but is
otherwise dominated by ribbon-based data structures. Coupled has
faster construction times than LMSS but in that respect is several times slower than
BuRR for large $n$ and in our parallel benchmark, even when it is allowed an order of magnitude more
overhead.  Nevertheless, when
disregarding ribbon-based data structures, Coupled comes closest to a
practical retrieval data structure with very low overhead.
For small inputs ($n=10^6$), $r\approx 16$ and overhead between 8 and 15\,\%, it
is even the fastest AMQ in our benchmark (see \cref{fig:heatmap}).
Perhaps for large $r$\pesa{and large $n$?}\lhs{no, for small $n$!} and by engineering faster construction algorithms,
Coupled could become more competitive in the future.


\smallerskip\par
\noindent {\bf\sffamily Standard Ribbon} can achieve overhead
around 10\,\%.  However, it often fails for large $n$, requiring larger
space overhead when used without sharding on top of it.
For AMQs this can be elegantly remedied using {\bf homogeneous ribbon filters}.
Thus, in the heatmap, homogeneous ribbon occupies the area between (blocked) Bloom filters and BuRR and its variants.
However, the performance advantage over BuRR in parallel and large settings is not very large (typically 20\,\%).

\smallerskip\par
\noindent {\bf\sffamily BuRR} and its variants take the entire right part of
the heatmap.  Compared to the best competitor -- homogeneous ribbon
filters -- overhead drops from around 10\,\% to well below 1\,\% at a
moderate performance penalty. In particular, due to BuRR's high
locality, performance is even better than for successful competitors
like Xor, Cuckoo, or Bloom filters.

\smallerskip\par
\noindent {\bf\sffamily 2-block}
\cite{DW:Retrieval-log-extra-bits:2019} can be viewed as a
generalization of ribbon-based approaches that use two rather than one
block of nonzeroes in each row of the constraint
matrix. Unfortunately this makes the equation system much more
difficult to solve. This implies very expensive construction even
when aggressively using the sharding trick. In our experiments, an implementation
by Walzer \cite{W:Thesis} for $r=1$ achieves smaller overhead than BuRR with $w=128$ at the price of an order of magnitude larger construction time. It is however likely that a BuRR implementation able to handle $w=256$ would dominate 2-block.

\smallerskip\par
\noindent {\bf\sffamily Techniques not tried.}  There are a few
interesting retrieval data structures for which we had no available
implementation.  {\bf FiRe}
\cite{Sanders:Retrieval-FingerPrinting:2014} is likely to be the
fastest existing retrieval data structure and also supports updates to
function values as well as a limited form of insertions. FiRe maps
keys to buckets fitting into a cache line.  Per-bucket metadata is
used to uniquely map some keys to data slots available in the
bucket while bumping the remaining keys. This requires a constant
number of metadata bits per key (around 4) and thus  implies an
overhead two orders of magnitude larger than BuRR.  Additionally, the only
known implementation of FiRe is closed source and owned by SAP, and was not
available to us.

We are not aware of implementations of the lookup-table based
approaches \cite{P:An_Optimal:2009,BelazzouguiV13} and do not view
them as practically promising for the reasons discussed in
\cref{s:related}.




\section{Conclusion and Future Work}

\pesa{significantly rewritten conclusion}BuRR is a considerable
contribution to close a gap between theory and practice of retrieval
and static approximate membership data structures.  From the
theoretical side, BuRR is succinct while achieving constant access
cost for small number of retrieved bits ($r=\Oh{\log(n)/w}$). In
contrast to previous succinct constructions with better asymptotic
\stwa{performance$→$running times. Moreover: “our overhead is small” and “our overhead is tunable” are separate statements, so I slightly rephrased}running times,
its overhead is \emph{tunable} and already small for realistic values of $n$.
In practice, BuRR is faster than
widely used data structures with much larger overhead and reasonably
simple to implement. Our results further strengthen the success of
linear algebra based solutions to the problem. Our on-the-fly approach
shows that Gauss-like solvers can be superior to peeling based greedy
solvers even with respect to speed.

While the wide design space of BuRR leaves room for further practical
improvements, we see the main open problems for large $r$. In
practice, peeling based solvers (e.g., \cite{W:SpatialCoupling:2021})
might outperform BuRR if faster construction algorithms can be found
-- perhaps using ideas like overloading and bumping. In theory,
existing succinct data structures
(e.g. \cite{P:An_Optimal:2009,BelazzouguiV13}) allow constant query
time but have high space overhead for realistic input sizes.
Combining constant cost per element for large $r$ with small (preferably tunable) space overhead therefore remains a theoretical promise yet to be convincingly redeemed in practice.
\stwa{Old: \pesa{Achieving
succinctness \emph{and} tunability at constant cost per element thus
is a theoretical goal that might close the remaining gap between theory
and practice.} “succinctness \emph{and} tunability at constant cost per element” is theoretically brittle, i.e.\ easily achieved by superficially modifying e.g.\ Porat in a way that is not helpful. I rephrased the last sentence.}

\pesa{reformulated: We present a new static retrieval data structure which achieves excellent trade-offs between high space efficiency and fast running times. Like \cite{DW:One-Block-per-Row:2019}, the construction involves solving a system of linear equations over the two-element field, where each row contains a block of random bits in a random position. We augment this setup with a load balancing idea: The construction algorithm is given less space than required to store all keys but is permitted to \emph{bump} sets of keys in overloaded regions to a fallback data structure. There are two major benefits: Firstly, space utilization is significantly improved, and secondly, the width $w$ of blocks need no longer scale with $\O(\log n)$, which improves running times.
\ifconfVersion\else
We give a self-contained analysis in terms of an easy-to-grasp visual concept (the ribbon diagonal).
\fi

In an extensive experimental evaluation, we achieve overheads around 1\,\% at
running times with which previously, only overheads of more than 10\,\% were
achievable.  Overheads below 0.1\,\% are possible at modest additional
cost.}








\subsection*{Acknowledgements}
We thank Martin Dietzfelbinger for early contributions to this line of research. We thank others at Facebook for their supporting roles, including Jay Zhuang, Siying Dong, Shrikanth Shankar, Affan Dar, and Ramkumar Vadivelu.


\bibliographystyle{plainurl}
\bibliography{bibliographie}

\appendix

\section{More on the Design Space of BuRR}
\label{s:moredesign}

\subsection{Different Insertion Orders Within a Bucket}\label{ss:insertionOrder}

\vspace{3mm}
\myparagraph{Right-to-left.} Assuming that we process buckets from left to right (see also \cref{ss:globalOrder}), a
simple and useful insertion order within a bucket is from right to
left, i.e., by decreasing values of $s(x)$.
This ordering takes into account that the leftmost part of the bucket
is already mostly filled with keys spilled over from the previous
bucket whereas the next bucket is still empty. Thus, proceeding from
the right to the left, placement is initially easy.  With overloaded
buckets (negative $\eps$), placement gets more and more difficult
and gradually needs to place more and more keys in the next bucket.
The analysis in \cref{sec:bumped-ribbon}
suggests that linear dependencies mostly occur in two ways. Either, when
the overload of this bucket (or some subrange of it) is too large, or
when it runs into the keys placed into the left part of the bucket
when the previous bucket was constructed.  We make the former event
unlikely by choosing appropriate values of $b$ and $\eps$.  When
the latter event happens, we can bump a range of keys allocated to the
leftmost range of the bucket. This bumping scheme is discussed in
\cref{ss:thresholdCompression}.  The right-to-left ordering then helps
us to find the right threshold value.  An illustration is shown in
\cref{fig:fillrtlnew}.


\myparagraph{(Quasi)random order.}
The above simple insertion order is limited to situations where the
overload per bucket is less than $w$ most of the time.  Otherwise,
placement will often fail early, bumping many keys that could
actually be placed because their $s$ falls into a range of slots
that remain empty. We can achieve more flexibility in choosing
$\eps$ by spreading keys more uniformly over the bucket during
the insertion process.  We tried several such approaches.  Bu$^1$RR
uses additional hash bits to make the ordering for bumping independent
of the position within the bucket.
Another interesting variant is most easy to explain
when $b$ is a power of two.  We use $c=\log b$ bits of hash function
value to define the position within a bucket. However, rather than
directly using the value as a column number, we use its \emph{mirror
  image}, i.e., a value $h_{c-1}h_{c-2}\ldots h_1h_0$ addresses column
$h_0h_1\ldots h_{c-2}h_{c-1}$ of a bucket.  We also tried a tabulated
random permutation, which according to early experiments, works slightly
worse than the mirror permutation.

\subsection{Metadata for Bumping Multiple Groups}\label{ss:groupCompression}

A disadvantage of threshold-based bumping is that a single failed
placement of a key implies that all subsequently placed keys
allocated to that bucket must be bumped. This penalty can be mitigated
by subdividing the positions in the insertion order into multiple groups
that can be bumped separately, e.g., by storing a single bit that
indicates whether a group is bumped.
Choosing \emph{groups of uniform (expected) size} is simple and fast.
It works well when
highly compressed metadata is not of primary importance, e.g., when
$r$ is large.

Better compression can be achieved by choosing \emph{groups
of variable size}. Bu$^1$RRs use groups whose sizes are a geometric
progression with a factor about $\sqrt{2}$ between subsequent group
sizes.  For example, to cover a bucket of size $b=1024$ using 8 bits of metadata, one could use
groups of expected sizes 28, 40, 57, 80, 113, 160, 226, and 320.
Note that smaller groups cannot and need not be bumped since
the main effect of bumping ``too much'' is that fewer keys spill over to the next bucket which can be rectified there. This is also the underlying reason why highly compressed representations of the threshold metadata from \cref{ss:thresholdCompression} are sufficient.

\subsection{Aggressive versus Cautious Bumping}\label{ss:cautious}

By default, the generic BuRR solving approach described in \cref{s:designspace} is
greedy, i.e., it tries to place as many keys as possible into the current bucket.
This will usually spill over close to $w$ keys into the next
bucket.  This increases the likelihood that construction in that
bucket fails early.  It might be a better approach to be more cautious
and try to avoid this situation.  For example, when after processing a
column $j$, more than $\alpha w$ keys are already placed in the
next bucket then all further keys are bumped from the current
bucket. More sophisticated balancing approaches are conceivable.

\pesa{NEW, pd please check.}For example, we use a form of cautious bumping for our implementation
of Bu$^1$RR (see \cref{ss:twoLayer}).  After placing keys bumped from
the previous layer, when processing bucket $i$, we first try to place
keys in its largest group. Then we try to place the keys in the second
largest group in bucket $i-1$, the third largest group in bucket $i-2$,
etc.  This can be viewed as driving a ``wedge'' through the buckets.

\subsection{Different Global Insertion Orders}\label{ss:globalOrder}
Many of our implemented variants of BuRR process buckets from left to
right and, within a bucket, place keys $x$ from right to left with
respect to $s(x)$.
We also tried the dual approach -- traversing the buckets from right to left and then
inserting from left to right within a bucket.  This behaves
identically with respect to space efficiency but leads to far more row operations and much
higher construction times. The straightforward ordering from left to
right both between and inside buckets does not work well with
aggressive bumping -- placement frequently fails early. We expect that
it may turn out to be a natural order for a cautious bumping
strategy.  Many other insertion orders can be considered. However, the
global left-to-right order has the advantage that spilling keys to
the next bucket is cheap since it is still empty. Thus, other orders
might have higher insertion time.

\subsection{Table Representation}\label{representation}

\begin{SCfigure}
  \centering
  \begin{subfigure}[t]{0.3\columnwidth}\centering
    \includegraphics{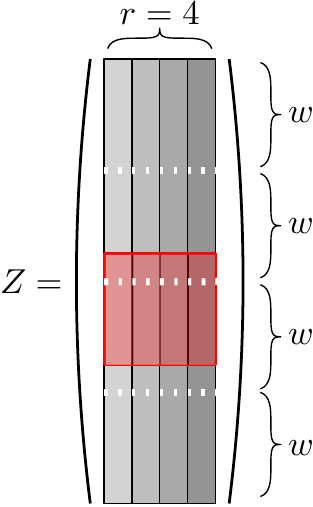}
  \end{subfigure}
  \begin{subfigure}[t]{0.3\columnwidth}\centering
    \includegraphics{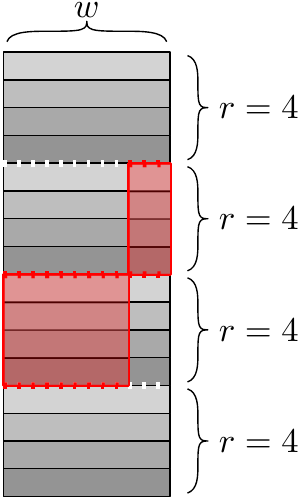}
  \end{subfigure}
  \caption{A solution matrix (left) and $w$-bit interleaved column-major layout
    (right) of that matrix.  The red shaded region shows the bits used in a
    query crossing a boundary.}
\end{SCfigure}

\myparagraph{Interleaved Versus Contiguous Storage.}
\emph{Contiguous} storage is the ``obvious'' representation of the
table by $m$ slots of $r$ contiguous bits each.  \emph{Interleaved}
means that the table is organized as $rm/w$ words of $w$ bits each
where word $i$ represents bit $i\bmod r$ of $w$ subsequent table
entries~\cite{Ribbon:Arxiv:2021}.  This organization allows the extraction
of one retrieved bit from two adjoining machine words using
population-count instructions.  Interleaved representation is
advantageous for uses of BuRR as an AMQ data structure since a negative
query only has to extract two bits in expectation. Moreover, the
implementation directly works for any value of $r$.

The contiguous representation, despite its conceptual simplicity, is
more difficult to implement, in particular when $r$ is not a power of
two. On the other hand, we expect it to be more efficient when all $r$
bits need to be retrieved, in particular when $r$ is large and when
the implementation makes careful use of available bit-manipulation\footnote{For example, the BMI2 bit manipulation operations
  PDEP and PEXT in newer x86 processors look useful.}  and SIMD
instructions.  This is particularly true when the sparse bit patterns
from \cref{ss:sparse} are used.

\myparagraph{Embedded Versus Separate Metadata.}
The ``obvious'' way to represent metadata is as a separate array with
one entry for each bucket (and a separate hash table for the \onebit
representation). On the other hand, if the metadata cannot be assumed
to be resident in cache, it is more cache-efficient to
store one bucket completely in one cache line, holding both its table
entries and metainformation. Then, assuming $b\geq w$, querying
the data structure accesses only one cache line plus possibly the next
cache line when the accessed part of the table extends there. In preliminary
experiments with variants of this approach, we observed
performance improvements of up to 10\,\% in some cases. We believe
that, depending on the implementation and the use case, the difference
could also be bigger but have not investigated this further since there
are too many disadvantages to this approach:
In particular, in the most
space-efficient configurations of BuRR, buckets can be bigger than a
cache line and the metadata will often fit into cache
anyway. Furthermore, when the memory system is not too contended,
metadata and primary table can be accessed in parallel, thus
eliminating the involved overhead.
Finally, embedded metadata is more complicated to implement.




\section{Further Experimental Data}\label{app:exp}

The following figures and tables contain
\begin{description}
  •[\cref{fig:furtherEpsilon}.] Shows the fraction of empty slots for many BuRR variants, supplementary to \cref{fig:epsilon64}.
  •[\cref{fig:scatter2,fig:scatter16}.] Performance-overhead trade-off of AMQs for very high false positive rate ($≈ 50\%$) and very low false positive rate ($≈ 0.01\%$) roughly corresponding to performance of $1$-bit retrieval and $16$-bit retrieval for  the retrieval-based AMQs.
  •[\cref{fig:scatterQuery,fig:scatterQueryNeg,fig:scatterConstruction}.] Performance-overhead trade-off of AMQs as in \cref{fig:scatter}, but seperately for positive queries, negative queries and construction.
  •[\cref{fig:scaling}.] Experiments concerning parallel scaling.
  •[\cref{tab:performance,tab:performance2}.] Numerical display of selected data.
  •[\cref{tab:configs}.] Overloading factors used for various BuRR configurations and corresponding overhead.
\end{description}

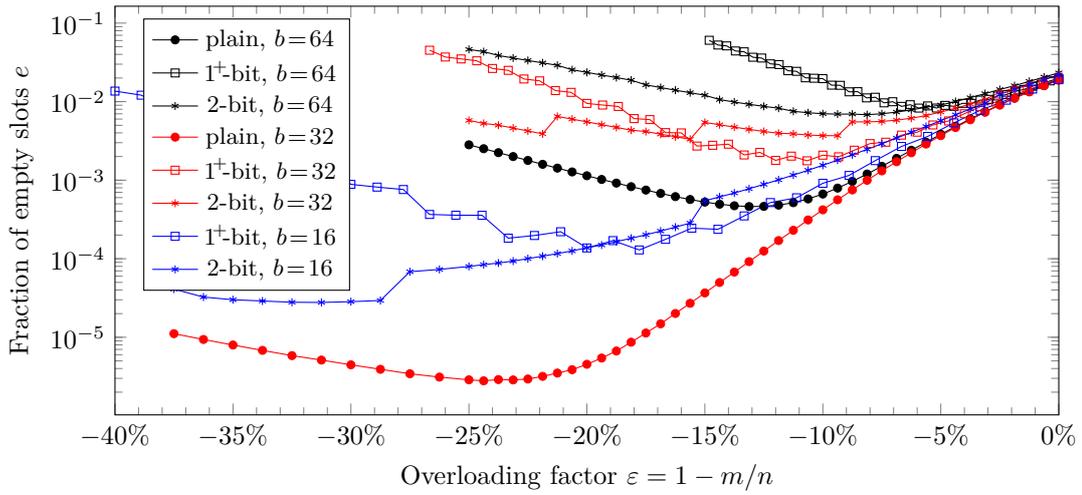
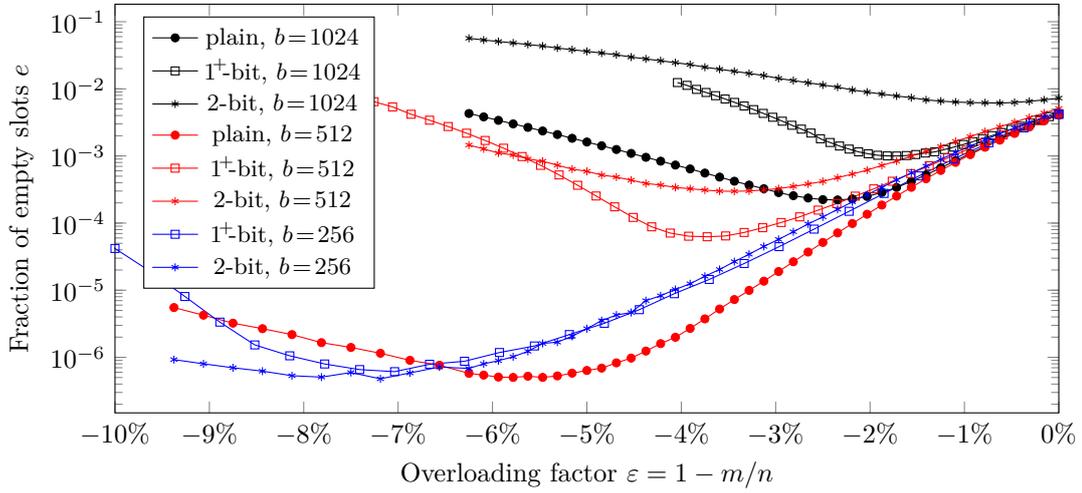
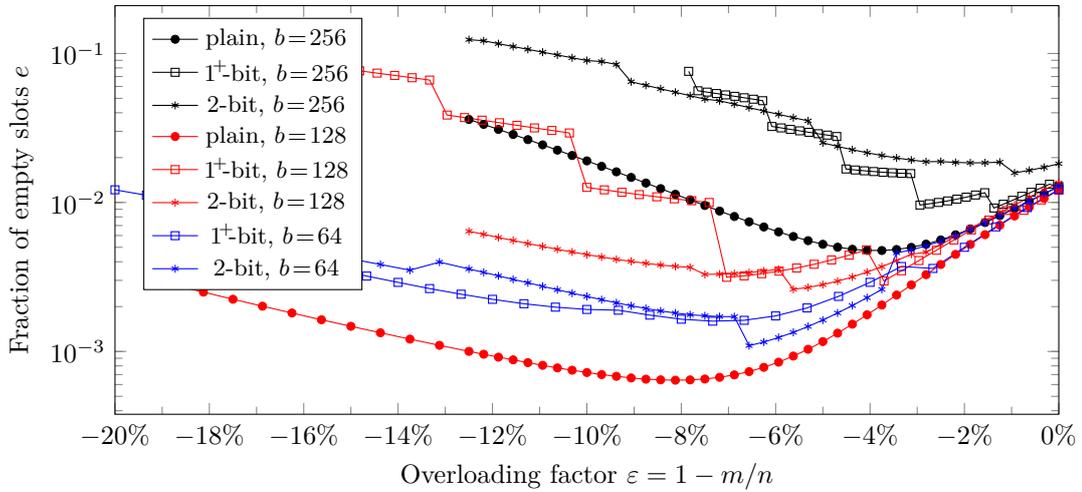
\begin{figure}
  \pgfplotsset{
    xlabel={Overloading factor $\eps = 1-m/n$},
    ylabel={Fraction of empty slots $e$},
    width=\textwidth,
    height=7cm,
    xmax=0,
    legend style={font=\small},
    cycle list name=eps2,
    every axis/.style={ymode = log, mark size=1.5pt},,
    xticklabel={\pgfmathprintnumber\tick\%}
  }
  \begin{subfigure}[t]{\columnwidth}
  \centering
  \begin{tikzpicture}
    \begin{axis}[xmin=-40, legend pos=north west,xtick distance=5,minor tick num=4]
\addplot coordinates { (-25.0,0.00281985) (-24.375,0.00251523) (-23.75,0.00224156) (-23.125,0.00199934) (-22.5,0.00178396) (-21.875,0.00159252) (-21.25,0.00142432) (-20.625,0.00127346) (-20.0,0.00113899) (-19.375,0.00102135) (-18.75,0.000918189) (-18.125,0.000826938) (-17.5,0.000748364) (-16.875,0.000678421) (-16.25,0.000617974) (-15.625,0.000567619) (-15.0,0.000526109) (-14.375,0.000494696) (-13.75,0.000472418) (-13.125,0.000461536) (-12.5,0.000463788) (-11.875,0.000481065) (-11.25,0.00051721) (-10.625,0.000577045) (-10.0,0.000666301) (-9.375,0.000792733) (-8.75,0.000966067) (-8.125,0.00119825) (-7.5,0.00150193) (-6.875,0.00189584) (-6.25,0.00240321) (-5.625,0.00304146) (-5.0,0.00383581) (-4.375,0.00481971) (-3.75,0.00601273) (-3.125,0.0074447) (-2.5,0.00914354) (-1.875,0.011126) (-1.25,0.0134129) (-0.625,0.0160094) (0.0,0.0189151) };
\addlegendentry{plain, $b\!=\!64$};
\addplot coordinates { (-14.8148,0.0604196) (-14.4444,0.0525656) (-14.0741,0.0512619) (-13.7037,0.0441724) (-13.3333,0.0431096) (-12.963,0.0367608) (-12.5926,0.0359168) (-12.2222,0.0302971) (-11.8519,0.0296466) (-11.4815,0.0247607) (-11.1111,0.0243214) (-10.7407,0.0201198) (-10.3704,0.0198607) (-10.0,0.0196279) (-9.62963,0.0162194) (-9.25926,0.0161384) (-8.88889,0.0133389) (-8.51852,0.0133988) (-8.14815,0.0111703) (-7.77778,0.0113698) (-7.40741,0.009654) (-7.03704,0.00997595) (-6.66667,0.0087028) (-6.2963,0.00913976) (-5.92593,0.00827434) (-5.55556,0.00882198) (-5.18518,0.00828868) (-4.81481,0.00895836) (-4.44444,0.00871386) (-4.07407,0.0095093) (-3.7037,0.00951505) (-3.33333,0.0104543) (-2.96296,0.0106995) (-2.59259,0.0117985) (-2.22222,0.0122777) (-1.85185,0.0135451) (-1.48148,0.014277) (-1.11111,0.0157352) (-0.740741,0.0167265) (-0.37037,0.0183889) (0.0,0.019677) };
\addlegendentry{\onebits, $b\!=\!64$};
\addplot coordinates { (-25.0,0.0462501) (-24.375,0.0433203) (-23.75,0.0386847) (-23.125,0.0359251) (-22.5,0.033425) (-21.875,0.0311664) (-21.25,0.0290017) (-20.625,0.025471) (-20.0,0.0235715) (-19.375,0.0218268) (-18.75,0.0202709) (-18.125,0.0187572) (-17.5,0.0163564) (-16.875,0.0151153) (-16.25,0.013982) (-15.625,0.0129939) (-15.0,0.0120632) (-14.375,0.0106249) (-13.75,0.00987803) (-13.125,0.00925084) (-12.5,0.00870861) (-11.875,0.00824855) (-11.25,0.00753084) (-10.625,0.00722003) (-10.0,0.00702144) (-9.375,0.00693155) (-8.75,0.00694659) (-8.125,0.00681598) (-7.5,0.00701364) (-6.875,0.00735555) (-6.25,0.0078484) (-5.625,0.00848145) (-5.0,0.00905993) (-4.375,0.0100333) (-3.75,0.0112283) (-3.125,0.0126522) (-2.5,0.0142769) (-1.875,0.0159698) (-1.25,0.0181269) (-0.625,0.0205525) (0.0,0.0232584) };
\addlegendentry{2-bit, $b\!=\!64$};
\addplot coordinates { (-37.5,1.11081e-05) (-36.25,9.37563e-06) (-35.0,7.96875e-06) (-33.75,6.81062e-06) (-32.5,5.84e-06) (-31.25,5.125e-06) (-30.0,4.45688e-06) (-28.75,3.91187e-06) (-27.5,3.43125e-06) (-26.25,3.11188e-06) (-25.0,2.87187e-06) (-24.375,2.78875e-06) (-23.75,2.89312e-06) (-23.125,2.86312e-06) (-22.5,2.94188e-06) (-21.875,3.17813e-06) (-21.25,3.50063e-06) (-20.625,3.8625e-06) (-20.0,4.52938e-06) (-19.375,5.43937e-06) (-18.75,6.68375e-06) (-18.125,8.65375e-06) (-17.5,1.13256e-05) (-16.875,1.48469e-05) (-16.25,2.00931e-05) (-15.625,2.70644e-05) (-15.0,3.66987e-05) (-14.375,4.98331e-05) (-13.75,6.76469e-05) (-13.125,9.18306e-05) (-12.5,0.000125168) (-11.875,0.000170349) (-11.25,0.000230206) (-10.625,0.000311814) (-10.0,0.000419496) (-9.375,0.000562524) (-8.75,0.000751539) (-8.125,0.000997064) (-7.5,0.00131382) (-6.875,0.00171862) (-6.25,0.00223454) (-5.625,0.00288355) (-5.0,0.00368815) (-4.375,0.00467869) (-3.75,0.00587932) (-3.125,0.00731922) (-2.5,0.00902541) (-1.875,0.0110184) (-1.25,0.0133117) (-0.625,0.0159144) (0.0,0.0188268) };
\addlegendentry{plain, $b\!=\!32$};
\addplot coordinates { (-26.6667,0.0451441) (-26.0,0.0370059) (-25.3333,0.0349849) (-24.6667,0.033125) (-24.0,0.0264459) (-23.3333,0.0250276) (-22.6667,0.0194392) (-22.0,0.018398) (-21.3333,0.0138227) (-20.6667,0.0131238) (-20.0,0.00949072) (-19.3333,0.00905382) (-18.6667,0.0086791) (-18.0,0.00608605) (-17.3333,0.00590583) (-16.6667,0.00403633) (-16.0,0.00400066) (-15.3333,0.00271859) (-14.6667,0.00278075) (-14.0,0.00287182) (-13.3333,0.00209442) (-12.6667,0.00225914) (-12.0,0.00178687) (-11.3333,0.00202378) (-10.6667,0.00176561) (-10.0,0.00208084) (-9.33333,0.00198425) (-8.66667,0.00240735) (-8.0,0.00292369) (-7.33333,0.00304325) (-6.66667,0.00374371) (-6.0,0.0040814) (-5.33333,0.00504708) (-4.66667,0.00567698) (-4.0,0.00701796) (-3.33333,0.00860974) (-2.66667,0.00989144) (-2.0,0.0120256) (-1.33333,0.0139236) (-0.666667,0.0166905) (0.0,0.0193325) };
\addlegendentry{\onebits, $b\!=\!32$};
\addplot coordinates { (-25.0,0.00577003) (-24.375,0.00528416) (-23.75,0.00501518) (-23.125,0.00460493) (-22.5,0.00423449) (-21.875,0.00388932) (-21.25,0.00648413) (-20.625,0.0059883) (-20.0,0.00552539) (-19.375,0.00509236) (-18.75,0.00470129) (-18.125,0.00433298) (-17.5,0.00415975) (-16.875,0.00385628) (-16.25,0.00357131) (-15.625,0.00331777) (-15.0,0.00543868) (-14.375,0.00507431) (-13.75,0.00473551) (-13.125,0.00443709) (-12.5,0.00417045) (-11.875,0.00395806) (-11.25,0.00389311) (-10.625,0.00375986) (-10.0,0.00369022) (-9.375,0.00368281) (-8.75,0.00553159) (-8.125,0.00552104) (-7.5,0.00559661) (-6.875,0.0058124) (-6.25,0.00615072) (-5.625,0.00665013) (-5.0,0.00737401) (-4.375,0.00823606) (-3.75,0.00931172) (-3.125,0.0106273) (-2.5,0.0131815) (-1.875,0.0148951) (-1.25,0.0169044) (-0.625,0.0192242) (0.0,0.0218684) };
\addlegendentry{2-bit, $b\!=\!32$};
\addplot coordinates { (-44.4444,0.0225065) (-43.3333,0.0198044) (-42.2222,0.0174487) (-41.1111,0.0154018) (-40.0,0.0136246) (-38.8889,0.0120889) (-37.7778,0.00821061) (-36.6667,0.00726234) (-35.5556,0.00445611) (-34.4444,0.00395665) (-33.3333,0.00352877) (-32.2222,0.00194977) (-31.1111,0.00175859) (-30.0,0.000885302) (-28.8889,0.000816258) (-27.7778,0.000760293) (-26.6667,0.000366244) (-25.5556,0.00035818) (-24.4444,0.000357531) (-23.3333,0.000182732) (-22.2222,0.000197558) (-21.1111,0.000220695) (-20.0,0.000137114) (-18.8889,0.000170116) (-17.7778,0.000129374) (-16.6667,0.000176599) (-15.5556,0.000245044) (-14.4444,0.000236926) (-13.3333,0.000350877) (-12.2222,0.000518728) (-11.1111,0.000596664) (-10.0,0.000912674) (-8.88889,0.00114955) (-7.77778,0.00177438) (-6.66667,0.00268749) (-5.55556,0.00360516) (-4.44444,0.00535865) (-3.33333,0.00776109) (-2.22222,0.0104364) (-1.11111,0.0144698) (0.0,0.0190303) };
\addlegendentry{\onebits, $b\!=\!16$};
\addplot coordinates { (-37.5,4.1035e-05) (-36.25,3.24931e-05) (-35.0,3.00581e-05) (-33.75,2.89594e-05) (-32.5,2.79144e-05) (-31.25,2.77469e-05) (-30.0,2.83969e-05) (-28.75,2.93913e-05) (-27.5,6.83519e-05) (-26.25,7.31181e-05) (-25.0,7.97856e-05) (-24.375,8.40081e-05) (-23.75,8.82e-05) (-23.125,9.33725e-05) (-22.5,0.000100269) (-21.875,0.000107761) (-21.25,0.000116295) (-20.625,0.000125629) (-20.0,0.000137941) (-19.375,0.000149408) (-18.75,0.000165076) (-18.125,0.000182319) (-17.5,0.000201579) (-16.875,0.00022491) (-16.25,0.000252497) (-15.625,0.000284001) (-15.0,0.000546439) (-14.375,0.000612853) (-13.75,0.000686725) (-13.125,0.000778369) (-12.5,0.000885126) (-11.875,0.00101131) (-11.25,0.00116089) (-10.625,0.00133802) (-10.0,0.00154758) (-9.375,0.00179992) (-8.75,0.00210319) (-8.125,0.00247025) (-7.5,0.00290858) (-6.875,0.00343521) (-6.25,0.0040625) (-5.625,0.00482532) (-5.0,0.00572406) (-4.375,0.00679704) (-3.75,0.00805673) (-3.125,0.00953381) (-2.5,0.0123804) (-1.875,0.0143338) (-1.25,0.0165366) (-0.625,0.0190109) (0.0,0.0217528) };
\addlegendentry{2-bit, $b\!=\!16$};

    \end{axis}
  \end{tikzpicture}
  \vspace*{-0.2cm}
  \caption{Ribbon width $w=32$, regular (dense) coefficient vectors.}
  \end{subfigure}
  \begin{subfigure}[b]{\columnwidth}
  \centering
  \vspace*{0.3cm}
  \begin{tikzpicture}
    \begin{axis}[xmin=-10,legend pos=north west]
\addplot coordinates { (-6.25,0.00429232) (-6.09375,0.00381947) (-5.9375,0.00339302) (-5.78125,0.00300424) (-5.625,0.00266436) (-5.46875,0.00235733) (-5.3125,0.00208554) (-5.15625,0.00183385) (-5.0,0.00161411) (-4.84375,0.00141806) (-4.6875,0.00124807) (-4.53125,0.0010922) (-4.375,0.000955052) (-4.21875,0.000836561) (-4.0625,0.000732209) (-3.90625,0.000640694) (-3.75,0.000555828) (-3.59375,0.000484521) (-3.4375,0.000420458) (-3.28125,0.000368566) (-3.125,0.000324881) (-2.96875,0.000284609) (-2.8125,0.00025592) (-2.65625,0.00023622) (-2.5,0.000223648) (-2.34375,0.000221175) (-2.1875,0.000229477) (-2.03125,0.000249369) (-1.875,0.000285812) (-1.71875,0.000340032) (-1.5625,0.000416757) (-1.40625,0.000522986) (-1.25,0.000666645) (-1.09375,0.000850683) (-0.9375,0.00108763) (-0.78125,0.00138618) (-0.625,0.00175433) (-0.46875,0.00220116) (-0.3125,0.00273636) (-0.15625,0.00336643) (0.0,0.00409127) };
\addlegendentry{plain, $b\!=\!1024$};
\addplot coordinates { (-4.0404,0.0124013) (-3.93939,0.0112651) (-3.83838,0.00981269) (-3.73737,0.00882491) (-3.63636,0.00790681) (-3.53535,0.00705888) (-3.43434,0.00627766) (-3.33333,0.00555485) (-3.23232,0.00490196) (-3.13131,0.00430834) (-3.0303,0.00377295) (-2.92929,0.00329542) (-2.82828,0.00286634) (-2.72727,0.00251073) (-2.62626,0.00219318) (-2.52525,0.0018039) (-2.42424,0.00158761) (-2.32323,0.00140649) (-2.22222,0.00126579) (-2.12121,0.00115038) (-2.0202,0.00107759) (-1.91919,0.00102424) (-1.81818,0.00100167) (-1.71717,0.000998783) (-1.61616,0.00101413) (-1.51515,0.00104999) (-1.41414,0.00110547) (-1.31313,0.00115102) (-1.21212,0.00124487) (-1.11111,0.00135528) (-1.0101,0.00148342) (-0.909091,0.00163466) (-0.808081,0.00180911) (-0.707071,0.00200963) (-0.606061,0.00223447) (-0.50505,0.00249055) (-0.40404,0.00277472) (-0.30303,0.00309821) (-0.20202,0.0034575) (-0.10101,0.00385502) (0.0,0.00427077) };
\addlegendentry{\onebits, $b\!=\!1024$};
\addplot coordinates { (-6.25,0.0566915) (-6.09375,0.0533568) (-5.9375,0.0507542) (-5.78125,0.048237) (-5.625,0.0458036) (-5.46875,0.0434054) (-5.3125,0.0405421) (-5.15625,0.0383077) (-5.0,0.0361828) (-4.84375,0.0341446) (-4.6875,0.0321788) (-4.53125,0.0297507) (-4.375,0.0279327) (-4.21875,0.0262088) (-4.0625,0.0245426) (-3.90625,0.0229695) (-3.75,0.0210032) (-3.59375,0.0195945) (-3.4375,0.0182466) (-3.28125,0.0169671) (-3.125,0.0157747) (-2.96875,0.0142889) (-2.8125,0.0132834) (-2.65625,0.0123205) (-2.5,0.0114216) (-2.34375,0.0106137) (-2.1875,0.00961201) (-2.03125,0.00895035) (-1.875,0.00835244) (-1.71875,0.00782652) (-1.5625,0.007388) (-1.40625,0.00684765) (-1.25,0.00656039) (-1.09375,0.00636441) (-0.9375,0.00623395) (-0.78125,0.0061988) (-0.625,0.00612915) (-0.46875,0.00627439) (-0.3125,0.00651205) (-0.15625,0.0068456) (0.0,0.00726905) };
\addlegendentry{2-bit, $b\!=\!1024$};
\addplot coordinates { (-9.375,5.51687e-06) (-9.0625,4.24375e-06) (-8.75,3.2275e-06) (-8.4375,2.67375e-06) (-8.125,2.18188e-06) (-7.8125,1.65562e-06) (-7.5,1.40438e-06) (-7.1875,1.15e-06) (-6.875,9.025e-07) (-6.5625,7.63125e-07) (-6.25,5.775e-07) (-6.09375,5.4375e-07) (-5.9375,5.08125e-07) (-5.78125,5.01875e-07) (-5.625,5.23125e-07) (-5.46875,5.025e-07) (-5.3125,5.2875e-07) (-5.15625,5.80625e-07) (-5.0,6.375e-07) (-4.84375,6.9125e-07) (-4.6875,8.275e-07) (-4.53125,9.75e-07) (-4.375,1.23938e-06) (-4.21875,1.60063e-06) (-4.0625,1.99063e-06) (-3.90625,2.69687e-06) (-3.75,3.73875e-06) (-3.59375,5.24438e-06) (-3.4375,7.26062e-06) (-3.28125,9.93625e-06) (-3.125,1.36256e-05) (-2.96875,1.8985e-05) (-2.8125,2.65244e-05) (-2.65625,3.70406e-05) (-2.5,5.15406e-05) (-2.34375,7.151e-05) (-2.1875,9.84619e-05) (-2.03125,0.000135438) (-1.875,0.000185694) (-1.71875,0.000252326) (-1.5625,0.000340968) (-1.40625,0.000457403) (-1.25,0.000608494) (-1.09375,0.000802327) (-0.9375,0.00104632) (-0.78125,0.00134913) (-0.625,0.00172183) (-0.46875,0.00217315) (-0.3125,0.00271256) (-0.15625,0.00334535) (0.0,0.00407416) };
\addlegendentry{plain, $b\!=\!512$};
\addplot coordinates { (-7.84314,0.0103954) (-7.64706,0.00893144) (-7.45098,0.00761546) (-7.2549,0.00643461) (-7.05882,0.00537885) (-6.86274,0.00417985) (-6.66667,0.00340992) (-6.47059,0.00274492) (-6.27451,0.00218223) (-6.07843,0.00170193) (-5.88235,0.00130428) (-5.68627,0.000979891) (-5.4902,0.000724812) (-5.29412,0.000521229) (-5.09804,0.000365991) (-4.90196,0.000253331) (-4.70588,0.000174774) (-4.5098,0.000121413) (-4.31373,8.797e-05) (-4.11765,7.10213e-05) (-3.92157,6.37675e-05) (-3.72549,6.25563e-05) (-3.52941,6.43806e-05) (-3.33333,7.25181e-05) (-3.13725,8.52813e-05) (-2.94118,0.000102008) (-2.7451,0.00012386) (-2.54902,0.000155999) (-2.35294,0.000195652) (-2.15686,0.000252056) (-1.96078,0.000323567) (-1.76471,0.000420184) (-1.56863,0.000546867) (-1.37255,0.000714973) (-1.17647,0.000936604) (-0.980392,0.00122451) (-0.784314,0.00158858) (-0.588235,0.00205393) (-0.392157,0.00263619) (-0.196078,0.00335) (0.0,0.00420935) };
\addlegendentry{\onebits, $b\!=\!512$};
\addplot coordinates { (-6.25,0.00145383) (-6.09375,0.00127478) (-5.9375,0.00110588) (-5.78125,0.00103278) (-5.625,0.000902668) (-5.46875,0.00083588) (-5.3125,0.000728214) (-5.15625,0.000637672) (-5.0,0.000590294) (-4.84375,0.000520367) (-4.6875,0.000484713) (-4.53125,0.000436282) (-4.375,0.000389316) (-4.21875,0.000368744) (-4.0625,0.000339763) (-3.90625,0.000325217) (-3.75,0.000310409) (-3.59375,0.000299263) (-3.4375,0.000297697) (-3.28125,0.000300758) (-3.125,0.000311971) (-2.96875,0.000326493) (-2.8125,0.000351401) (-2.65625,0.000381766) (-2.5,0.000418046) (-2.34375,0.000474294) (-2.1875,0.000538369) (-2.03125,0.000615594) (-1.875,0.000722112) (-1.71875,0.000836681) (-1.5625,0.000998472) (-1.40625,0.00117396) (-1.25,0.00137954) (-1.09375,0.00164847) (-0.9375,0.0019451) (-0.78125,0.00230934) (-0.625,0.00271123) (-0.46875,0.0031817) (-0.3125,0.00374817) (-0.15625,0.00436195) (0.0,0.00507799) };
\addlegendentry{2-bit, $b\!=\!512$};
\addplot coordinates { (-14.8148,0.00682469) (-14.4444,0.00543505) (-14.0741,0.00425002) (-13.7037,0.00325591) (-13.3333,0.00244532) (-12.963,0.00178359) (-12.5926,0.00141394) (-12.2222,0.000994121) (-11.8519,0.000667589) (-11.4815,0.000431073) (-11.1111,0.000266231) (-10.7407,0.000156031) (-10.3704,8.33438e-05) (-10.0,4.18525e-05) (-9.62963,1.91563e-05) (-9.25926,8.05813e-06) (-8.88889,3.34562e-06) (-8.51852,1.53063e-06) (-8.14815,1.05375e-06) (-7.77778,7.9625e-07) (-7.40741,6.56875e-07) (-7.03704,6.11875e-07) (-6.66667,7.8875e-07) (-6.2963,8.74375e-07) (-5.92593,1.18e-06) (-5.55556,1.47625e-06) (-5.18518,2.18438e-06) (-4.81481,3.21438e-06) (-4.44444,5.12875e-06) (-4.07407,8.87188e-06) (-3.7037,1.44525e-05) (-3.33333,2.49962e-05) (-2.96296,4.46062e-05) (-2.59259,8.1315e-05) (-2.22222,0.000149766) (-1.85185,0.000276474) (-1.48148,0.000504379) (-1.11111,0.000904461) (-0.740741,0.00156981) (-0.37037,0.00261817) (0.0,0.00415691) };
\addlegendentry{\onebits, $b\!=\!256$};
\addplot coordinates { (-9.375,9.25e-07) (-9.0625,8.00625e-07) (-8.75,6.99375e-07) (-8.4375,6.225e-07) (-8.125,5.325e-07) (-7.8125,5.075e-07) (-7.5,5.9375e-07) (-7.1875,4.7875e-07) (-6.875,5.85625e-07) (-6.5625,7.175e-07) (-6.25,6.83125e-07) (-6.09375,7.99375e-07) (-5.9375,8.9125e-07) (-5.78125,1.02e-06) (-5.625,1.23e-06) (-5.46875,1.61125e-06) (-5.3125,1.68313e-06) (-5.15625,2.06e-06) (-5.0,2.675e-06) (-4.84375,3.52375e-06) (-4.6875,4.26938e-06) (-4.53125,4.65625e-06) (-4.375,7.00063e-06) (-4.21875,8.26625e-06) (-4.0625,1.02619e-05) (-3.90625,1.24838e-05) (-3.75,1.6145e-05) (-3.59375,2.037e-05) (-3.4375,2.66781e-05) (-3.28125,3.43556e-05) (-3.125,4.48619e-05) (-2.96875,5.71231e-05) (-2.8125,7.41906e-05) (-2.65625,9.71144e-05) (-2.5,0.000123464) (-2.34375,0.000160066) (-2.1875,0.000207111) (-2.03125,0.000264326) (-1.875,0.000340231) (-1.71875,0.000438931) (-1.5625,0.000559761) (-1.40625,0.000705899) (-1.25,0.000894432) (-1.09375,0.00112737) (-0.9375,0.0014022) (-0.78125,0.0017376) (-0.625,0.002146) (-0.46875,0.0026084) (-0.3125,0.00316448) (-0.15625,0.00380739) (0.0,0.00453439) };
\addlegendentry{2-bit, $b\!=\!256$};
    \end{axis}
  \end{tikzpicture}
  \vspace*{-0.2cm}
  \caption{Ribbon width $w=128$, regular (dense) coefficient vectors.}
  \end{subfigure}
  \begin{subfigure}[b]{\columnwidth}
  \centering
  \vspace*{0.3cm}
  \begin{tikzpicture}
    \begin{axis}[xmin=-20,legend pos=north west,minor tick num=1]
\addplot coordinates { (-12.5,0.0360624) (-12.1875,0.0334285) (-11.875,0.0309274) (-11.5625,0.0286042) (-11.25,0.0264118) (-10.9375,0.0243555) (-10.625,0.0224514) (-10.3125,0.0206706) (-10.0,0.0190135) (-9.6875,0.0174812) (-9.375,0.0160566) (-9.0625,0.0147263) (-8.75,0.0134989) (-8.4375,0.0123841) (-8.125,0.0113521) (-7.8125,0.0104081) (-7.5,0.00954375) (-7.1875,0.00875755) (-6.875,0.00805252) (-6.5625,0.00741493) (-6.25,0.00684671) (-5.9375,0.00634476) (-5.625,0.00590949) (-5.3125,0.00554159) (-5.0,0.00524079) (-4.6875,0.00500926) (-4.375,0.00485287) (-4.0625,0.00476782) (-3.75,0.00475924) (-3.4375,0.00483434) (-3.125,0.00499607) (-2.8125,0.00525153) (-2.5,0.00560306) (-2.1875,0.00606621) (-1.875,0.00664329) (-1.5625,0.00733856) (-1.25,0.00816003) (-0.9375,0.0091165) (-0.625,0.0102072) (-0.3125,0.0114314) (0.0,0.0127925) };
\addlegendentry{plain, $b\!=\!256$};
\addplot coordinates { (-7.84314,0.0758595) (-7.64706,0.0561849) (-7.45098,0.0550221) (-7.2549,0.0538545) (-7.05882,0.052708) (-6.86274,0.0515598) (-6.66667,0.050438) (-6.47059,0.0493184) (-6.27451,0.0481869) (-6.07843,0.0323558) (-5.88235,0.0316566) (-5.68627,0.0309376) (-5.4902,0.0302542) (-5.29412,0.0295729) (-5.09804,0.0289296) (-4.90196,0.0283079) (-4.70588,0.0277123) (-4.5098,0.0167738) (-4.31373,0.0165225) (-4.11765,0.0163115) (-3.92157,0.0161231) (-3.72549,0.0159535) (-3.52941,0.0158161) (-3.33333,0.015698) (-3.13725,0.0156308) (-2.94118,0.00960843) (-2.7451,0.00979105) (-2.54902,0.0100009) (-2.35294,0.0102436) (-2.15686,0.0105324) (-1.96078,0.0108576) (-1.76471,0.0112186) (-1.56863,0.0116249) (-1.37255,0.00917441) (-1.17647,0.00972912) (-0.980392,0.0103429) (-0.784314,0.0109982) (-0.588235,0.0117034) (-0.392157,0.012459) (-0.196078,0.0132585) (0.0,0.0127924) };
\addlegendentry{\onebits, $b\!=\!256$};
\addplot coordinates { (-12.5,0.123777) (-12.1875,0.121639) (-11.875,0.116322) (-11.5625,0.111305) (-11.25,0.106545) (-10.9375,0.101992) (-10.625,0.0977083) (-10.3125,0.0935827) (-10.0,0.0896568) (-9.6875,0.088211) (-9.375,0.0843772) (-9.0625,0.0645807) (-8.75,0.0611436) (-8.4375,0.0579156) (-8.125,0.0549001) (-7.8125,0.052063) (-7.5,0.0493858) (-7.1875,0.0481318) (-6.875,0.0456637) (-6.5625,0.0433146) (-6.25,0.0410989) (-5.9375,0.0390488) (-5.625,0.0371011) (-5.3125,0.0352994) (-5.0,0.0251338) (-4.6875,0.0237857) (-4.375,0.0225927) (-4.0625,0.0215656) (-3.75,0.0206575) (-3.4375,0.0198946) (-3.125,0.0192679) (-2.8125,0.0187718) (-2.5,0.0188228) (-2.1875,0.0185672) (-1.875,0.0184484) (-1.5625,0.01847) (-1.25,0.0186586) (-0.9375,0.0157962) (-0.625,0.0164268) (-0.3125,0.0172052) (0.0,0.0182022) };
\addlegendentry{2-bit, $b\!=\!256$};
\addplot coordinates { (-18.75,0.00278973) (-18.125,0.00250117) (-17.5,0.00224098) (-16.875,0.0020128) (-16.25,0.00181191) (-15.625,0.001632) (-15.0,0.00147525) (-14.375,0.00133466) (-13.75,0.00120985) (-13.125,0.00109932) (-12.5,0.00100147) (-12.1875,0.000958046) (-11.875,0.000915745) (-11.5625,0.000878519) (-11.25,0.000841176) (-10.9375,0.000807127) (-10.625,0.000775578) (-10.3125,0.00074843) (-10.0,0.000722181) (-9.6875,0.000699269) (-9.375,0.000680975) (-9.0625,0.000664101) (-8.75,0.000651436) (-8.4375,0.000644704) (-8.125,0.000641213) (-7.8125,0.000644188) (-7.5,0.000653041) (-7.1875,0.000670534) (-6.875,0.000696124) (-6.5625,0.000732291) (-6.25,0.000782681) (-5.9375,0.000847519) (-5.625,0.000931423) (-5.3125,0.00103437) (-5.0,0.00116475) (-4.6875,0.00132677) (-4.375,0.00152464) (-4.0625,0.00176217) (-3.75,0.00204919) (-3.4375,0.00239109) (-3.125,0.0027972) (-2.8125,0.00327694) (-2.5,0.00383262) (-2.1875,0.00448323) (-1.875,0.00522795) (-1.5625,0.00607758) (-1.25,0.0070375) (-0.9375,0.00811631) (-0.625,0.00932256) (-0.3125,0.0106515) (0.0,0.012105) };
\addlegendentry{plain, $b\!=\!128$};
\addplot coordinates { (-14.8148,0.0764426) (-14.4444,0.0738007) (-14.0741,0.0712267) (-13.7037,0.0687178) (-13.3333,0.0662908) (-12.963,0.038623) (-12.5926,0.0371323) (-12.2222,0.0356804) (-11.8519,0.0342898) (-11.4815,0.0329481) (-11.1111,0.0316583) (-10.7407,0.030438) (-10.3704,0.0292665) (-10.0,0.0126243) (-9.62963,0.012152) (-9.25926,0.0117085) (-8.88889,0.0112906) (-8.51852,0.0109159) (-8.14815,0.0105767) (-7.77778,0.0102743) (-7.40741,0.00999701) (-7.03704,0.00314588) (-6.66667,0.00321417) (-6.2963,0.00332473) (-5.92593,0.00345909) (-5.55556,0.00363115) (-5.18518,0.00384803) (-4.81481,0.00411389) (-4.44444,0.00443247) (-4.07407,0.00480772) (-3.7037,0.00297119) (-3.33333,0.00347906) (-2.96296,0.00407544) (-2.59259,0.00478023) (-2.22222,0.00559954) (-1.85185,0.00654057) (-1.48148,0.00761352) (-1.11111,0.00883904) (-0.740741,0.00886103) (-0.37037,0.0103968) (0.0,0.0121048) };
\addlegendentry{\onebits, $b\!=\!128$};
\addplot coordinates { (-12.5,0.00641042) (-12.1875,0.00611223) (-11.875,0.00581966) (-11.5625,0.00555844) (-11.25,0.00529898) (-10.9375,0.00506945) (-10.625,0.00485055) (-10.3125,0.00465007) (-10.0,0.0044648) (-9.6875,0.0043063) (-9.375,0.00415055) (-9.0625,0.00402345) (-8.75,0.00390371) (-8.4375,0.00380242) (-8.125,0.00372375) (-7.8125,0.00366831) (-7.5,0.0032872) (-7.1875,0.00330515) (-6.875,0.00332761) (-6.5625,0.00338019) (-6.25,0.00346225) (-5.9375,0.00356333) (-5.625,0.00261241) (-5.3125,0.00268839) (-5.0,0.00280848) (-4.6875,0.00296293) (-4.375,0.00316017) (-4.0625,0.00340902) (-3.75,0.00370591) (-3.4375,0.00406854) (-3.125,0.004491) (-2.8125,0.00464008) (-2.5,0.00523221) (-2.1875,0.00591506) (-1.875,0.00669336) (-1.5625,0.00756671) (-1.25,0.00854027) (-0.9375,0.00962701) (-0.625,0.0108179) (-0.3125,0.0121256) (0.0,0.0135478) };
\addlegendentry{2-bit, $b\!=\!128$};
\addplot coordinates { (-26.6667,0.0528637) (-26.0,0.0497528) (-25.3333,0.0468506) (-24.6667,0.0441817) (-24.0,0.041705) (-23.3333,0.0393966) (-22.6667,0.0372115) (-22.0,0.0351927) (-21.3333,0.0144142) (-20.6667,0.0132182) (-20.0,0.0121589) (-19.3333,0.0111824) (-18.6667,0.0103034) (-18.0,0.00952124) (-17.3333,0.00882506) (-16.6667,0.00820777) (-16.0,0.00765325) (-15.3333,0.00355073) (-14.6667,0.00320712) (-14.0,0.00290784) (-13.3333,0.00264188) (-12.6667,0.00242257) (-12.0,0.0022381) (-11.3333,0.00209049) (-10.6667,0.00198332) (-10.0,0.00191388) (-9.33333,0.00189165) (-8.66667,0.00175312) (-8.0,0.00164598) (-7.33333,0.00159684) (-6.66667,0.00161844) (-6.0,0.00173224) (-5.33333,0.00195977) (-4.66667,0.00234224) (-4.0,0.00291038) (-3.33333,0.0037175) (-2.66667,0.00360863) (-2.0,0.0050003) (-1.33333,0.00685694) (-0.666667,0.00923899) (0.0,0.0121862) };
\addlegendentry{\onebits, $b\!=\!64$};
\addplot coordinates { (-18.75,0.00655782) (-18.125,0.00612713) (-17.5,0.00571341) (-16.875,0.00530629) (-16.25,0.00491653) (-15.625,0.00453825) (-15.0,0.00417994) (-14.375,0.00383646) (-13.75,0.00351061) (-13.125,0.00397983) (-12.5,0.00357791) (-12.1875,0.00339439) (-11.875,0.00321697) (-11.5625,0.00304659) (-11.25,0.00289064) (-10.9375,0.00273669) (-10.625,0.00259549) (-10.3125,0.00246236) (-10.0,0.00234113) (-9.6875,0.00222523) (-9.375,0.00212041) (-9.0625,0.00202522) (-8.75,0.00194374) (-8.4375,0.00186916) (-8.125,0.0018113) (-7.8125,0.00176301) (-7.5,0.00173551) (-7.1875,0.0017131) (-6.875,0.0017128) (-6.5625,0.00109362) (-6.25,0.00115631) (-5.9375,0.00124049) (-5.625,0.00134297) (-5.3125,0.00147114) (-5.0,0.00162334) (-4.6875,0.00181209) (-4.375,0.00203321) (-4.0625,0.00229518) (-3.75,0.00260856) (-3.4375,0.00458465) (-3.125,0.0047988) (-2.8125,0.00511393) (-2.5,0.00550306) (-2.1875,0.00599515) (-1.875,0.00659439) (-1.5625,0.00730775) (-1.25,0.00814516) (-0.9375,0.00910817) (-0.625,0.010194) (-0.3125,0.0114155) (0.0,0.0127833) };
\addlegendentry{2-bit, $b\!=\!64$};

    \end{axis}
  \end{tikzpicture}
  \vspace*{-0.2cm}
  \caption{Ribbon width $w=64$, sparse coefficient vectors with 8 out of 64 positions occupied.}
  \end{subfigure}
  \caption{\label{fig:furtherEpsilon}Fraction of empty slots for various configurations of BuRR, depending on the overloading factor $\eps$.}
\end{figure}

\begin{table}[b]
\def\num#1{\ifnum#1<100\phantom{0}#1\else#1\fi}

\caption{Experimental performance comparisons.  Overhead, construction and query
  times (positive and negative queries) for various AMQs. Tested configurations:
  $n=10^6$ keys, $n=10^8$ keys, both sequential, as well as 1280 AMQs with
  $n=10^7$ keys each (total: $1.28\times10^{10}$ keys), constructed and
  queried in parallel using 64 threads, with each query operating on a randomly
  chosen AMQ.}
\label{tab:performance}
\centering
 \begin{tabular}{rc|rrr|rrr|rrr}
   \toprule
                & Space  & \multicolumn{3}{c|}{ns/key, $n=10^6$} & \multicolumn{3}{c|}{ns/key, $n=10^8$} & \multicolumn{3}{c}{parallel, $n\!=\!10^7$} \\
  Configuration & ovr \% & con & pos & neg & con & pos & neg & con & pos & neg \\
  \midrule
    \multicolumn{11}{c}{$\downarrow\,\,$ False positive rate around \textbf{1\,\%}, ribbons using $r=7$ $\,\,\downarrow$} \\
   \midrule
Blocked Bloom \cite{LNKB:Bloom:2019}                  &    52.0 &    3 &   3 &   3 &   17 &  24 &  24 &   50 &  116 & 101 \\
Blocked Bloom \cite{Dillinger:RocksDBBloom}           &    49.8 &    7 &   4 &   4 &   50 &  26 &  26 &   78 &  109 &  85 \\
Blocked Bloom \cite{Dillinger:RocksDBBloom} $k=2$     &    45.0 &    9 &   7 &   7 &   57 &  43 &  43 &  110 &  149 & 194 \\

Cuckoo12 \cite{FAK:CuckooFilterBetter:2013} $\dagger$ &    46.3 &   29 &  10 &   7 &  118 &  56 &  52 &  239 &  162 & 282 \\
Cuckoo12 \cite{FAK:CuckooFilterBetter:2013}           &    40.4 &   35 &  12 &   7 &  166 &  58 &  51 &  288 &  180 & 271 \\
Morton \cite{BJ:MortonFilters:2020}                   &    40.6 &   32 &  25 &  22 &   64 &  96 &  87 &  130 &  182 & 203 \\ 
Xor \cite{GL:XorFilters:2020} $r=7$ $\ddagger$        &    23.0 &   91 &   8 &   8 &  169 &  56 &  56 &  644 &  333 & 348 \\
Xor \cite{GL:XorFilters:2020} $r=8$                   &    23.0 &   91 &   5 &   5 &  159 &  41 &  41 &  586 &  386 & 392 \\
Xor+ \cite{GL:XorFilters:2020} $r=8$                  &    14.4 &   94 &  14 &  15 &  209 &  86 &  85 &  853 &  372 & 475 \\
XorFuse \cite{DW:DensePeelable:2019} $r=8$            &   16;14 &   89 &   6 &   6 &  215 &  43 &  44 &  453 &  532 & 534 \\
 LMSS \cite{LMSS:Efficient_Erasure:2001} $D\!=\!12, c\!=\!0.91, r\!=\!8$ &  11.1 &  421 &  28 &  28 &  779 & 134 & 134 & \multicolumn{3}{c}{not tested} \\
LMSS \cite{LMSS:Efficient_Erasure:2001} $D\!=\!150, c\!=\!0.99, r\!=\!8$ &   1.0 &  464 &  34 &  34 &  877 & 152 & 152 & \multicolumn{3}{c}{not tested} \\
   Coupled \cite{W:SpatialCoupling:2021} $k=4, r=8$   &     8;4 &  104 &   9 &   9 &  229 &  59 &  59 &  546 &  579 & 578 \\
   Coupled \cite{W:SpatialCoupling:2021} $k=7, r=8$   &     6;2 &  169 &  12 &  12  &  331 &  91 &  91&  813 & 1337 & 1342 \\
Quotient Filter \cite{BFJKKMMSS:QuotientFilters:2012} &    81.9 &   69 & 432 & 272 &  114 & 225 & 169 &  133 &  385 & 308 \\
Counting Quotient Filter \cite{Pandey:CQF:2017}       &   67;55 &   60 &  45 &  31 &  172 & 153 & 113 &  183 &  307 & 252 \\

       Standard Ribbon $w=64$ &   14;20 &   32 &  16 &  20 &   70 &  78 &  66 &  324 &  234 & 194 \\
      Standard Ribbon $w=128$ &     6;8 &   69 &  24 &  25 &  121 & 140 &  87 &  464 &  296 & 206 \\

         Homog. Ribbon $w=16$ &    52.2 &   19 &  14 &  20 &   42 &  73 &  61 &   69 &  148 & 128 \\
         Homog. Ribbon $w=32$ &    20.7 &   20 &  13 &  19 &   50 &  73 &  60 &  105 &  147 & 168 \\
         Homog. Ribbon $w=64$ &     9.9 &   28 &  14 &  19 &   67 &  75 &  63 &  155 &  164 & 170 \\
        Homog. Ribbon $w=128$ &     4.9 &   58 &  21 &  23 &  118 & 135 &  85 &  306 &  292 & 208 \\

\textbf{Bu$^1$RR} $w=32\dagger$ &    10.3 &   40 &  21 &  26 &   94 & 125 &  88 &  163 &  275 & 286 \\
       \textbf{Bu$^1$RR} $w=32$ &     2.4 &   62 &  21 &  26 &  121 & 123 &  88 &  174 &  261 & 239 \\
     \textbf{BuRR} plain $w=32$ &     1.4 &   76 &  20 &  26 &   81 &  82 &  79 &  151 &  247 & 210 \\
     \textbf{BuRR} 2-bit $w=32$ &     1.3 &   77 &  19 &  26 &   82 &  82 &  80 &  152 &  233 & 245 \\
  \textbf{BuRR} \onebits $w=32$ &    0.82 &   80 &  29 &  34 &   84 &  88 &  88 &  158 &  240 & 231 \\

       \textbf{Bu$^1$RR} $w=64$ &    0.62 &  121 &  21 &  26 &  188 & 128 &  90 &  197 &  292 & 300 \\
     \textbf{BuRR} plain $w=64$ &    0.48 &  109 &  18 &  24 &  115 &  82 &  74 &  182 &  229 & 213 \\
     \textbf{BuRR} 2-bit $w=64$ &    0.25 &  110 &  19 &  24 &  115 &  82 &  74 &  190 &  215 & 215 \\
  \textbf{BuRR} \onebits $w=64$ &    0.21 &  110 &  27 &  33 &  115 &  86 &  84 &  189 &  238 & 228 \\

      \textbf{Bu$^1$RR} $w=128$ &    0.31 &  332 &  29 &  30 &  442 & 147 & 100 &  427 &  369 & 285 \\
    \textbf{BuRR} plain $w=128$ &    0.18 &  209 &  27 &  27 &  214 & 141 &  88 &  319 &  317 & 226 \\
    \textbf{BuRR} 2-bit $w=128$ &    0.10 &  186 &  27 &  27 &  191 & 140 &  88 &  294 &  313 & 211 \\
 \textbf{BuRR} \onebits $w=128$ &    0.06 &  208 &  34 &  37 &  214 & 142 &  92 &  304 &  319 & 235 \\
 \midrule
    \multicolumn{11}{l}{$\dagger$ Larger space allocated to improve construction time.} \\
    \multicolumn{11}{l}{$\ddagger$ Potentially unfavorable bit alignment.} \\
    \multicolumn{11}{l}{$;$ Standard Ribbon, XorFuse, Coupled, and CQF space overhead depend on $n$.} \\
   \bottomrule
 \end{tabular}
\end{table}
\begin{table}
 \caption{Experimental performance comparisons (continued from \cref{tab:performance}).}
 \label{tab:performance2}
 \centering
  \begin{tabular}{rc|rll|rll|rll}
  \toprule
                & Space  & \multicolumn{3}{c|}{ns/key, $n=10^6$} & \multicolumn{3}{c|}{ns/key, $n=10^8$} & \multicolumn{3}{c}{parallel, $n=10^7$} \\
  Configuration & ovr \% & con & pos & neg & con & pos & neg & con & pos & neg \\
    \midrule
    \multicolumn{11}{c}{$\downarrow\,\,$ False positive rate around \textbf{10\,\%}, ribbons using $r=3$ $\,\,\downarrow$} \\
    \midrule
Xor \cite{GL:XorFilters:2020} $r=3\ddagger$ &  23.0 &  91 &   6 &   6 & 169 &  51 &  51 & 633 & 292 & 270 \\
   Standard Ribbon $w=64$ &   14;20 &   27 &   9 &   9 &   65 &  54 &  51 &  326 &  118 & 138 \\
  Standard Ribbon $w=128$ &     6;8 &   62 &  14 &  14 &  114 &  67 &  67 &  467 &  139 & 191 \\
     Homog. Ribbon $w=16$ &    34.6 &   16 &   9 &   9 &   40 &  58 &  59 &   67 &  141 & 140 \\
     Homog. Ribbon $w=32$ &    16.1 &   17 &   8 &   8 &   49 &  45 &  45 &  101 &  153 & 151 \\
     Homog. Ribbon $w=64$ &     8.0 &   24 &   8 &   8 &   65 &  45 &  45 &  158 &  132 & 145 \\
    Homog. Ribbon $w=128$ &     4.0 &   57 &  13 &  13 &  119 &  65 &  65 &  324 &  162 & 193 \\
       \textbf{Bu$^1$RR} $w=32$ &     2.8 &   60 &  14 &  14 &  119 &  67 &  67 &  175 &  193 & 212 \\
     \textbf{BuRR} plain $w=32$ &     3.2 &   73 &  14 &  21 &   78 &  64 &  64 &  145 &  180 & 170 \\
     \textbf{BuRR} 2-bit $w=32$ &     2.5 &   74 &  14 &  21 &   79 &  64 &  65 &  147 &  227 & 214 \\
  \textbf{BuRR} \onebits $w=32$ &     1.6 &   77 &  23 &  30 &   82 &  67 &  74 &  150 &  202 & 215 \\
       \textbf{Bu$^1$RR} $w=64$ &     0.8 &  118 &  13 &  13 &  187 &  66 &  67 &  195 &  192 & 220 \\
     \textbf{BuRR} plain $w=64$ &     1.1 &  104 &  13 &  20 &  110 &  63 &  62 &  177 &  160 & 172 \\
     \textbf{BuRR} 2-bit $w=64$ &     0.6 &  106 &  13 &  21 &  111 &  64 &  64 &  188 &  174 & 181 \\ 
  \textbf{BuRR} \onebits $w=64$ &     0.4 &  106 &  22 &  29 &  111 &  66 &  71 &  189 &  196 & 221 \\ 

\midrule
    \multicolumn{11}{c}{$\downarrow\,\,$ False positive rate around $\mathbf{2^{-11}\approx 0.05\,\%}$, ribbons using $r=11$ $\,\,\downarrow$} \\
   \midrule
Cuckoo16 \cite{FAK:CuckooFilterBetter:2013}           &    30.1 &   31 &  11 &   7 &  156 &  56 &  50 &  309 &  180 & 294 \\
Cuckoo16 \cite{FAK:CuckooFilterBetter:2013} $\dagger$ &    35.7 &   28 &  10 &   7 &  119 &  56 &  44 &  243 &  188 & 308 \\
CuckooSemiSort                                        &    26.6 &   64 &  15 &  14 &  259 &  79 &  79 &  376 &  264 & 326 \\
Morton \cite{BJ:MortonFilters:2020}                   &    36.8 &   38 &  40 &  35 &   69 & 167 & 156 &  127 &  314 & 305 \\ 
Xor \cite{GL:XorFilters:2020} $r=12$                  &    23.0 &   89 &   8 &   7 &  163 &  57 &  57 &  632 &  415 & 435 \\
Xor+ \cite{GL:XorFilters:2020} $r=11\ddagger$         &    12.8 &   98 &  16 &  16 &  215 & 101 &  99 &  759 &  470 & 494 \\
Quotient Filter \cite{BFJKKMMSS:QuotientFilters:2012} &    93.6 &   71 & 485 & 304 &  109 & 235 & 175 &  138 &  402 & 324 \\

     Standard Ribbon, $w=64$ &   14;20 &   38 &  23 &  21 &   76 & 143 &  69 &  342 &  294 & 205 \\
    Standard Ribbon, $w=128$ &     6;8 &   71 &  33 &  26 &  124 & 158 &  91 &  442 &  336 & 243 \\
       Homog. Ribbon, $w=32$ &    28.5 &   24 &  18 &  20 &   53 &  82 &  63 &  109 &  230 & 179 \\
       Homog. Ribbon, $w=64$ &    12.1 &   32 &  18 &  20 &   70 &  86 &  65 &  165 &  218 & 195 \\
      Homog. Ribbon, $w=128$ &     6.5 &   58 &  30 &  25 &  115 & 155 &  89 &  281 &  333 & 305 \\

     \textbf{Bu$^1$RR}, $w=32$ &     2.4 &   66 &  28 &  27 &  125 & 152 &  92 &  178 &  405 & 277 \\
    \textbf{BuRR} plain $w=32$ &    0.94 &   81 &  28 &  26 &   86 & 145 &  83 &  149 &  327 & 229 \\
    \textbf{BuRR} 2-bit $w=32$ &    0.95 &   82 &  27 &  26 &   87 & 145 &  81 &  150 &  328 & 217 \\
 \textbf{BuRR} \onebits $w=32$ &    0.61 &   85 &  35 &  35 &   90 & 148 &  92 &  158 &  352 & 220 \\
     \textbf{Bu$^1$RR}, $w=64$ &    0.57 &  128 &  28 &  27 &  196 & 152 &  95 &  203 &  400 & 296 \\
    \textbf{BuRR} plain $w=64$ &    0.32 &  116 &  27 &  25 &  121 & 146 &  79 &  175 &  334 & 250 \\
    \textbf{BuRR} 2-bit $w=64$ &    0.17 &  118 &  27 &  25 &  123 & 146 &  77 &  164 &  328 & 209 \\
 \textbf{BuRR} \onebits $w=64$ &    0.14 &  115 &  33 &  34 &  120 & 147 &  90 &  189 &  353 & 239 \\
   \textbf{BuRR} plain $w=128$ &    0.13 &  214 &  35 &  28 &  220 & 159 &  92 &  305 &  364 & 244 \\
   \textbf{BuRR} 2-bit $w=128$ &    0.08 &  202 &  35 &  28 &  208 & 159 &  92 &  297 &  358 & 226 \\
\textbf{BuRR} \onebits $w=128$ &    0.05 &  214 &  42 &  38 &  219 & 160 &  96 &  317 &  389 & 294 \\
 \midrule
    \multicolumn{6}{l}{$\dagger$ Larger space allocated to improve construction time.} \\
    \multicolumn{6}{l}{$\ddagger$ Potentially unfavorable bit alignment.} \\
   \multicolumn{6}{l}{$;$ Standard Ribbon space overhead depends on $n$.} \\
   \bottomrule
 \end{tabular}
\end{table}

\begin{table}
  \caption{
    Selected BuRR configurations for various $r$.  Sparse coefficient 
    vectors used for rows with threshold compression mode marked $^s$.}\label{tab:configs}
  \centering
  \begin{tabular}{rrrllrrr}
    \toprule
        &     &     & thresh & overloading & empty      & metabits & estimated \\
    $r$ & $w$ & $b$ & mode & factor $\eps$ & slots (\%) & /bucket  & overhead (\%) \\
    \midrule

 1 & 128 & 512 & \onebits & -0.0470588 & 0.017477 & 1.3194 & 0.275219 \\
 1 &  64 & 256 &    2-bit &  -0.034375 & 0.442091 &      2 & 1.226810 \\
 1 &  64 & 128 & 2-bit$^s$&   -0.05625 & 0.261241 &      2 & 1.827833 \\
 1 &  32 &  64 &    2-bit &   -0.08125 & 0.681598 &      2 & 3.828044 \\\midrule
 2 & 128 & 512 & \onebits & -0.0470588 & 0.017477 & 1.3194 & 0.146348 \\
 2 &  64 & 128 & \onebits & -0.0888889 & 0.062375 & 1.2511 & 0.551393 \\
 2 &  64 & 128 &    2-bit &   -0.08125 & 0.013288 &      2 & 0.794642 \\
 2 &  64 & 128 & 2-bit$^s$&   -0.05625 & 0.261241 &      2 & 1.044537 \\
 2 &  32 &  64 &    2-bit &   -0.08125 & 0.681598 &      2 & 2.254821 \\\midrule
 4 & 128 & 512 & \onebits & -0.0431373 & 0.008797 & 1.4402 & 0.079125 \\
 4 &  64 & 128 &    2-bit &   -0.08125 & 0.013288 &      2 & 0.403965 \\
 4 &  64 & 128 & 2-bit$^s$&   -0.05625 & 0.261241 &      2 & 0.652889 \\
 4 &  32 &  64 &    2-bit &   -0.08125 & 0.681598 &      2 & 1.468210 \\\midrule
 8 & 128 & 512 & \onebits & -0.0431373 & 0.008797 & 1.4402 & 0.043961 \\
 8 &  64 & 128 &    2-bit &   -0.08125 & 0.013288 &      2 & 0.208626 \\
 8 &  64 & 128 & 2-bit$^s$&   -0.05625 & 0.261241 &      2 & 0.457065 \\
 8 &  32 &  64 &    2-bit &   -0.08125 & 0.681598 &      2 & 1.074904 \\\midrule
16 & 128 & 512 & \onebits & -0.0411765 & 0.007102 & 1.5117 & 0.025557 \\
16 &  64 & 128 &    2-bit &   -0.08125 & 0.013288 &      2 & 0.110957 \\
16 &  64 &  64 & 2-bit$^s$&  -0.065625 & 0.109362 &      2 & 0.304888 \\
16 &  32 &  64 &    plain &   -0.13125 & 0.046154 &      6 & 0.632362 \\
    \bottomrule
  \end{tabular}
\end{table}

\begin{figure}\centering
  \includegraphics[page=3,width=\textwidth]{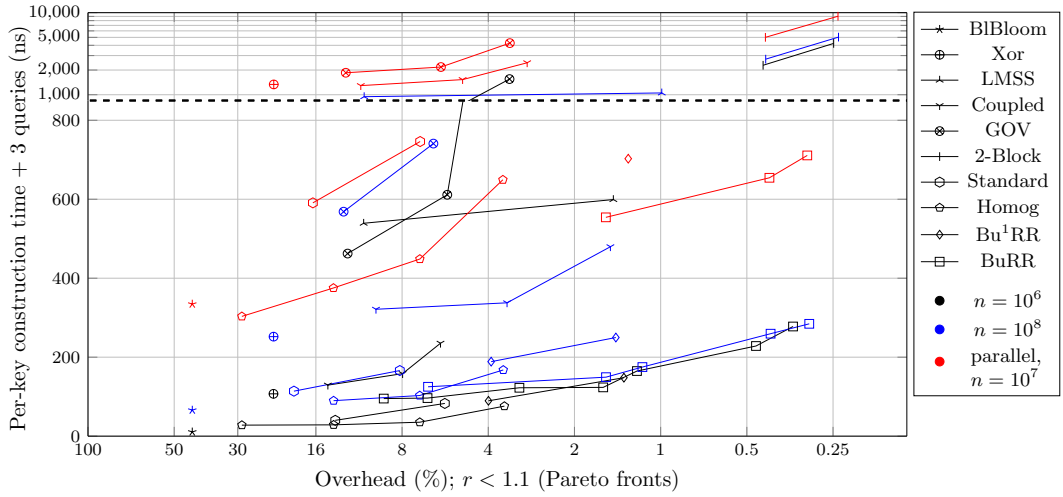}
  \caption{\label{fig:scatter2}Performance--overhead trade-off for false-positive
    rate $>46\,\%$ for different AMQs and different inputs. This large
    false-positive rate is the only one for which we have implementations for
    GOV \cite{GOV:retrieval-Compressed:2020} and 2-block \cite{DW:Retrieval-log-extra-bits:2019}.
    Note that the vertical axis switches to a logarithmic scale above 900\,ns.}
\end{figure}
\begin{figure}\centering
  \includegraphics[page=4,width=\textwidth]{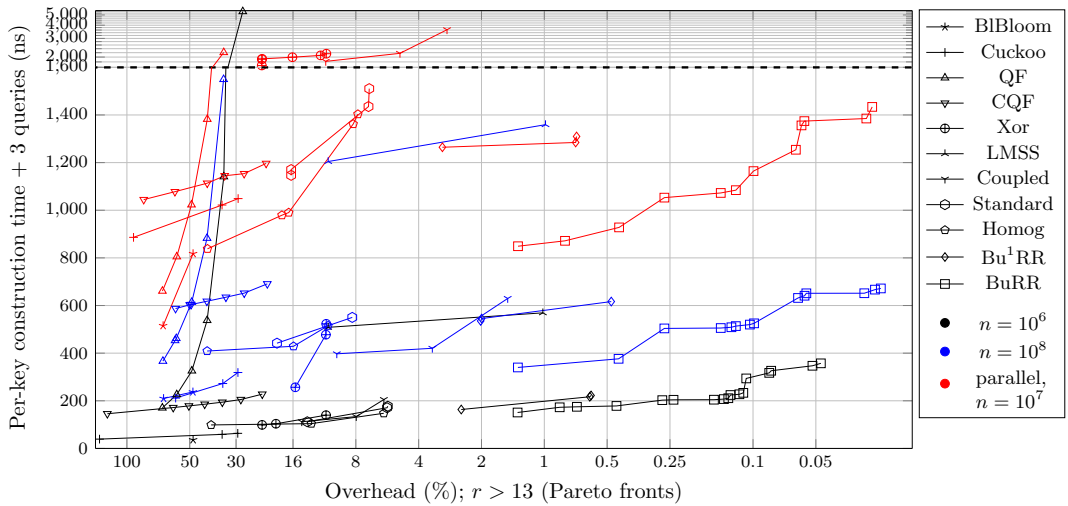}
  \caption{\label{fig:scatter16}Performance--overhead trade-off for
    false-positive rate $<2^{-13} \approx 0.01\,\%$ for
    different AMQs and different inputs.  Logarithmics vertical axis above 1600\,ns.}
\end{figure}
\begin{figure}\centering
  \includegraphics[page=5,width=\textwidth]{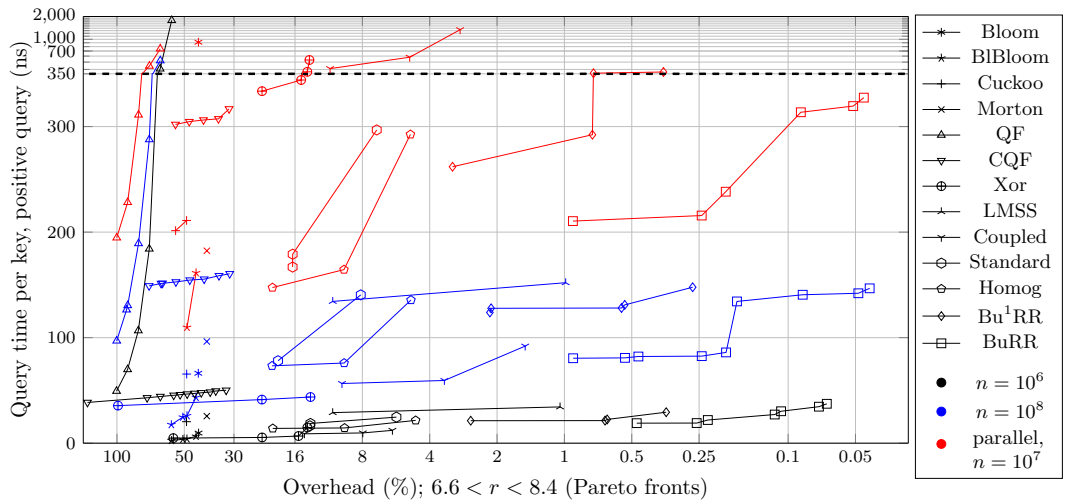}
  \caption{\label{fig:scatterQuery}Query time--overhead trade-off for
    positive queries, false-positive rate between $0.3\,\%$ and $1\,\%$ for different AMQs and
    different inputs. Note that Xor filters have excellent query time
    sequentially where random fetches can be performed in
    parallel but are far from optimal in the parallel setting where
    the total number of memory accesses matters most.  Logarithmic vertical
    axis above 350\,ns.}
\end{figure}
\begin{figure}\centering
  \includegraphics[page=9,width=\textwidth]{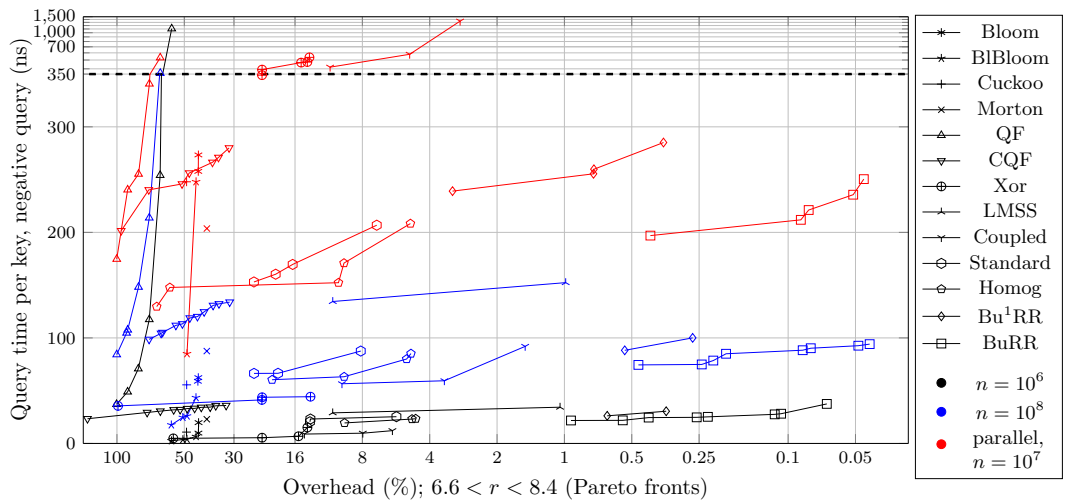}
  \caption{\label{fig:scatterQueryNeg}Query time--overhead trade-off for negative
    queries, false-positive rate between $0.3\,\%$ and $1\,\%$ for different
    AMQs and different inputs. Again, Xor filters perform well sequentially but
    suffer in the parallel case.  Logarithmic vertical axis above 350\,ns.}
\end{figure}
\begin{figure}\centering
  \includegraphics[page=6,width=\textwidth]{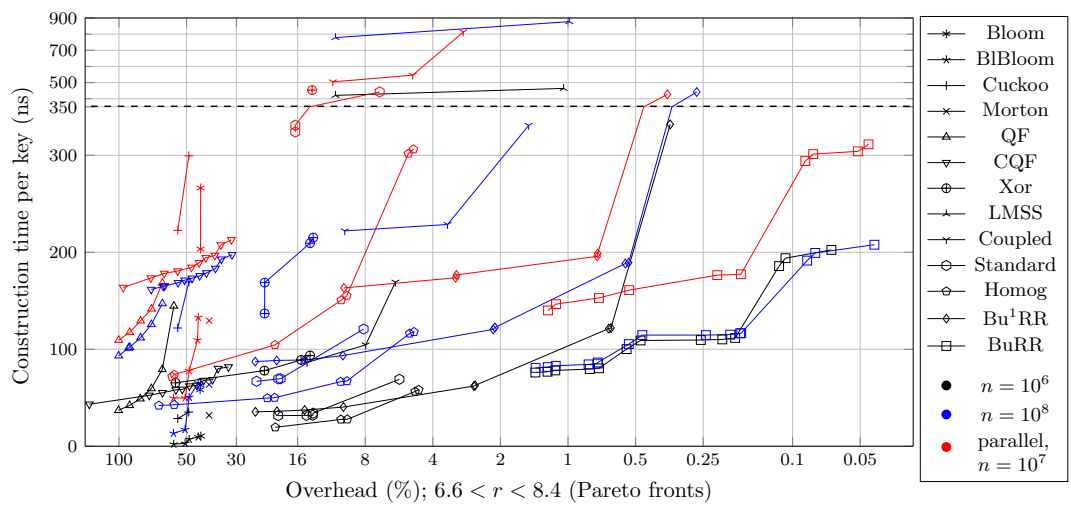}
  \caption{\label{fig:scatterConstruction}Construction time--overhead trade-off for
    false-positive rate between $0.3\,\%$ and $1\,\%$ for different AMQs and
    different inputs.  Compressed vertical axis above 350\,ns.}
\end{figure}

\begin{figure}
  \begin{subfigure}[t]{\linewidth}\centering
    \includegraphics[page=4]{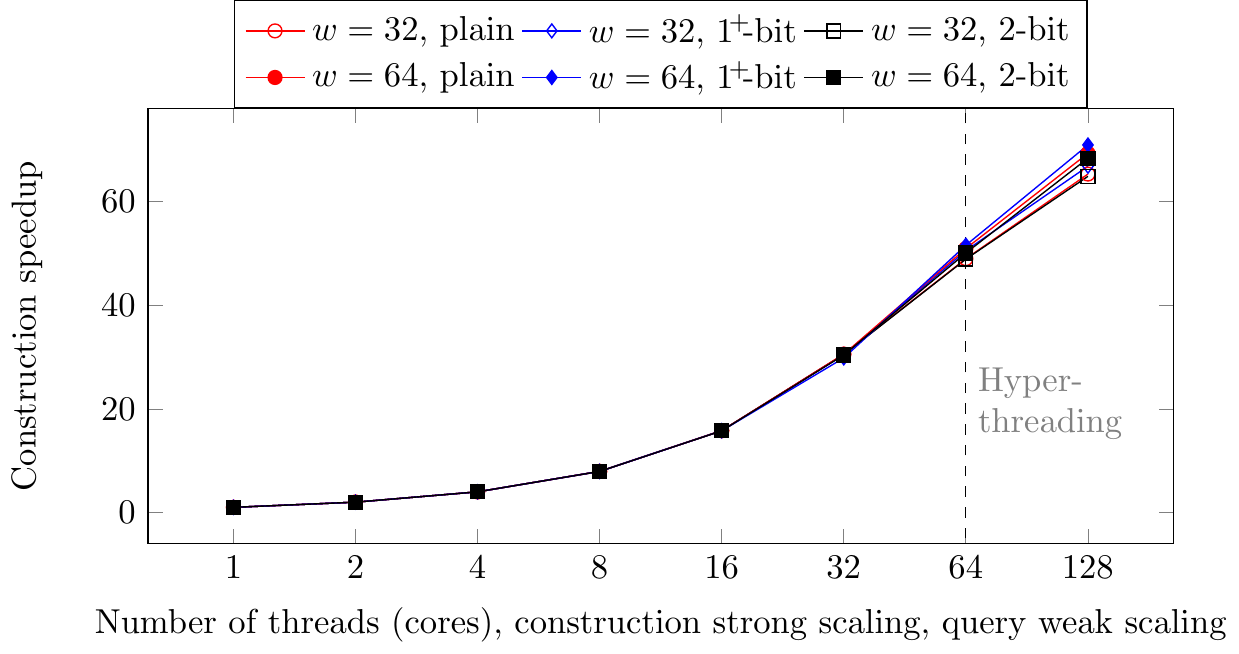}
    \caption{Construction of $10\,000$ filters with $10^6$ keys each
      ($w=32$) or $5\,000$ filters with $2\cdot10^6$ keys each ($w=64$),
      for a total of $10^{10}$ keys; strong scaling.}
  \end{subfigure}
  \begin{subfigure}[t]{\linewidth}\centering
    \vspace*{5mm}
    \includegraphics[page=5]{plot_scaling.pdf}
    \caption{Scaling behavior of positive queries on the filters from
      (a).  Each query accesses a randomly chosen filter.  Tested with
      $10^8$ queries per thread.}
  \end{subfigure}
  \begin{subfigure}[t]{\linewidth}\centering
    \vspace*{5mm}
    \includegraphics[page=6]{plot_scaling.pdf}
    \caption{Scaling behavior of negative queries on the filters from
      (a).  Each query accesses a randomly chosen filter.  Tested with
      $10^8$ queries per thread.}
  \end{subfigure}
  \caption{Scaling experiments for parallel construction and querying of BuRR.\label{fig:scaling}}
\end{figure}

\clearpage


\iftr
\section{Full Experimental Results}\label{app:fullexp}

\Cref{tbl:full1,tbl:full2} below contain the results of our performance
experiments for all tested configurations for the experiments of \cref{s:exp},
first for retrieval-based approaches in \cref{tbl:full1}, and then for AMQ data
structures in \cref{tbl:full2}.
\vfill

\eject
\thispagestyle{empty}
\pdfpagewidth=10in
\pdfpageheight=55.5in

\begin{table}[h]\flushright
    \caption{Performance of retrieval-based approaches\label{tbl:full1}}

\end{table}

\fi 

\end{document}
\endinput